\newtheorem{theorem}{Theorem}[section]
\newtheorem{lemma}[theorem]{Lemma}
\newtheorem{corollary}[theorem]{Corollary}
	\newcommand*\patchAmsMathEnvironmentForLineno[1]{%
	 \expandafter\let\csname old#1\expandafter\endcsname\csname #1\endcsname
	 \expandafter\let\csname oldend#1\expandafter\endcsname\csname end#1\endcsname
	 \renewenvironment{#1}%
	 {\linenomath\csname old#1\endcsname}%
	 {\csname oldend#1\endcsname\endlinenomath}}%
	\newcommand*\patchBothAmsMathEnvironmentsForLineno[1]{%
	 \patchAmsMathEnvironmentForLineno{#1}%
	 \patchAmsMathEnvironmentForLineno{#1*}}%
\def\Torus{\mathbb{T}}
\let\eps\varepsilon
\def\Real{\mathbb{R}}
\def\Z{\mathbb{Z}}
\def\Gam{\Gamma}
\def\EMPH#1{\emph{\textbf{\boldmath #1}}}
\def\Rev{\emph{rev}}		
\def\Head{\emph{head}}		
\def\Tail{\emph{tail}}		
\title{How to Morph Graphs on the Torus\thanks{Research partially supported by NSF grants CCF-1408763, CCF-1614562, DBI-1759807, and CCF-1907612.  Portions of this work were done while the second author was visiting Utrecht University.  No animals were harmed in the making of this paper.}}
\date{}
\author{
    \href{https://cs.slu.edu/~chambers/}{Erin Wolf Chambers}
    \\SLU
    \and
	\href{http://jeffe.cs.illinois.edu}{Jeff Erickson}
	\\UIUC
	\and
	\href{https://patrickl.in/}{Patrick Lin}
	\\UIUC
	\and
	\href{http://www.sparsa.net/}{Salman Parsa}
	\\SLU
}
\begin{document}

\begin{titlepage}

\maketitle

\begin{abstract}
We present the first algorithm to morph graphs on the torus.
Given two isotopic essentially 3-connected embeddings of the same graph on the Euclidean flat torus, where the edges in both drawings are geodesics, our algorithm computes a continuous deformation from one drawing to the other, such that all edges are geodesics at all times.
Previously even the existence of such a morph was not known.
Our algorithm runs in $O(n^{1+\omega/2})$ time, where $\omega$ is the matrix multiplication exponent, and the computed morph consists of $O(n)$ parallel linear morphing steps.
Existing techniques for morphing planar straight-line graphs do not immediately generalize to graphs on the torus; in particular, Cairns' original 1944 proof and its more recent improvements rely on the fact that every planar graph contains a vertex of degree at most 5.
Our proof relies on a subtle geometric analysis of 6-regular triangulations of the torus.  We also make heavy use of a natural extension of Tutte's spring embedding theorem to torus graphs.
\end{abstract}

\thispagestyle{empty}
\setcounter{page}{0}
\end{titlepage}

\section{Introduction}

Computing a morph between two given geometric objects is a fundamental problem, with applications to questions in graphics, animation, and modeling. In general, the goal is twofold: ensure the morphs are as low complexity as possible, and ensure that the intermediate objects retain the same high level structure throughout the morph.  

Morphs between planar drawings are well studied in the topology, graph drawing, and computer graphics literature, with many variants.  A morph between two planar straight-line embeddings $\Gam_0$ and~$\Gam_1$ of the same planar graph is a continuous family of planar embeddings $\Gam_t$ parametrized by time, starting at $\Gam_0$ and ending at $\Gam_1$. In the most common formulation, all edges must be straight line segments at all times during the morph; there are then many variables of how to optimize the morph. 

In this paper, we consider the more general setting of morphs between two isotopic  embeddings of the same graph on the flat torus.  To our knowledge, ours is the first algorithm to morph graphs on any higher-genus surface.  In fact, it is the first algorithm to compute \emph{any} form of isotopy between surface graphs; existing algorithms to test whether two graphs on the same surface are isotopic are non-constructive~\cite{cm-tgis-14}.  Our algorithm outputs a morph consisting of $O(n)$ steps; within each step, all vertices move along parallel geodesics at (different) constant speeds, and all edges remain geodesics (“straight line segments”).  Our algorithm runs in $O(n^{1+\omega/2})$; the running time is dominated by repeatedly solving a linear system encoding a natural generalization of Tutte's spring embedding theorem.

\subsection{Prior Results (and Why They Don't Generalize)}
\label{ssec:prior}

Cairns \cite{c-dprc-44,c-idgc2-44} was the first to prove the existence of a straight-line continuous deformation between any two isomorphic planar straight-line triangulations.  A long series of later works, culminating in papers by Alamdari et~al. \cite{aabcd-hmpgd-17} and Kleist et~al.\cite{kklss-cimpg-19}, improved and generalized Cairns' argument to apply to arbitrary planar straight-line graphs, to produce morphs with polynomial complexity, and to derive efficient algorithms for computing those morphs.  (For a more detailed history of these results, we refer the reader to Alamdari et al. \cite{aabcd-hmpgd-17} and Roselli \cite{r-mvdg-14}.)  Cairns' inductive argument and its successors fundamentally rely on two simple  observations: (1) Every planar graph has at least one vertex of degree at most five, and (2) Every polygon with at most five vertices has at least one vertex in its visibility kernel.  Thus, every planar straight-line graph contains at least one vertex that can be collapsed to one of its neighbors while preserving the planarity of the embedding.

Unfortunately, the first of these observations fails for graphs on the torus; it is easy to construct a triangulation of the torus in which every vertex has degree~$6$.  Moreover, not every star-shaped hexagon has a vertex in its visibility kernel.  Thus, it is no longer immediate that in \emph{any} geodesic toroidal triangulation, one can move a vertex to one of its neighbors while maintaining a proper geodesic embedding.  (Indeed, the fact that we can actually collapse such an edge is the main topic of Section \ref{sec:nobadtriangulations}.)

Floater and Gotsman \cite{fg-mti-99} described an alternative method for morphing planar triangulations using a generalization of Tutte's spring-embedding theorem \cite{t-hdg-63}.  Every interior vertex in a planar triangulation can be expressed as a convex combination of its neighbors.  Floater and Gotsman's morph linearly interpolates between the coefficients of these convex combinations; Tutte's theorem implies that at all times, the interpolated coordinates are consistent with a proper straight-line embedding.  Gotsman and Surazhsky later generalized Floater and Gotsman's technique to arbitrary planar straight-line graphs \cite{sg-msfuo-01,gs-gipm-01,sg-cmcpt-01}.

At its core, Floater and Gotsman's algorithm relies on the fact that the system of linear equations expressing vertices as convex combinations of their neighbors has full rank.  An analogous system of equations describes equilibrium embeddings of graphs on the torus \cite{ggt-domam-06, el-tmcdc-20}; however, for graphs with~$n$ vertices, this linear system has $2n$ equations over $2n$ variables (the vertex coordinates), but its rank is only $2n-2$.  If the linear system happens to have a solution, that solution is consistent with a proper embedding \cite{c-crgtd-91, d-eppgc-04, l-dafe-04, ggt-domam-06}; unfortunately, the system is not always solvable.

When the coefficients associated with each edge are symmetric, the linear system has a two-dimen\-sional set of solutions, which correspond to proper embeddings that differ only by translation.  (See our Theorem \ref{Th:tutte-torus} below.)  Thus, if two given triangulations can both be described by symmetric coefficients, linearly interpolating those coordinates yields an isotopy~\cite{cpv-tbmai-03}.  Otherwise, however, even if the initial and final coefficient vectors are feasible, weighted averages of those coefficients might not be.  Steiner and Fisher \cite{sf-ppc2m-04} modify the linear system by fixing one vertex, restoring full rank.  However, while the solution to this linear system always describes a geodesic drawing of the graph, edges in that drawing can cross.  In either setting, linearly interpolating the edge coefficients does not yield a morph.

Both of these approaches produce planar morphs that require high numerical precision to describe exactly.  Barerra-Cruz et al. \cite{bhl-msdpt-19} describe an algorithm to morph between two isomorphic \emph{weighted Schnyder drawings} of the same triangulation, each determined by a Schnyder wood together with an assignment of positive weights to the faces.  The resulting morph consists of $O(n^2)$ steps, where after each step, all vertices lie on a $6n\times 6n$ integer grid.  The algorithm relies crucially on the fact that the set of Schnyder woods of a planar triangulation is a distributive lattice \cite{f-lsfpg-04}.  Despite some initial progress by Barerra-Cruz \cite{b-mpt-14}, it is still an open question whether this algorithm can be extended to arbitrary planar triangulations, or even to arbitrary planar straight-line graphs.  Beyond that, it is also not clear whether this result can be extended to toroidal graphs.  Aleardi et al. \cite{afl-swhgt-09} and Gonçalves and Lévêque \cite{gl-tmswo-14} describe natural generalizations of Schnyder woods to graphs on the torus; however, the Schnyder woods (or 3-orientations) of a toroidal triangulation do not form a distributive lattice. 

Considerably less is known about morphing graphs on higher-genus surfaces.  Like earlier planar morphing algorithms, our algorithm follows the same inductive strategy as several earlier algorithms for transforming combinatorial embeddings into geodesic embeddings on the torus \cite{mgk-rnm-79,m-srmto-96,kns-dgt-01}.  Our algorithm most closely resembles an algorithm of Kocay et al. \cite{kns-dgt-01}, which transforms any essentially 3-connected toroidal embedding into an isotopic  geodesic embedding, by repeatedly collapsing vertices with degree at most 5 until the embedding becomes a 6-regular triangulation.

\subsection{Outline of Our Results}

We begin by reviewing relevant definitions and background in Section~\ref{sec:background}.
Most importantly, we review a natural generalization of Tutte’s spring embedding theorem \cite{t-hdg-63} to graphs on the flat torus, first proved by Y.~Colin de Verdière~\cite{c-crgtd-91}; see Theorem \ref{Th:tutte-torus}.
We present a technical overview of our contributions in Section~\ref{sec:pseudomorph}, deferring details to later sections for clarity.

Like many previous planar morphing papers, most of our paper is devoted to computing \emph{pseudomorphs} between \emph{triangulations}.  A pseudomorph is a continuous deformation in which vertices are allowed to coincide during the motion but edges are not allowed to cross.  Our pseudomorph algorithm uses two different operations that reduce the complexity of the graph: \emph{direct collapses}, which move one vertex to one of its neighbors, and \emph{spring collapses}, which increase the weight of one edge to infinity while maintaining an equilibrium embedding, as described by Theorem \ref{Th:tutte-torus}.  The heart of our result is a novel analysis of 6-regular toroidal triangulations in Section~\ref{sec:nobadtriangulations}, which implies that every non-trivial toroidal triangulation contains at least one edge that can be directly collapsed without introducing any crossings.  We regard this analysis as the main technical contribution of our paper.  We describe and analyze spring collapses in Section~\ref{sec:eqpseudo}, again relying on Theorem \ref{Th:tutte-torus}.  We describe and analyze the base case of our pseudomorph algorithm in Section~\ref{sec:zippers}: a special class of triangulations we call \emph{zippers}, where every vertex is incident to a loop.

In Section~\ref{sec:depseudo}, we show that a mild generalization of techniques from Alamdari et al.~\cite{aabcd-hmpgd-17} can be used to perturb our pseudomorph into a proper morph; this perturbation technique gives us our final morphing algorithm for triangulations.  Finally, in Section~\ref{sec:allgraphs}, we describe a simple reduction from morphing essentially 3-connected geodesic toroidal embeddings to morphing triangulations, again using Theorem \ref{Th:tutte-torus}.  We conclude in Section~\ref{sec:conclusions} with some open problems and future directions to consider.

\section{Background and Definitions}
\label{sec:background}

\subsection{The Flat Torus}

The \EMPH{flat torus} $\Torus$ is the metric space obtained by identifying opposite sides of the unit square $[0,1]^2$ in the Euclidean plane via $(x,0) \sim (x,1)$ and $(0,y) \sim (1,y)$.  See Figure~\ref{fig:two-vertices-torus}.  Equivalently, $\Torus$ is the quotient space $\Torus = \R^2/\Z^2$, obtained by identifying every pair of points whose $x$- and $y$-coordinates differ by integers.  The function $\pi \colon \R^2 \to \Torus$ defined by $\pi(x,y) = (x\bmod1, y\bmod 1)$ is called the \emph{covering} map or \EMPH{projection} map.

A \EMPH{geodesic} in $\Torus$ is the projection of any line segment in $\R^2$; geodesics are the natural analogues of “straight line segments” on the flat torus.  We emphasize that a geodesic is \emph{not} necessarily the shortest path between its endpoints; indeed, there are infinitely many geodesics between any two points  on $\Torus$.  A \EMPH{closed geodesic} in $\Torus$ is any geodesic whose endpoints coincide; the two ends of any closed geodesic are locally collinear.

\subsection{Toroidal Embeddings} 

Geodesic toroidal drawings are the natural generalizations of straight-line planar graphs to the flat torus.  Formally, a \EMPH{geodesic toroidal drawing} $\Gam$ of a graph $G$ is a mapping of vertices to distinct points of $\Torus$ and edges to non-intersecting geodesics between their endpoints.  Following standard usage in topology, we refer to any such drawing as \EMPH{embedding}, to emphasize that edges do not cross.\footnote{Formally, an embedding is a continuous injective map from the graph (as a topological space) to the torus $\Torus$.  We note that this usage differs from standard terminology in many other graph drawing papers, where “embedding” refers to either a homeomorphism class of (not necessarily injective) drawings or a rotation system.}

A \EMPH{homotopy} between two (not necessarily injective) drawings $\Gamma_0$ and $\Gamma_1$ of the same graph $G$ is a continuous function $H\colon [0,1]\times G\to \Torus$ where $H(0,) = \Gamma_0$ and $H(1,)= \Gamma_1$.  A cycle on $\Torus$ is \EMPH{contractible} if it is homotopic to a single point and \EMPH{non-contractible} otherwise.  A homotopy is an \EMPH{isotopy} if each intermediate function $H(t,)$ is injective.  In other words, an isotopy is a continuous family of embeddings $(\Gam_t)_{t \in [0,1]}$ that interpolates between $\Gam_0$ and~$\Gam_1$.  (Edges in these intermediate embeddings $\Gamma_t$ are not necessarily geodesics.)

\begin{figure}[ht]
\centering
\includegraphics[scale=0.6]{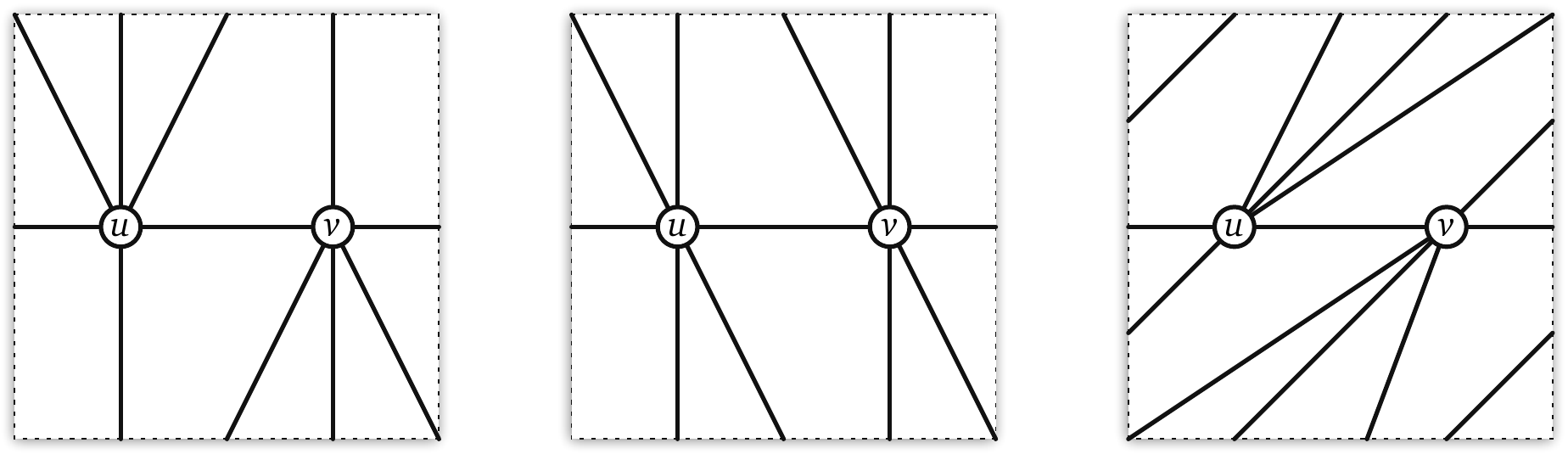}
\caption{Three combinatorially equivalent but non-isotopic geodesic toroidal triangulations with parallel edges and loops.  Opposite edges of the square are identified.}
\label{fig:two-vertices-torus}
\end{figure}

Two toroidal embeddings of the same graph need not be isotopic, even if they have the same rotation system; see Figure~\ref{fig:two-vertices-torus}. 
A recent algorithm of É.~Colin de Verdière and de Mesmay~\cite{cm-tgis-14} can decide whether two toroidal drawings of the same graph are isotopic in linear time; we describe an arguably simpler linear-time algorithm in Appendix \ref{A:coordinates}.  However, neither of these algorithms actually construct an isotopy if one exists; rather, they check whether the two embeddings satisfy certain topological properties that characterize isotopy \cite{l-ctp1c-74a,l-ctp1c-74b,l-cdp1c-84}.

We explicitly consider embeddings of graphs with parallel edges and loops.  In every geodesic toroidal embedding, every loop is non-contractible (since otherwise it would be a single point), and no two parallel edges are homotopic (since otherwise they would coincide).  In this paper, we consider \emph{only} geodesic embeddings; we occasionally omit the word “geodesic” when it is clear from context.

The \EMPH{universal cover} $\widetilde{\Gam}$ of a geodesic toroidal embedding $\Gam$ is the unique infinite straight-line plane graph whose projection to $\Torus$ is $\Gam$; that is, the projection of any vertex, edge, or face of $\widetilde{\Gam}$ is a vertex, edge, or face of $\Gam$, respectively.
A \EMPH{lift} of any vertex $u$ in $\Gam$ is any vertex in the preimage $\pi^{-1}(u) \subset V(\widetilde{\Gam})$. 
Similarly, each edge of $\Gam$ lifts to an infinite lattice of parallel line segments in~$\R^2$, and each face lifts to an infinite lattice of congruent polygons.

\begin{figure}[ht]
\centering
\includegraphics[scale=0.3]{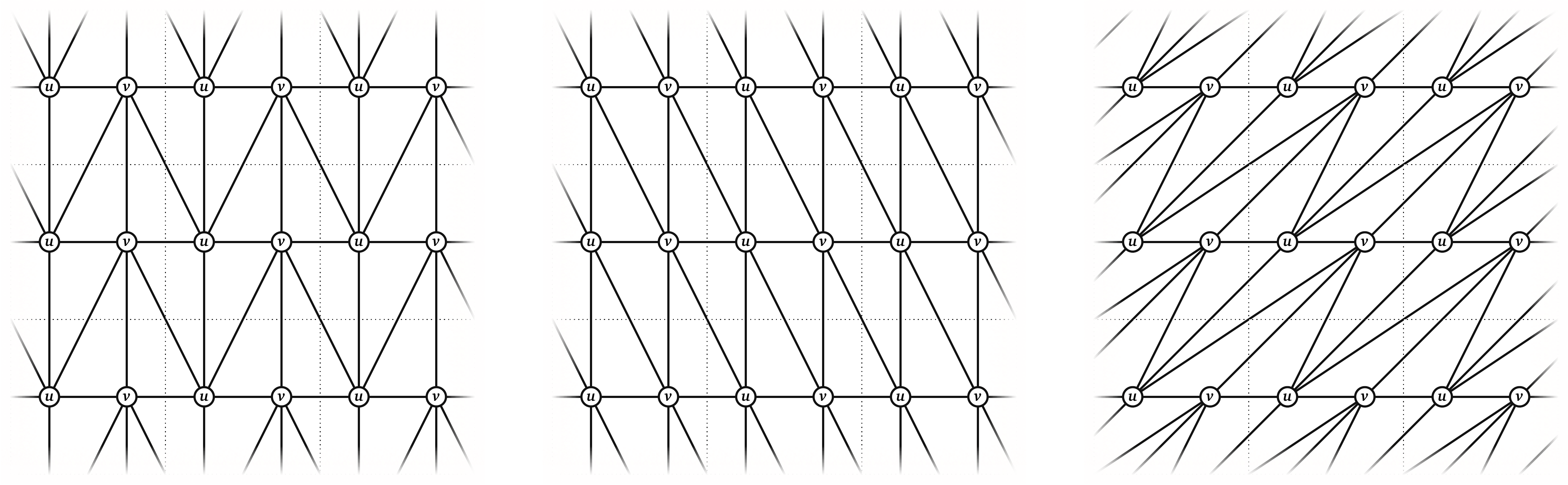}
\caption{Universal covers of the geodesic embeddings from Figure \ref{fig:two-vertices-torus}.}
\label{F:universal}
\end{figure}

The \EMPH{link} of a vertex $\widetilde{u}$ in the universal cover $\widetilde{\Gam}$ is the simple polygon formed by the boundary of the union of the (closed) faces incident to $\widetilde{u}$; the vertices of the link are the neighbors of $\widetilde{u}$.
We emphasize that when projecting a link down to the flat torus, the vertices and edges of the link need not remain distinct; see  Figure~\ref{fig:two-vertices-torus} for an example.  For a vertex $u$ in $\Gam$, we informally write ``link of $u$'' to refer to the link of an arbitrary lift $\widetilde{u}$ of $u$, and similarly for edges of $\Gam$. Because the links of any two lifts are congruent, any property proven about one lift applies to all of the others.

Geometric properties of geodesics, polygons, and embeddings on the flat torus are defined by projection from the universal cover.  For example, the angle between two edges (or geodesics)~$e$ and~$e'$ at a common vertex $u$ is equal to the angle between lifts $\widetilde{e}$ and~$\widetilde{e}'$ at a common lift $\widetilde{u}$.  Similarly, the cyclic order of edges around a vertex $u$ of $\Gam$ is the cyclic order of the corresponding edges around an arbitrary lift $\widetilde{u}$.   In particular, if~$u$ is incident to a loop, that loop appears twice in cyclic order around~$u$, and each lift $\widetilde{u}$ of $u$ is incident to two different lifts of that loop.  Finally, convex or reflex angles in the link of a vertex in $\Gamma$ are projections of convex or reflex angle in the link of an arbitrary lift~$\widetilde{u}$.

A toroidal embedding $\Gam$ is a \EMPH{triangulation} if every face of $\Gam$ is bounded by three (not necessarily distinct) edges, or equivalently, if its universal cover $\widetilde{\Gam}$ is a planar triangulation.  In particular, we do not insist that triangulations are simplicial complexes.  Every \emph{geodesic} toroidal embedding $\Gam$ is \EMPH{essentially simple}, meaning its universal cover $\widetilde{\Gam}$ is a planar embedding of a simple (albeit infinite) graph.  A~geodesic toroidal drawing $\Gam$ is \EMPH{essentially 3-connected} if its universal cover~$\widetilde{\Gam}$ is 3-connected~\cite{m-cpmec-97,m-cpmpt-97,mr-tvrmt-98,ms-bns-03,gl-tmswo-14}; every geodesic triangulation is essentially 3-connected.




\subsection{Coordinates and Crossing Vectors}
\label{SS:canon}

To represent an arbitrary straight-line embedding of a graph in the plane, it suffices to record the coordinates of each vertex; each edge in the embedding is the unique line segment between its endpoints.  However, vertex coordinates alone are not sufficient to specify a toroidal embedding; intuitively, we must also specify how the edges of the graph wrap around the surface.

Formally, we regard each edge of the graph $G$ as a pair of opposing \emph{half-edges} or \EMPH{darts}, each directed from one endpoint, called the \EMPH{tail}, toward the other endpoint, called the \EMPH{head}.  We write $\Rev(d)$ to denote the reversal of any dart $d$; thus, for example, $\Head(\Rev(d)) = \Tail(d)$ and $\Rev(\Rev(d)) = d$ for every dart~$d$.

We can represent any geodesic embedding of any graph $G$ onto the torus by associating a \EMPH{coordinate} vector $p(v) \in [0,1)^2$ with every vertex $v$ of $G$ and a \EMPH{crossing} vector $x(d) \in \Z^2$ with every dart $d$ of~$G$.  The coordinates of a vertex specify its position in the unit square; to remove any ambiguity, we assign points on the boundary of the unit square coordinates on the bottom and/or left edges.  The crossing vector of a dart records how that dart crosses the boundaries of the unit square.  Specifically, the first coordinate of $x(d)$ is the number of times $d$ crosses the vertical boundary to the right (with negative numbers counting leftward crossings), and the second coordinate of $x(d)$ is the number of times $d$ crosses the horizontal boundary upward (with negative numbers counting downward crossings).  Crossing vectors are anti-symmetric: $x(\Rev(d)) = -x(d)$ for every dart~$d$.  See Figure \ref{F:crossing}.

\begin{figure}[ht]
\centering
\includegraphics[scale=0.6]{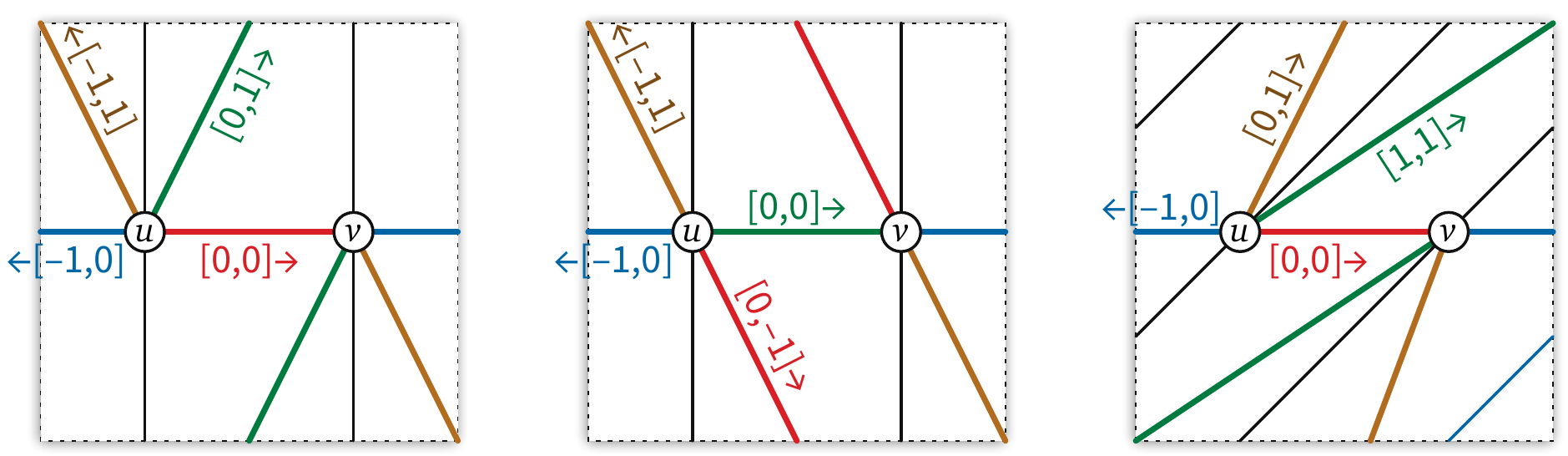}
\caption{The geodesic embeddings from Figure \ref{fig:two-vertices-torus}, showing the crossing vectors of all four darts from $u$ to $v$.}
\label{F:crossing}
\end{figure}


Crossing vectors and their generalizations have been used in several previous algorithms for surface graphs \cite{ew-gohhg-05,en-mcsnc-11,ccelw-scsih-08,cen-mcshc-09,cfn-csmcg-14,efl-hmcpf-18} and simplicial complexes \cite{bccdw-aswhb-12,dsw-alshb-10,dlw-eacmh-18} to encode the homology classes of cycles.  Crossing vectors are also equivalent to the \emph{translation vectors} traditionally used to model periodic (or “dynamic”) graphs \cite{kmw-ocure-67, c-spg-78, ks-dcdgp-88, cm-spadc-93, o-spdg-84, rk-riati-88, w-pdmia-67, is-tcigw-87, is-scpzs-90, i-tdgtv-87} and more recently used to model periodic bar-and-joint frameworks \cite{bs-lsppf-15,bs-pff-10}.

In principle, our morphing algorithm can be modified to update the coordinates of any vertex $v$ and the crossing vectors of darts incident to $v$ whenever $v$ crosses the boundary of the unit square, with only a small penalty in the running time.  But in fact, this maintenance is not necessary; it suffices to modify \emph{only} the vertex coordinates, keeping all crossing vectors fixed throughout the entire morph, even when vertices cross the boundary of the unit square.  We describe how to interpret toroidal embeddings with these more relaxed coordinates in Appendix \ref{A:coordinates}.

\subsection{Equilibrium Embeddings}

We make frequent use of the following natural generalization of Tutte's “spring embedding” theorem for 3-connected planar graphs \cite{t-hdg-63}:

\begin{theorem}[Y.~Colin de Verdière~\cite{c-crgtd-91}; see also \cite{d-eppgc-04, l-dafe-04, ggt-domam-06}]
Let $\Gam$ be any essentially 3-connected geodesic toroidal drawing, where each edge $e$ has an associated weight $\lambda(e) > 0$.  Then $\Gam$ is isotopic to a geodesic embedding $\Gam_*$ in $\Torus$ such that every face is convex and each vertex is the weighted center of mass of its neighbors; moreover, this equilibrium embedding is unique up to translation.
\label{Th:tutte-torus}
\end{theorem}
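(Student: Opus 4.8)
The plan is to lift everything to the universal cover, recast the equilibrium conditions as a single linear system, solve that system by a rank computation, and then argue geometrically --- in the style of Tutte's proof, but with $\Z^2$-periodicity and essential 3-connectivity playing the role of the fixed convex outer face available in the planar case.

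\emph{Setting up and solving the equilibrium system.} Fix one lift $\widetilde v_0$ of each vertex $v$ of $G$; every other lift is $\gamma\cdot\widetilde v_0$ for a deck transformation $\gamma\in\Z^2$, and each dart $d$ of $\Gam$ from $u$ to $v$ carries a crossing vector $x(d)\in\Z^2$ recording which lift of $v$ is joined to $\widetilde u_0$. A $\Z^2$-equivariant straight-line drawing of $\widetilde\Gam$ --- one in which each $\gamma$ acts by translation by $\gamma$ --- is determined by the positions $p(v)\in\R^2$ of the chosen lifts, and the requirement that every vertex be the $\lambda$-weighted center of mass of its neighbours becomes, after rearranging, the linear system $Lp=b$ (applied in each of the two coordinates), where $L$ is the $\lambda$-weighted graph Laplacian of $G$ and $b(v)=\sum_{d\in\mathrm{out}(v)}\lambda(d)\,x(d)$. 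Since $G$ is connected, $L$ has rank $n-1$ with kernel the constant vectors, so $Lp=b$ is solvable if and only if $\sum_v b(v)=0$; and $\sum_v b(v)=\sum_{\text{darts }d}\lambda(d)\,x(d)=0$, because $\lambda$ is symmetric on edges while $x(\Rev(d))=-x(d)$. Hence a solution $p$ exists and is unique up to adding a common constant to all $p(v)$ --- that is, the equilibrium drawing $\Gam_*$ exists and is unique up to translation. By construction $\Gam_*$ has the same crossing vectors as $\Gam$, and $\widetilde p(\gamma\cdot\widetilde v_0)=p(v)+\gamma$, so the image of $\widetilde\Gam_*$ is invariant under a full-rank lattice of translations and is therefore unbounded in every direction.

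\emph{The equilibrium drawing is a proper embedding with convex faces.} For nonzero $\theta\in\R^2$, the linear functional $h_\theta(\widetilde v):=\langle\widetilde p(\widetilde v),\theta\rangle$ is discrete harmonic --- its value at each vertex is a strictly positive convex combination of its values at the neighbours --- so the dart function $\omega_\theta(d):=h_\theta(\Head(d))-h_\theta(\Tail(d))$ is an anti-symmetric $1$-form that is closed (zero sum around every face) and $\lambda$-harmonic (weighted sum zero at every vertex). For generic $\theta$, $\omega_\theta$ is nonzero on every dart, and then: (i) $\omega_\theta$ changes sign at least twice around each vertex and at least twice around each face, since otherwise a weighted sum at a vertex, or a plain sum around a face, could not vanish; and (ii) an Euler-characteristic index count --- in which each vertex contributes $1-s_v/2$ and each face $1-s_f/2$, where $s_v,s_f$ are the even numbers of sign changes, and the contributions sum to $\chi(\Torus)=0$ --- forces $s_v=s_f=2$ everywhere, given that $s_v,s_f\ge 2$. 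Exactly two sign changes at each vertex, for every generic $\theta$, means the neighbours of $\widetilde v$ in angular order around $\widetilde p(\widetilde v)$ follow the rotation-system order and make total turn exactly $2\pi$; hence $\widetilde p(\widetilde v)$ lies in the interior of its neighbours' convex hull, the link of $\widetilde v$ is a simple polygon containing $\widetilde v$ (using essential simplicity to keep the neighbours distinct), and $\Gam_*$ is locally injective with the same rotation system as $\Gam$. Exactly two sign changes around each face means the face is monotone in every direction, hence convex --- once one separately rules out degenerate (zero-area) faces, which would survive the sign count but would force a small separator in the $3$-connected universal cover. Thus $\widetilde p\colon\widetilde\Gam_*\to\R^2$ is a proper local homeomorphism, and since $\R^2$ is simply connected it is a homeomorphism: $\Gam_*$ is a proper geodesic embedding with convex faces. (This is essentially the discrete harmonic $1$-form argument of Gortler--Gotsman--Thurston~\cite{ggt-domam-06}; Colin de Verdière's original proof~\cite{c-crgtd-91} runs a closely related maximum-principle argument.)

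\emph{Isotopy, and where the difficulty lies.} Since $\Gam$ and $\Gam_*$ have the same rotation system and the same crossing vectors, they assign the same homology class to every cycle of $G$; by the topological characterization of isotopy of graphs on surfaces~\cite{l-ctp1c-74a,l-ctp1c-74b,l-cdp1c-84}, $\Gam_*$ is therefore isotopic to $\Gam$, and together with the uniqueness from the first step this proves the theorem. The linear-algebra and isotopy steps are routine; the real obstacle is showing that the equilibrium drawing has no folds. In the planar case this is forced by the fixed convex outer boundary, but on the torus there is no such anchor, so the non-degeneracy argument must instead exploit $\Z^2$-periodicity and the exact tightness of Euler's formula for $\chi(\Torus)=0$, and it must extract the non-degeneracy of faces from essential $3$-connectivity --- precisely the hypothesis that fails for general (merely $2$-connected) toroidal graphs, exactly as in Tutte's theorem in the plane.
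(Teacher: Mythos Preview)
The paper does not prove this theorem. Theorem~\ref{Th:tutte-torus} is stated in Section~\ref{sec:background} as a background result, attributed to Y.~Colin de Verdière~\cite{c-crgtd-91} with additional citations to~\cite{d-eppgc-04, l-dafe-04, ggt-domam-06}, and is used throughout the paper without proof. The only related content the paper supplies is the explicit linear system~\eqref{Eq:Tutte} and the remark that its solution space is two-dimensional; there is no argument for embeddedness, convexity, or isotopy. So there is nothing in the paper to compare your proof against.

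That said, your sketch is a reasonable outline of the argument in one of the cited references --- you yourself identify it as ``essentially the discrete harmonic $1$-form argument of Gortler--Gotsman--Thurston~\cite{ggt-domam-06}.'' The linear-algebra step and the isotopy step (via Ladegaillerie) are fine. The index-counting step is correct in outline but, as you note, the genuinely delicate point is ruling out degenerate faces and collinear neighbourhoods; your one-line appeal to ``a small separator in the $3$-connected universal cover'' is where the real work hides, and a full proof would need to spell out how a degenerate face or edge forces two vertices of $\widetilde\Gam$ whose removal disconnects it. You should also be explicit that ``proper local homeomorphism from $\widetilde\Gam_*$ to $\R^2$'' requires the map to be proper in the topological sense (preimages of compacta are compact), which follows from $\Z^2$-equivariance together with the lattice having full rank --- you gesture at this with ``unbounded in every direction,'' but it deserves a sentence.
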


The equilibrium embedding $\Gam_*$ can be computed by solving the following linear system for the vertex coordinates $p_*(v)$, treating the crossing vectors $x(d)$ and weights $\lambda(d)$ as constants~\cite{c-crgtd-91,sf-ppc2m-04,ggt-domam-06,el-tmcdc-20}:
\begin{equation}
	\sum_{\Tail(d) = v} \lambda(d) \cdot
		\left(\strut p_*(\Head(d)) + x(d) - p_*(v)\right) = (0,0)
	\qquad \text{for every vertex $v$}
	\tag{$\star$}\label{Eq:Tutte}
\end{equation}
Here $\lambda(d) = \lambda(\Rev(d))$ is the weight of the edge containing dart $d$.   
This linear system has a two-dimensional set of solutions, which differ by translation \cite{c-crgtd-91,ggt-domam-06}; we can remove this ambiguity by fixing $p_*(r) = (0,0)$ for some arbitrary \emph{root} vertex $r$ \cite{sf-ppc2m-04}.  Some vertex coordinates $p_*(v)$ in the solution to this system may lie outside the unit square; as we explain in Appendix \ref{A:coordinates}, it is possible to move all coordinates back into the unit square by appropriately adjusting the crossing vectors, but in fact no such adjustment is necessary.

The support of this linear system is the toroidal graph $G$, which has balanced separators of size $O(\sqrt{n})$~\cite{ad-lapeg-96}.  Thus, the linear system can be solved in $O(n^{\omega/2})$ time using the generalized nested dissection technique of Lipton et al.~\cite{lrt-gnd-79}, where $\omega < 2.37287$ is the matrix multiplication exponent~\cite{l-ptfmm-14}. 

\subsection{Morphs and Pseudomorphs}

A \EMPH{morph} between two isotopic geodesic toroidal drawings $\Gam_0$ and $\Gam_1$ is a continuous family of geodesic drawings $(\Gam_t)_{t \in [0,1]}$ from $\Gam_0$ to $\Gam_1$; in other words, a morph is a geodesic isotopy between $\Gam_0$ and $\Gam_1$.   Any morph is completely determined by the continuous motions on the vertices; geodesic edges update in the obvious way.  

A morph is \EMPH{linear} if every vertex moves along a geodesic from its initial position to its final position at a uniform speed, and a morph is \EMPH{parallel} if all vertices move along parallel geodesics, that is, along projections of parallel line segments.\footnote{The paper of Alamdari et al.~\cite{aabcd-hmpgd-17} and its predecessors~\cite{bhl-mpgdu-14,addfp-mpgdo-14} call these “unidirectional" morphs; however, this term suggests incorrectly that vertices cannot move in \emph{opposite} directions.}
In this paper, we construct morphs that consist of a sequence of $O(n)$ parallel linear morphs.  Every morph of this type can be specified by a sequence of isotopic geodesic toroidal embeddings $\Gam_0,\ldots,\Gam_k$ and for each index $i$, a set of parallel geodesics connecting the vertices of $\Gam_i$ to corresponding vertices of $\Gam_{i+1}$.

Like many previous planar morphing algorithms, our morphing algorithm first constructs a \EMPH{pseudomorph}, which is a continuous family of drawings in which edges remain geodesics, vertices may become coincident, but edges never cross.  The most basic ingredient in our pseudomorph is an \EMPH{edge collapse}, which moves the endpoints of one edge together until they coincide.\footnote{This procedure is often called edge \emph{contraction} in planar graph morphing literature; we use the term “collapse” to avoid any confusion with the topological notion of contractible cycles.} Collapsing an edge also collapses the faces on either side of that edge to geodesics.  Our final morph is obtained by carefully perturbing a pseudomorph consisting of edge collapses and their reversals.

\section{Technical Overview}
\label{sec:pseudomorph}

In this section, we give a technical overview of our results: at a high level, first we develop an algorithm to compute a \emph{pseudomorph} between geodesic toroidal \emph{triangulations}, and then we describe how to use this algorithm to compute a morph between essentially $3$-connected geodesic toroidal embeddings.  Here we give only a brief overview of several necessary tools; each of these components is developed in detail in a later section of the paper.

\subsection{List of Ingredients}

We begin with some essential subroutines and structural results.

\subsubsection{Direct collapses}

Following Cairns' approach~\cite{c-dprc-44} and its later derivatives~\cite{aabcd-hmpgd-17,aacbf-mpgdw-13,bhl-mpgdu-14,afpr-mpgde-13,addfp-mpgdo-14}, a \EMPH{direct collapse} consists of moving a vertex $u$ along some edge to another vertex $v$, at uniform speed, until $u$ and $v$ coincide, keeping all other vertices fixed.  
To simplify our presentation, we require that the moving vertex $u$ is not incident to a loop.  We informally call a vertex \EMPH{good} if it is not incident to a loop and it can be directly collapsed to one of its neighbors without introducing any edge crossings, and \EMPH{bad} otherwise; see Section~\ref{sec:nobadtriangulations} for a simple geometric characterization of good vertices.

As noted by Cairns~\cite{c-dprc-44}, every vertex of degree at most $5$ not incident to a loop is good; indeed, this fact, along with the fact that every planar graph has a vertex of degree at most $5$, forms the basis of Cairns' approach and its derivatives for computing (pseudo)morphs between straight-line planar drawings. We prove in Lemma~\ref{L:loop6} that in any geodesic toroidal triangulation, a vertex of degree at most $5$ cannot be incident to a loop, so in fact all vertices of degree at most $5$ are good.

On the other hand, Euler's formula implies that the average degree of a toroidal graph is exactly~$6$, and there are simple examples of degree-$6$ vertices that are bad.  Morphing between torus graphs thus requires new techniques to handle the special case of $6$-regular triangulations.

\subsubsection{6-regular triangulations}

We then prove in Lemma~\ref{L:loopmeanszipper} that if a 6-regular toroidal triangulation contains a loop, then in fact \emph{every} vertex is incident to a loop; we call this special type of triangulation a \EMPH{zipper}.  Because all the loops in any non-trivial zipper are parallel, morphing between isotopic zippers $Z_0$ and $Z_1$ turns out be straightforward.  If the zippers have only one vertex, they differ by a single translation.  Otherwise, two parallel linear morphs suffice, first sliding the loops of~$Z_0$ to coincide with the corresponding loops in~$Z_1$, and then rotating the loops to move vertices and non-loop edges to their locations in $Z_1$.  We describe and analyze zippers in detail in Section~\ref{sec:zippers}.

We analyze $6$-regular triangulations without loops in detail in Section~\ref{sec:nobadtriangulations}; we regard this analysis as the main technical contribution of the paper.  Averaging arguments imply that in any 6-regular triangulation where every vertex is bad, in short a \emph{bad triangulation}, every vertex link is one of two specific non-convex hexagons that we call \emph{cats} and \emph{dogs}, illustrated in Figure \ref{F:catdog-anatomy}.  Analysis of how cats and dogs can overlap implies that any bad triangulation must contain a non-contractible cycle of vertices (each of whose link is a cat) that consistently turns in the same direction at every vertex, and therefore has non-zero turning angle, contradicting the fact that the total turning angle of every non-contractible cycle on the flat torus is zero \cite{r-let-59}.  We conclude that bad triangulations do not exist; \emph{every} geodesic toroidal triangulation that is not a zipper contains at least one good vertex.

\subsubsection{Equilibria and Spring Collapses}

Suppose we are given two isotopic geodesic triangulations $\Gam_0$ and $\Gam_1$ that are not zippers.  Our analysis implies that $\Gam_0$ and $\Gam_1$ each contain at least one good vertex.  However, it is possible that no vertex is good in both $\Gamma_0$ and $\Gamma_1$.  More subtly, even if some vertex $u$ is good in both triangulations, that vertex may be collapsible along a unique edge $e_0$ in $\Gamma_0$ but along a different unique edge $e_1$ in $\Gamma_1$.

The second problem also occurs for straight-line planar embeddings.  Cairns' solution to this problem was to introduce an intermediate triangulation $\Gam_{1/2}$ in which $u$ can be collapsed along both $e_0$ and~$e_1$.  Recursively constructing pseudomorphs from $\Gam_0$ to $\Gam_{1/2}$ and from $\Gam_{1/2}$ to $\Gam_1$ yields a pseudomorph from~$\Gam_0$ to $\Gam_1$ with exponentially many steps.  Subsequent refinements of Cairns' approach, culminating in the work of Alamdari et al.~\cite{aabcd-hmpgd-17} and later improvement by Kleist et al.~\cite{kklss-cimpg-19}, obtained a pseudomorph with only polynomial complexity by finding clever ways to avoid this intermediate triangulation.

Our algorithm does introduce one intermediate triangulation, but still avoids the exponential blowup of Cairns' algorithm.  Specifically, we use an equilibrium triangulation $\Gam_*$ isotopic to $\Gam_0$ and $\Gam_1$, as given by Theorem~\ref{Th:tutte-torus}.  A vertex that is good in $\Gam_0$ might still be bad in $\Gam_*$, so instead of applying a direct collapse to $\Gam_*$, we introduce a novel method for collapsing edges in an equilibrium embedding in Section~\ref{sec:eqpseudo}.  Intuitively, we continuously increase the weight of an arbitrary edge $e$ to infinity, while maintaining the equilibrium triangulation given by Theorem~\ref{Th:tutte-torus}.  This \EMPH{spring collapse} moves the endpoints of $e$ together, just like a direct collapse.  By analyzing the solutions to the equilibrium linear system~\eqref{Eq:Tutte}, we show that a spring collapse is a parallel pseudomorph, and in fact can be simulated by an equivalent parallel \emph{linear} pseudomorph.  Moreover, this parallel linear pseudomorph can be computed by solving a single instance of system~\eqref{Eq:Tutte}.
%

\subsection{Recursive Pseudomorph Between Triangulations}
\label{ssec:triangulationpseudomorph}

We are now ready to describe our recursive algorithm to compute a pseudomorph between two isotopic geodesic triangulations $\Gam_0$ and $\Gam_1$.  We actually explain how to  compute a pseudomorph $\Psi_0$ from $\Gam_0$ to an isotopic equilibrium triangulation $\Gam_*$.  The same algorithm gives a pseudomorph $\Psi_1$ from $\Gam_1$ to $\Gam_*$, and concatenating $\Psi_0$ with the reversal of $\Psi_1$ yields the desired pseudomorph from $\Gam_0$ to $\Gam_1$.

If $\Gam_0$ is a zipper, we morph directly between $\Gam_0$ and the equilibrium zipper $\Gam_*$ using at most two parallel linear morphs.  This is the base case of our recursive algorithm.

If $\Gam_0$ is not a zipper, then it contains a good vertex $u$.  By definition, $u$ can be directly collapsed along some edge $e$ to another vertex $v$, without introducing edge crossings.  This direct collapse gives us a parallel linear pseudomorph from $\Gam_0$ to $\Gam_0'$, a geodesic toroidal triangulation whose underlying graph $G'$ has $n-1$ vertices.  On the other hand, performing a spring collapse in $\Gam_*$ by increasing the weight of the same edge~$e$ to $\infty$ leads to a drawing where $u$ and $v$ coincide, that is, an equilibrium triangulation $\Gam_*'$ of~$G'$ that is isotopic to $\Gam_0'$.  Finally, because $\Gam_0'$ and $\Gam_*'$ are isotopic embeddings of the same graph $G'$, we can compute a pseudomorph from $\Gam_0'$ to $\Gam_*'$ recursively.

Our full pseudomorph from $\Gam_0$ to $\Gam_*$ thus consists of the direct collapse from $\Gam_0$ to~$\Gam_0'$, followed by the recursive pseudomorph from $\Gam_0'$ to $\Gam_*'$, followed by the reverse of the spring collapse from $\Gam_*$ to $\Gam'_*$.  See Figure~\ref{F:summary}.

\begin{figure}[ht]
\centering
\includegraphics[width=\textwidth]{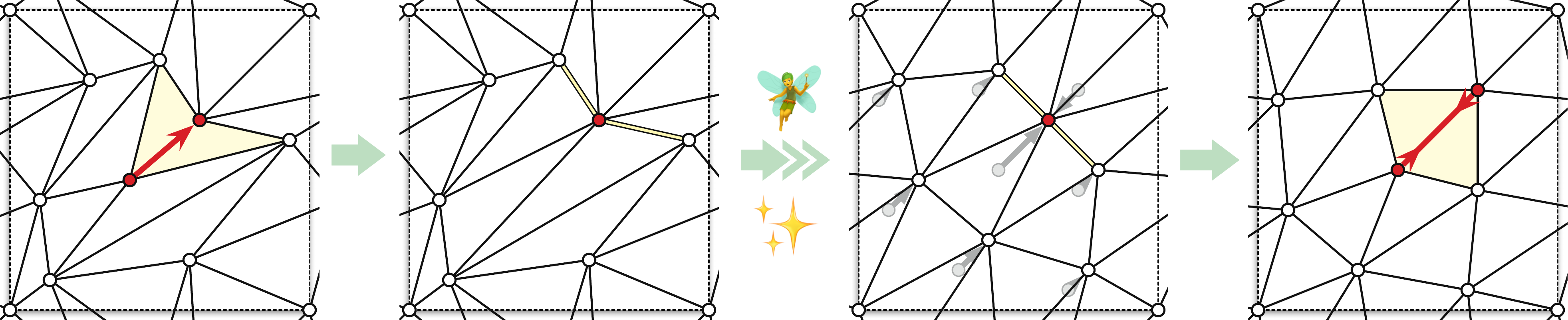}
\caption{Our pseudomorph consists of a direct collapse, a recursive pseudomorph, and a reversed spring collapse.}
\label{F:summary}
\end{figure}

Altogether our pseudomorph consists of $O(n)$ parallel linear pseudomorphs: at most $n$ direct collapses to reach a zipper, $O(1)$ parallel linear pseudomorphs to reach an equilibrium zipper, and finally at most $n$ reversed spring collapses.  The time to compute the overall pseudomorph is dominated by the time needed to compute the $O(n)$ equilibrium embeddings; the overall running time is $O(n^{1+\omega/2})$.

\subsection{Morphing Between Embeddings}

In Section~\ref{sec:depseudo}, we show that a technique introduced by Alamdari et al.~\cite{aabcd-hmpgd-17} for perturbing planar pseudomorphs into morphs can be applied with only minor modifications to our toroidal pseudomorphs, to produce morphs between geodesic toroidal triangulations.  This perturbation step adds only $O(n^2)$ overhead to the running time of our algorithm, and the resulting morphs  consist of $O(n)$ parallel linear morphs.

Finally, in Section~\ref{sec:allgraphs}, we describe a simple reduction from morphing arbitrary essentially $3$-connected embeddings to the special case of morphing triangulations.  In brief, given any triangulation of any essentially $3$-connected geodesic toroidal embedding $\Gam$, the isotopic equilibrium embedding $\Gam_*$  given by Theorem~\ref{Th:tutte-torus} can be triangulated in the same way, and any morph between the two triangulations induces a morph between $\Gam$ and $\Gam_*$. This reduction preserves the number of parallel linear morphs and adds only $O(n^{\omega/2})$ overhead to the running time.

In summary, given any two isotopic essentially $3$-connected geodesic toroidal embeddings, we can construct a morph between them, consisting of $O(n)$ parallel linear morphs, in $O(n^{1+\omega/2})$ time.

\section{Cats and Dogs}
\label{sec:nobadtriangulations}

In this section, we present the core of our technical analysis: the proof that every geodesic toroidal triangulation without loops has a (directly) collapsible edge.

The \EMPH{visibility kernel} of a simple polygon $P$ is the set of all points in $P$ that can “see” all of $P$; more formally, the visibility kernel is $\Setbar{p\in P}{{pq\subseteq P} \text{~for all~} {q\in P}}$.  The visibility kernel is always convex.
If~$P$ is the link of a vertex $v$ in a geodesic triangulation, then $v$ must lie in the visibility kernel of $P$.
We call a simple polygon \EMPH{good} if its visibility kernel contains a vertex of the polygon, and \EMPH{bad} otherwise.  All triangles, quadrilaterals, and pentagons are good \cite{c-dprc-44}, but some hexagons are bad;  Figure~\ref{fig:catdogkernel} shows several examples.

\begin{figure}[ht]
\centering
\includegraphics[scale=0.5]{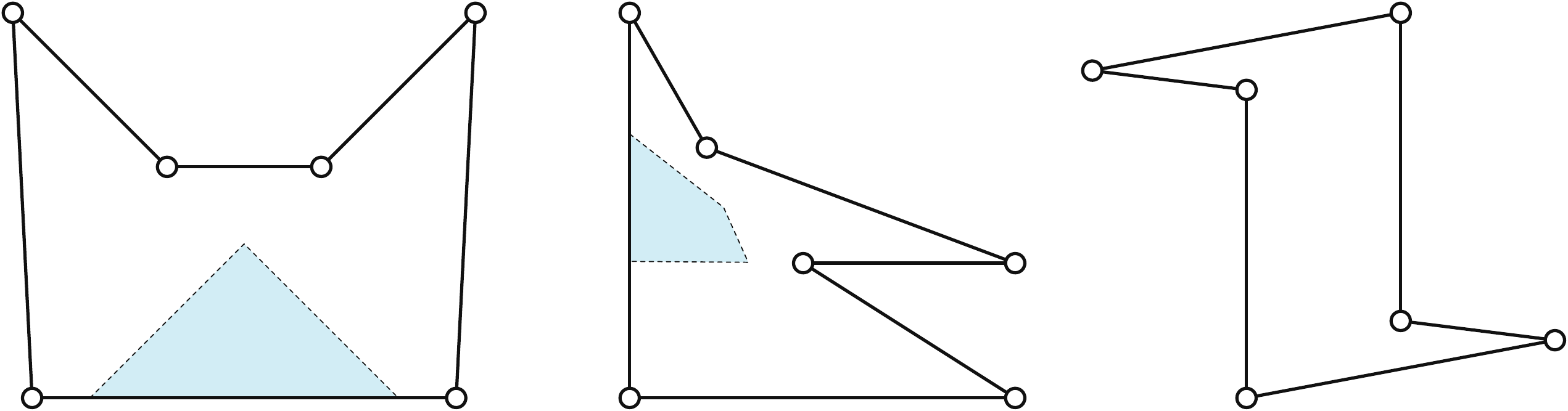}
\caption{Three bad hexagons.  Visibility kernels are shaded in blue; the third visibility kernel is empty.}
\label{fig:catdogkernel}
\end{figure}

A vertex $u$ of a geodesic toroidal triangulation is \EMPH{good} if it is not incident to a loop, and the link of any (and thus every) lift $\tilde{u}$ of $u$ in~$\widetilde\Gam$ is good, and \EMPH{bad} otherwise.  A good vertex can be safely collapsed to any neighbor in the visibility kernel of its link.  Finally, a geodesic toroidal triangulation~$\Gam$ with no loops is \EMPH{good} if it contains at least one good vertex, and \EMPH{bad} otherwise.  The main result of this section is that bad triangulations do not exist; that is, every geodesic triangulation with no loops is good.




\begin{lemma}\label{l:6regular2reflex}
Any bad triangulation is $6$-regular, and the link of each vertex has exactly two reflex vertices.
\end{lemma}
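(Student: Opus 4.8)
The plan is to argue by a double-counting (discharging) argument on the vertices and faces of a bad triangulation $\Gam$. First I would record two basic facts. Since $\Gam$ is bad, no vertex is incident to a loop, and every vertex $u$ has a bad link --- a simple hexagon whose visibility kernel contains no vertex of the hexagon. (It cannot be a triangle, quadrilateral, or pentagon, since those are always good by Cairns; and by Lemma~\ref{L:loop6} a vertex of degree at most $5$ cannot carry a loop, so every bad vertex has degree at least $6$.) Thus every vertex link is a simple hexagon, which already forces $\Gam$ to be $6$-regular: Euler's formula on the torus gives average degree exactly $6$, and if every vertex has degree at least $6$ then every vertex has degree exactly $6$. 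So the first half of the lemma is essentially immediate once we know bad links are hexagons.

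For the second half I would analyze which hexagons are bad. Let $P$ be the link of a lift $\tilde u$, a simple hexagon containing $\tilde u$ in its visibility kernel but with no vertex of $P$ in that kernel. The key elementary geometry fact I would establish is: a simple hexagon whose visibility kernel is nonempty and contains none of its six vertices must have \emph{at least two} reflex vertices. (A convex hexagon has all vertices in the kernel; a hexagon with exactly one reflex vertex --- a ``pentagon with one notch'' --- still has a vertex in its kernel, by the same visibility argument Cairns uses for pentagons, since clipping one ear off still leaves a kernel vertex.) So a bad hexagon has $r \ge 2$ reflex vertices. I then need the reverse inequality, $r \le 2$. Here I would use a global counting argument over the whole triangulation: sum the interior angles. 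Around each vertex of $\Gam$ the six incident face-angles sum to $2\pi$; summing over all $n$ vertices, the total angle is $2\pi n$, and since $\Gam$ has $2n$ triangular faces (Euler again: $n - e + f = 0$, $3f = 2e$, so $f = 2n$, $e = 3n$), the average face is a Euclidean triangle with angle sum $\pi$ --- consistent, no contradiction yet. The real counting is over reflex angles in links: each reflex vertex of the link of $u$ corresponds to a neighbor $w$ of $u$ at which the two faces of $\Gam$ containing edge $uw$ subtend, on the far side, a reflex angle. Counting incidences (vertex, reflex vertex of its link) two ways --- once summing $r(u)$ over all vertices $u$, once summing over all edges $uw$ the indicator that $w$ is reflex in the link of $u$ plus the indicator that $u$ is reflex in the link of $w$ --- and comparing with the total-angle budget $2\pi n$ should pin the average of $r(u)$ to exactly $2$; combined with $r(u) \ge 2$ for every vertex (badness), this forces $r(u) = 2$ for all $u$.

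The step I expect to be the main obstacle is making the angle-counting genuinely tight rather than just a bound. A reflex vertex $w$ in the link of $u$ means the angle of the link polygon at $w$ exceeds $\pi$, which translates to: the two triangles $uwa$ and $uwb$ (the faces sharing edge $uw$) have $\angle awu + \angle uwb > \pi$, i.e.\ the angle \emph{at $w$ inside the two triangles incident to $uw$} exceeds $\pi$. Summing this over all ordered pairs $(u,w)$ with $w$ reflex in $\operatorname{link}(u)$ must be reconciled with the fact that the angles at $w$ around all six of its incident triangles sum to exactly $2\pi$. The bookkeeping --- each triangle's three angles get charged to three different (vertex, link-vertex) incidences, each edge's angles get charged in a controlled way --- needs to be set up so that ``every link has $\ge 2$ reflex vertices'' plus ``total angle $= 2\pi n$'' leaves no slack. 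I would be careful to handle the non-simplicial cases (parallel edges, a face incident to a vertex more than once) by always working in the universal cover $\widetilde\Gam$, where all faces are honest triangles and all links are honest simple hexagons, so that the Euclidean angle arithmetic is unambiguous; the projection back to $\Gam$ only affects how incidences are grouped, not the angle sums.
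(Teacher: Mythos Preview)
Your outline is correct and tracks the paper's proof closely: bad implies every link is a bad hexagon, hence degree $\ge 6$; Euler forces $6$-regularity; a bad hexagon has at least two reflex vertices; and then a counting argument shows the average number of reflex vertices per link is at most two, so every link has exactly two.

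The only substantive difference is how the ``at most two on average'' step is executed. You frame it as a global discharging/angle-sum and flag the bookkeeping as the main obstacle. The paper sidesteps that bookkeeping by arguing \emph{locally} at each vertex $w$: if $w$ is reflex in the link of a neighbor $x$, then the two face-corners at $w$ adjacent to the edge $wx$ sum to more than~$\pi$. Since the six corners at $w$ sum to exactly $2\pi$, two such reflex incidences whose corner pairs are disjoint would already overshoot the budget; hence any two neighbors $x,y$ with $w$ reflex in both links must have $wx$ and $wy$ adjacent in the rotation at $w$. Three edges around a degree-$6$ vertex cannot be pairwise adjacent, so each vertex $w$ is reflex in at most two links. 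Summing over $w$ gives the average bound immediately, and combined with $r(u)\ge 2$ you get $r(u)=2$ everywhere.

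So your worry about making the global angle sum ``genuinely tight'' is well founded---a naive global sum only yields $r(u)<4$---but it dissolves once you localize: the tightness is forced vertex by vertex, and no global discharging is needed. Your setup (corners at $w$ summing to $2\pi$, reflex incidence eating two adjacent corners with sum $>\pi$) is exactly the right one; you were one sentence away from the paper's argument.
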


\begin{proof}
Every vertex in a bad triangulation must have degree at least $6$, because every vertex with degree at most $5$ is good~\cite{c-dprc-44}.  On the other hand, Euler's formula for the torus implies that the average degree is exactly $6$.  It follows that every bad triangulation is $6$-regular.

If a simple polygon is convex or has exactly one reflex vertex, then it is good: the visibility kernel of a convex polygon is the polygon itself, and if there is exactly one reflex vertex, then the reflex vertex is in the visibility kernel.  So each link in a bad triangulation has \emph{at least} two reflex vertices.  We argue next that the average number of reflex vertices per link is \emph{at most} two, which implies that every link has exactly two reflex vertices.

%

Let $\Gamma$ be any (not necessarily bad) 6-regular geodesic triangulation.  A \emph{corner} of a vertex $v$ in $\Gamma$ is the angle between two edges that are adjacent in cyclic order around $v$.  If $v$ is a vertex of the link of another vertex $x$, then the link of $x$ contains exactly two adjacent corners of $v$.  Moreover, if $v$ is a reflex vertex of the link of $x$, those two corners sum to more than half a circle.  Thus, if $v$ is reflex in the link of two neighbors $x$ and $y$, the links of $x$ and $y$ must share a corner of $v$, which implies that edges $vx$ and $vy$ are adjacent in cyclic order around $v$; see \ref{F:2reflex}.  If $v$ were reflex in the link of three neighbors $x$,~$y$, and $z$, then all three edge pairs $vx$, $vy$ and $vx$, $vz$ and $vy$, $vz$ would be adjacent around $v$, which is impossible because $v$ has degree $6$.

\begin{figure}[ht]
    \centering
    \includegraphics[scale=0.5]{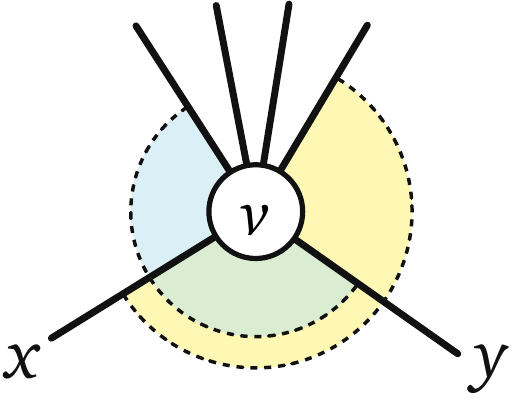}
    \caption{Each vertex in a 6-regular triangulation is reflex in at most two links}
    \label{F:2reflex}
\end{figure}

Thus, each vertex in $\Gamma$ is a reflex vertex of the links of at most two other vertices.  It follows that the average number of reflex vertices in a link is at most two, which completes the proof.
\end{proof}

\begin{lemma}\label{lem:nomice}
A bad hexagon with two reflex vertices separated by two convex vertices is never the link of any vertex.
\end{lemma}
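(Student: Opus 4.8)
Assume for the sake of contradiction that the link of some vertex $x$ is a bad hexagon $H = v_1 v_2 v_3 v_4 v_5 v_6$ of this type, with $v_1$ and $v_4$ the reflex vertices and $v_2, v_3$ and $v_5, v_6$ the two pairs of convex vertices separating them. Then $x$ is adjacent to all six $v_i$, the triangulation contains the six faces $x v_i v_{i+1}$ (indices mod $6$), and since $x$ lies in the visibility kernel of $H$ these six triangles tile the interior of $H$. The first step is to pin down the shape of $H$. Because $v_1$ is reflex, the kernel of $H$ -- and hence $x$ -- lies in the wedge $W_1$ cut out by the two half-planes supporting the edges $v_6 v_1$ and $v_1 v_2$, and symmetrically $x \in W_4$. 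Since the boundary arcs $v_1 v_2 v_3 v_4$ and $v_4 v_5 v_6 v_1$ are convex (their interior vertices $v_2, v_3$, respectively $v_5, v_6$, are convex in $H$), the wedges $W_1$ and $W_4$ open toward one another, so $H$ is a ``lens'' pinched at $v_1$ and $v_4$, with $x$ trapped in the bounded region $W_1 \cap W_4$ in its waist.

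Next I would bring in the triangles surrounding $H$ at the two pinch vertices. At $v_1$ the incident angles total $2\pi$; the interior angle $\beta_1$ of $H$ at $v_1$, split by the edge $v_1 x$ into the two corners of $v_1$ in the faces $v_6 v_1 x$ and $x v_1 v_2$, exceeds $\pi$, so the triangles incident to $v_1$ on the far side of the edges $v_1 v_6$ and $v_1 v_2$ contribute total angle $2\pi - \beta_1 < \pi$; in particular $v_1$ is a convex corner of the exterior of $H$, and similarly for $v_4$. I would also reuse the counting from the proof of Lemma~\ref{l:6regular2reflex}: a vertex is reflex in the links of at most two of its neighbours, and if it is reflex in the links of two of them then the edges joining it to those two neighbours are consecutive in its rotation. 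Applied to $v_1$, the only neighbour of $v_1$ besides $x$ whose link can have $v_1$ as a reflex vertex is $v_2$ or $v_6$; likewise for $v_4$ it is $v_3$ or $v_5$. The plan is then to combine (i) the lens geometry, (ii) this restriction on the second reflex witness of each pinch vertex, and (iii) the fact that each of the links of $v_1, \dots, v_6$ is itself a simple polygon containing $x$ with interior angle $\delta_{i-1} + \delta_i$ at $x$ (where $\delta_i$ is the angle at $x$ in face $x v_i v_{i+1}$ and $\sum_i \delta_i = 2\pi$), and to run a short case analysis over which of $v_2/v_6$ and $v_3/v_5$ plays that role, showing that in every case some two triangulation edges incident to the waist of $H$ are forced to cross, or some vertex is forced to be reflex in three links, or a short closed walk threading the waist would have nonzero total turning -- each of which is impossible on the flat torus.

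The crux, and the step I expect to be hardest, is exactly this case analysis. The obstruction cannot be purely local: a thin star-shaped ``lens'' hexagon of precisely this combinatorial type is perfectly happy to be the link of a vertex in isolation (place $x$ anywhere in its nonempty kernel and cone off the six edges), so the contradiction must come from how the triangles immediately outside $H$ at $v_1$ and $v_4$ interact with the two convex arcs of $H$ under the flat metric. I expect the cleanest route is to record, for each of the four edges $v_1 v_6, v_1 v_2, v_4 v_3, v_4 v_5$, on which side of the waist its continuation into the surrounding triangulation points; to observe that the lens geometry together with the $2\pi$ angle condition at $v_1$ and $v_4$ forces these directions up to the finitely many cases above; and then to check that none of the resulting configurations closes up into a consistent flat triangulation. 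Keeping this bookkeeping honest is where the real work lies.
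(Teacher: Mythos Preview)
Your proposal has a genuine gap, and it stems from a false premise. You write that ``the obstruction cannot be purely local,'' because a star-shaped lens hexagon of this combinatorial type can exist in isolation. That observation is correct but irrelevant: such a hexagon is never \emph{bad}. The lemma's hypothesis is that the hexagon is bad, meaning no vertex lies in its visibility kernel. The paper's proof shows, by a short and entirely local argument, that under this hypothesis the visibility kernel is actually \emph{empty}---so the hexagon cannot be the link of anything, with no appeal to the surrounding triangulation, $6$-regularity, or turning angles.

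Here is the local argument you missed. Label the vertices $a,b,c,d,e,f$ with $c$ and $f$ reflex. The diagonal $cf$ lies inside $P$ (both chains $c,d,e,f$ and $f,a,b,c$ are convex), so $c$ sees $b$, $d$, and $f$; since $P$ is bad, $c$ fails to see one of $a,e$---say $a$. Then in the quadrilateral $abcf$ the diagonal $ca$ lies outside, forcing $f$ to be its unique reflex vertex; in particular $f$ sees $b$. Since $P$ is bad, $f$ must then fail to see $d$, and the same reasoning makes $c$ the reflex vertex of $cdef$. The kernel of $abcf$ lies in the wedge at $f$ on the $abcf$ side of line $cf$, and the kernel of $cdef$ lies in the wedge at $c$ on the opposite side; these are disjoint, so the kernel of $P$---which must sit inside both---is empty. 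Your example of a star-shaped lens has one of $c,f$ in the kernel and is therefore good; once you insist on badness, no such example survives. The elaborate exterior case analysis you outline is unnecessary (and, as written, not a proof).
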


\begin{proof}
Consider a bad hexagon $P$ whose reflex vertices are separated by two convex vertices. Label the vertices $a$ through $f$ in cyclic order such that $c$ and $f$ are the reflex vertices, as shown in Figure~\ref{fig:mouse}.

\begin{figure}[ht]
    \centering
    \includegraphics[scale=0.5]{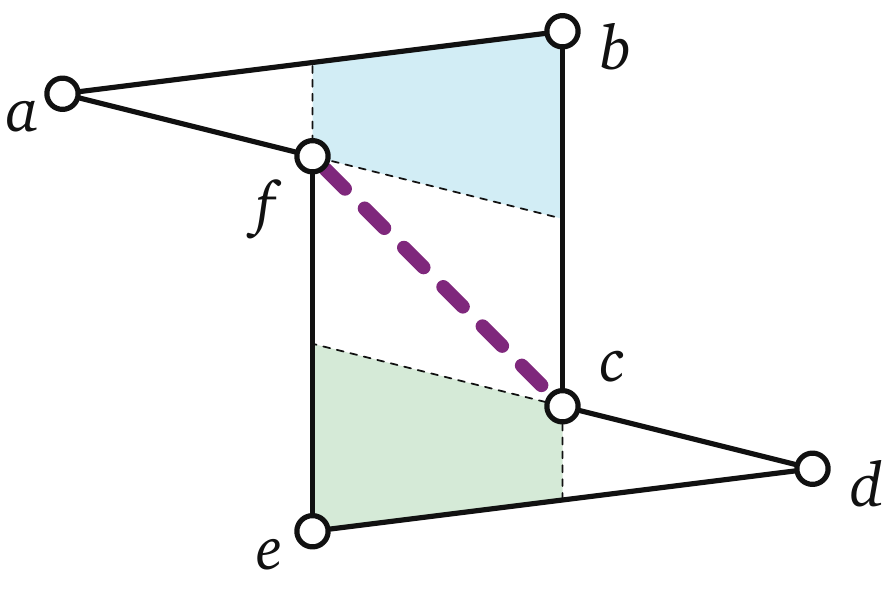}
    \caption{A hexagon with two reflex vertices that are separated by two convex vertices.  The diagonal between the reflex vertices splits the hexagon into two non-convex quadrilaterals with disjoint visibility kernels.}
    \label{fig:mouse}
\end{figure}

Because $P$ is bad, neither $c$ nor $f$ lies in its visibility kernel.  Vertex $c$ can see vertices $b$, $d$, and~$f$, so it cannot see both $a$ and $e$.  Without loss of generality, suppose $c$ cannot see $a$.  Then $f$ is a reflex vertex of the quadrilateral $abcf$.  Every quadrilateral has at most one reflex vertex, and that reflex vertex is in the visibility kernel, so $f$ can see $b$.  It follows that $f$ cannot see $d$, and $c$ is a reflex vertex of the quadrilateral $cdef$.

The visibility kernels of quadrilaterals $abcf$ and $cdef$ are disjoint, which implies that the visibility kernel of $P$ is empty.  We conclude that $P$ is not the link of any vertex.
\end{proof}

For the remainder of the proof, we annotate the edges of any triangulation as follows.  The \emph{star} of an edge in a triangulation is the union of the faces incident to that edge.  An edge is \EMPH{flippable} if its star is convex, and non-flippable otherwise.  Every non-flippable edge is incident to the unique reflex vertex of its star; we direct each non-flippable edge away from this reflex vertex.

\begin{lemma}\label{lem:green}
In any bad triangulation, every vertex is incident to exactly two incoming directed edges,
exactly two outgoing directed edges, and exactly two flippable edges.
\end{lemma}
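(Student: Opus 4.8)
The plan is to reduce the whole statement to a single observation: the non-flippable edges directed out of a vertex $v$, and those directed into $v$, are each in bijective correspondence with the reflex vertices of links, after which all three counts follow by elementary bookkeeping. First I would fix an edge $uv$ and let $uva$ and $uvb$ be the two triangular faces incident to it, working in the universal cover so that everything is a genuine simple planar picture. The star of $uv$ is then the simple quadrilateral with boundary $u\,a\,v\,b$, and its interior angles at $a$ and $b$ are $\angle uav$ and $\angle ubv$, which are angles of Euclidean triangles and hence strictly less than $\pi$. Consequently, if the star is non-convex its unique reflex vertex is an endpoint of $uv$, and $u$ is that reflex vertex (equivalently, $uv$ is directed from $u$ to $v$) precisely when the interior angle at $u$, which equals $\angle vua + \angle vub$, exceeds $\pi$. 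But $\angle vua$ and $\angle vub$ are exactly the two corners of $u$ that appear in the link of $v$ — one from each of the faces $uva$ and $uvb$ — so by the angle computation in the proof of Lemma~\ref{l:6regular2reflex} this is exactly the condition for $u$ to be a reflex vertex of the link of $v$. Hence: \emph{$uv$ is non-flippable and directed from $u$ to $v$ if and only if $u$ is a reflex vertex of the link of $v$} (and, symmetrically, $uv$ is flippable iff neither endpoint is reflex in the other's link).

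Given this correspondence, the counts are immediate. By Lemma~\ref{l:6regular2reflex} a bad triangulation is $6$-regular and every link has exactly two reflex vertices, so every vertex $v$ has exactly two incoming directed edges, one for each reflex vertex of its link. For the outgoing edges I would double-count the incidences between vertices and reflex vertices of links: there are $\sum_v 2 = 2n$ of them by Lemma~\ref{l:6regular2reflex}, while the argument in the proof of Lemma~\ref{l:6regular2reflex} shows that each vertex is a reflex vertex of at most two links; since there are $n$ vertices, each vertex is a reflex vertex of exactly two links, i.e.\ has exactly two outgoing directed edges. Finally, each edge incident to a vertex is directed into it, directed out of it, or flippable, and these three possibilities are mutually exclusive (a non-flippable edge carries a single direction, away from the unique reflex vertex of its star). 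Since the degree is $6$, exactly $6 - 2 - 2 = 2$ of the incident edges are flippable.

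The only real content is the angle identification in the first step, matching "the reflex vertex of the star of $uv$ is the endpoint $u$" with "$u$ is a reflex vertex of the link of $v$"; everything after that is counting. Some care is needed because links and stars are defined via lifts, and when $u$ and $v$ are joined by parallel edges a lift of $u$ can appear several times in the link of a lift of $v$, so the cleanest formulation of the correspondence is per edge (or per dart) rather than per unordered pair of vertices. Once it is phrased that way, the map from incoming edges at $v$ to reflex vertices of the link of $v$ is an exact bijection, and the rest of the proof is bookkeeping against Lemma~\ref{l:6regular2reflex}.
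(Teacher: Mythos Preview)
Your proposal is correct and follows essentially the same approach as the paper: both rest on the equivalence ``$uv$ is directed from $u$ to $v$ if and only if $u$ is a reflex vertex of the link of $v$,'' then combine Lemma~\ref{l:6regular2reflex} with a global count to get the outgoing-edge count, and finish by subtracting from the degree. You are more explicit than the paper about why the angle at $u$ in the star of $uv$ equals the angle at $u$ in the link of $v$, and your double-counting argument for the outgoing edges (using the ``at most two'' bound from the proof of Lemma~\ref{l:6regular2reflex}) is cleaner than the paper's somewhat terse passage at that step; your per-dart formulation to handle parallel edges is also a nice touch.
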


\begin{proof}
Fix a bad triangulation $\Gam$. If $w$ is a reflex vertex in the link of some vertex $v$, then the edge $vw$ is
directed toward $v$. So Lemma~\ref{l:6regular2reflex} implies that each vertex is incident to two incoming edges. It follows
that the number of directed edges in $\Gam$ is exactly twice the number of vertices. Thus, each vertex is
incident to at most two outgoing edges. Because the number of directed edges in $G$ is exactly twice the
number of vertices, each vertex is incident to exactly two outgoing edges.
Finally, because every vertex of G is incident to exactly four directed edges, Lemma~\ref{l:6regular2reflex} implies that
every vertex of G is incident to two flippable edges.
\end{proof}

Lemmas~\ref{lem:nomice} and~\ref{lem:green} imply that the links in every bad triangulation fall into two categories, which we call \emph{cats} and \emph{dogs}.  A \EMPH{cat} is a bad hexagon whose reflex vertices are adjacent; a \EMPH{dog} is a bad hexagon whose reflex vertices are separated by one convex vertex. Cats are (combinatorially) symmetric; however, there are two species of dogs, which are reflections of each other.

\begin{figure}[ht]
\centering
\includegraphics[scale=0.5]{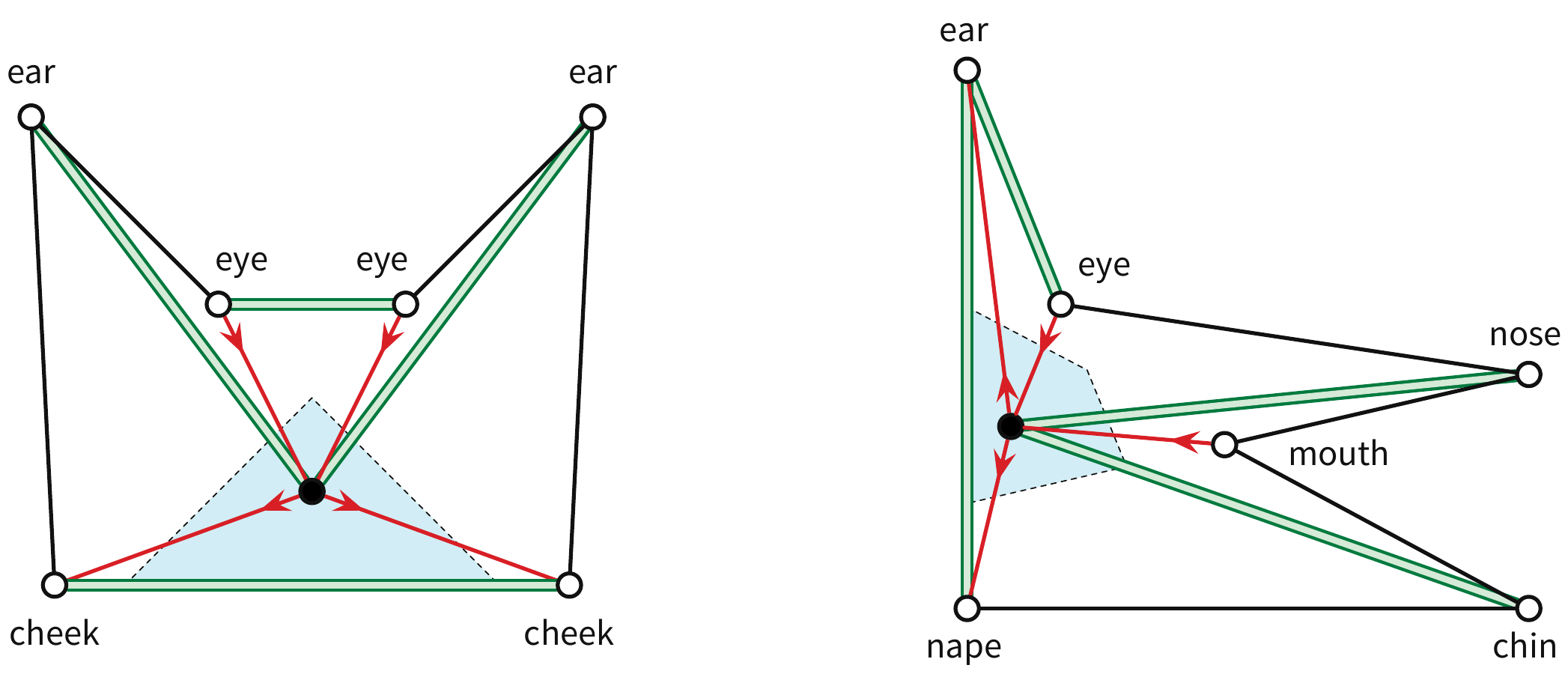}
\caption{Feline and (right-facing) canine anatomy. Flippable edges are shown in double-thick green.}
\label{F:catdog-anatomy}
\end{figure}

We label the vertices of each cat and dog mnemonically, as shown in Figure~\ref{F:catdog-anatomy}. For example, if the link of vertex $v$ is a dog, then $\emph{nose}(u)$ is the unique convex vertex of the dog whose neighbors are reflex, and the vertex opposite $\emph{nose}(u)$ in the link is $\emph{nape}(u)$. 
Since $\emph{nape}(u)$ is incident to two convex vertices, it can immediately see every other vertex in the link other than $\emph{nose}(u)$; its visibility of the nose is blocked by one of the two reflex vertices, which we will call $\emph{mouth}(u)$.  A dog is \EMPH{right-facing} if $\emph{mouth}(u)$ immediately follows $\emph{nose}(u)$ in clockwise order, and \EMPH{left-facing} otherwise.


The following lemma implies that two boundary edges of each cat and dog are also flippable, as shown in Figure~\ref{F:catdog-anatomy}.

\begin{lemma}
In any bad triangulation, every triangle is incident to exactly one flippable edge.
\end{lemma}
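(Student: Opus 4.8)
The plan is to combine a global count with a local structural argument. For the count: a bad triangulation is $6$-regular (Lemma~\ref{l:6regular2reflex}), so it has $n$ vertices, $3n$ edges, and $2n$ triangular faces. By Lemma~\ref{lem:green} every vertex is incident to exactly two flippable edges, so summing over vertices there are exactly $n$ flippable edges. Each flippable edge lies on two triangles, so there are exactly $2n$ incidences between flippable edges and triangles — the same as the number of triangles. Hence the average number of flippable edges per triangle is exactly one, and it suffices to show that no triangle has two (or more) flippable edges. I would then reformulate this bound locally. Any two edges of a triangle share a vertex, so if a triangle $T=xyz$ had two flippable edges, say $xy$ and $xz$, then by Lemma~\ref{lem:green} these are precisely the two flippable edges incident to $x$, and since the face $T$ lies between them they are consecutive in the rotation at $x$. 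Conversely, if the two flippable edges $vu,vw$ at a vertex $v$ are consecutive, the face between them is the triangle $uvw$, which then has two flippable edges. So the claim is equivalent to the statement that \emph{in a bad triangulation the two flippable edges at every vertex are non-consecutive in the rotation}.

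To prove this I would use the dictionary between edge directions and reflex vertices of links: a non-flippable edge $uw$ is directed $u\to w$ exactly when $u$ is a reflex vertex of the link of $w$, while $uw$ is flippable exactly when each endpoint is a convex vertex of the other's link. Consequently the two incoming edges at $v$ are the two spokes from $v$ to the (exactly two, by Lemma~\ref{l:6regular2reflex}) reflex vertices of the link of $v$; by Lemmas~\ref{lem:nomice} and~\ref{lem:green} these reflex vertices are either adjacent (the link of $v$ is a cat) or separated by a single convex neighbor (the link of $v$ is a dog). The argument inside the proof of Lemma~\ref{l:6regular2reflex} also shows that the two outgoing edges at $v$ are consecutive. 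Now classify the six rotation slots at $v$ by type $I$, $O$, $F$ (incoming, outgoing, flippable). A short case check over the cyclic words on $\{I,I,O,O,F,F\}$ in which the two $O$'s are consecutive and the two $I$'s are either consecutive or separated by one slot shows: if the link of $v$ is a dog then the two $F$'s are automatically non-consecutive; and if the link of $v$ is a cat then the two $F$'s are non-consecutive if and only if the rotation word is $I\,I\,F\,O\,O\,F$ — equivalently, if and only if the two neighbors of $v$ flanking its reflex pair are convex vertices of both the link of $v$ and of each other's incident links, so that the two flanking spokes are flippable rather than outgoing.

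So everything reduces to the cat case: if the link of $v$ is a cat with reflex vertices $w_1,w_2$ in rotation order $w_1,\dots,w_6$, I must show that $v$ is a convex vertex of the link of $w_3$ and of the link of $w_6$. Suppose not, say $v$ is reflex in the link of $w_3$. Then the two neighbors of $v$ in that link are $w_2$ and $w_4$, so the two corners of $v$ in the faces $vw_2w_3$ and $vw_3w_4$ sum to more than $\pi$. On the other hand, reflexivity of $w_1$ and $w_2$ in the cat link of $v$, convexity of $w_3,\dots,w_6$ there, and the fact that the six corner angles at $v$ sum to $2\pi$ translate — via the angle sums of the six faces around $v$ — into a system of inequalities among exactly these corners. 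I expect the main obstacle to lie here: squeezing enough quantitative information out of ``the link of $v$ is a realizable cat'' (using the precise geometry of cats analyzed for Figure~\ref{F:catdog-anatomy}, and very possibly the fact that a cat is a \emph{bad} hexagon, i.e.\ has a degenerate visibility kernel) to contradict the inequality above. Once the cat case is closed, the local reformulation and the opening count together finish the proof.
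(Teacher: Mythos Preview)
Your overall strategy matches the paper's exactly: both arguments use Euler's formula and Lemma~\ref{lem:green} to get the average of one flippable edge per face, and both reduce ``at most one'' to the statement that the two flippable spokes at each vertex are never consecutive in the rotation.

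Where you diverge is in how you establish that local statement. The paper does not enumerate $I/O/F$ words at all; it simply asserts, with reference to Figure~\ref{F:catdog-anatomy}, that the two flippable spokes are separated by two non-flippable spokes when the link is a cat and by one when the link is a dog, and concludes immediately. In other words, the paper treats the precise positions of the flippable spokes inside a cat or dog as a geometric fact read off from the picture. Your combinatorial case analysis recovers the dog case cleanly and isolates the one genuine issue in the cat case: with $I$'s at $w_1,w_2$ and $O$'s consecutive, you must rule out $O$'s at $(w_3,w_4)$ or $(w_5,w_6)$, equivalently show $v$ is convex in the links of the two link vertices $w_3,w_6$ flanking the reflex pair.

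You are right that this is the crux, and you are also right that the badness of the hexagon is what is needed; the statement is false for a merely ``cat-shaped'' hexagon that is not bad. Your proposed route through raw angle inequalities at $v$ is harder than necessary, though. The cleaner formulation is the one you almost reach: $v$ is reflex in the link of $w_3$ iff $v$ lies in the triangle $w_2w_3w_4$ (the ear of the hexagon cut off by the chord $w_2w_4$). So what you actually need is that the visibility kernel of a bad cat is disjoint from that ear, and symmetrically from $\triangle w_5w_6w_1$. This is a statement purely about the hexagon and its kernel, with $v$ eliminated, and it is what the paper's figure is encoding. Proving it directly from ``no vertex lies in the kernel'' is a short visibility argument once set up this way, and is much more tractable than chasing the six corner angles~$\alpha_i$.
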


\begin{proof}
Fix a bad triangulation $\Gam$ with $n$ vertices.  Euler's formula implies that $\Gam$ has $3n$ edges and $2n$ triangular faces. Lemma~\ref{lem:green} implies that $\Gam$ has exactly $n$ flippable edges, so the average number of flippable edges per triangle is exactly $1$.

The flippable edges incident to any vertex $v$ are separated in cyclic order around $v$ by two non-flippable edges if the link of $v$ is a cat, or by one non-flippable edge if the link of $v$ is a dog.  Thus, two flippable edges never appear consecutively around a common vertex.  It follows that every triangle in $\Gam$ is incident to at most one flippable edge.
\end{proof}

\begin{lemma}
\label{L:not-all-dogs}
Every bad triangulation contains a cat.
\end{lemma}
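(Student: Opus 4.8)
The plan is to prove the statement by contradiction: assume $\Gam$ is a bad triangulation in which every vertex link is a dog, and derive a contradiction. The first task is to reconstruct the complete local picture around an arbitrary vertex $v$. By Lemma~\ref{lem:green} exactly two of the six edges at $v$ are incoming, two outgoing, and two flippable; and since $\mathrm{link}(v)$ is a dog, the two incoming edges — which point to the two reflex vertices of $\mathrm{link}(v)$ — are separated on the short side by the single spoke $v\,\emph{nose}(v)$. Combining this with the fact that every triangle at $v$ is incident to exactly one flippable edge (the lemma immediately preceding this one), I can propagate the flippability constraint triangle by triangle around $v$. This should pin the arrangement down to two mirror-image configurations, one for each facing of the dog: the two reflex spokes are incoming, $v\,\emph{nose}(v)$ is flippable, $v\,\emph{nape}(v)$ is outgoing, and of the two ``cheek'' spokes one is flippable and the other outgoing, with the assignment determined by whether the dog is right- or left-facing; the two flippable boundary edges of the dog are likewise forced. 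The consequences I would carry forward are that $v\,\emph{nape}(v)$ is always outgoing (so $v$ is reflex in $\mathrm{link}(\emph{nape}(v))$) and that $v\,\emph{nose}(v)$ is always flippable.

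Next I would use the compatibility between adjacent links to trace a cyclic structure across the torus. An edge that is a flippable spoke at one endpoint must appear as a flippable boundary edge in each of the two links where it is a boundary edge, and a directed boundary edge in one link must be a spoke with a matching direction at another; threading these constraints together should show that the structure closes up only along cycles that turn consistently in one direction. Concretely, I expect to build a nonempty cycle either from the outgoing edges $v \to \emph{nape}(v)$, or from the flippable edges (which by Lemma~\ref{lem:green} already form a disjoint union of $2$-regular cycles covering $V(\Gam)$), in such a way that at every vertex along it the turn has a fixed sign. The contradiction is then the turning-angle identity used in the paper's overview: such a cycle has nonzero total turning angle, yet it must be noncontractible (a consistently turning cycle cannot bound a disk), and every noncontractible cycle on the flat torus has total turning angle zero~\cite{r-let-59}. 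Alternatively the global obstruction could take the form of a parity mismatch between the numbers of right- and left-facing dogs encountered along a flippable cycle.

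The main obstacle is the first step, and in particular the bookkeeping of facing directions. The combinatorial reflex/convex pattern of a dog is itself symmetric, so chirality — and hence which flippability configuration actually occurs at $v$ — is a genuinely geometric feature, encoded by which reflex vertex is $\emph{mouth}(v)$; keeping that orientation data straight while propagating the flippable-edge constraints around each vertex, and then verifying that the configurations in the links of two adjacent vertices are forced to agree, is the delicate part. Once this local jigsaw is assembled and shown to fit together globally only along consistently turning cycles, the turning-angle (or parity) contradiction follows immediately, and we conclude that not every link can be a dog, i.e.\ every bad triangulation contains a cat.
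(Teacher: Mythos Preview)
Your plan takes a genuinely different route from the paper, and the step you yourself flag as ``the main obstacle'' is where it is incomplete. The paper's proof is entirely local and never touches a turning-angle identity: it shows that if $\mathrm{link}(u)$ is a dog, then $\mathrm{link}(\emph{nose}(u))$ is a cat. First it extracts three orientation constraints from the \emph{badness} of the hexagon --- for instance, the triple $\emph{ear}(u),\emph{eye}(u),\emph{mouth}(u)$ must be oriented counterclockwise, since otherwise $\emph{mouth}(u)$ would lie in the visibility kernel. Then it supposes $v=\emph{nose}(u)$ also centers a dog; the flippable edge $uv$ forces $u\in\{\emph{nose}(v),\emph{chin}(v)\}$, and in each of four sub-cases the orientation constraints from the two overlapping dogs force two faces of $\Gam$ to intersect. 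So ``dogs only sniff cats,'' which is strictly stronger than what you set out to prove.

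The global turning-angle strategy you sketch is what the paper uses for Theorem~\ref{thm:nobadtriangulations}, and there the consistently-turning cycle is built from \emph{cats}; the present lemma is exactly what supplies a cat to start from. The cycles that dogs themselves produce are \emph{straight}: following ear--mouth edges (the dog-chain lemma) closes up with zero turning angle and gives no contradiction. Your alternative candidates --- flippable cycles and $v\mapsto\emph{nape}(v)$ walks --- do not obviously turn with a fixed sign either, because the sign at each vertex depends on the facing of the dog there, and the flippability/in--out data you assemble do not by themselves determine that facing. The left/right distinction (which reflex vertex is $\emph{mouth}(v)$) is genuinely geometric, and it is precisely this orientation information that powers the paper's argument. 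Your combinatorial jigsaw is correct as far as it goes, but to close the loop you would have to feed in the same orientation constraints the paper uses --- at which point the two-dog overlap argument is both shorter and yields a sharper conclusion.
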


\begin{proof}
Let $\Gam$ be a triangulation, and let $u$ be any vertex in $\Gam$ whose link is a dog.  We argue that the link of the nose of $u$ must be a cat.  (Mnemonically, dogs only sniff cats.)

Without loss of generality, assume that the link of $u$ is facing right, so the triple $\emph{chin}(u), \emph{nape}(u),\allowbreak \emph{ear}(u)$ is oriented clockwise, as shown in Figure \ref{F:dogangles}.  The fact that the link of $u$ is a \emph{bad} hexagon implies several orientation constraints on its vertices:
\begin{itemize}\itemsep2pt
\item
The triple $\emph{ear}(u), \emph{eye}(u), \emph{mouth}(u)$ is oriented counterclockwise; otherwise, $\emph{mouth}(u)$ could see the entire dog.
\item
The triple $\emph{eye}(u), \emph{mouth}(u), \emph{chin}(u)$ is oriented counterclockwise; otherwise, $\emph{eye}(u)$ could see the entire dog.
\item
Finally, the triple $\emph{nose}(u), \emph{mouth}(u), \emph{nape}(u)$ is oriented counterclockwise; otherwise, $\emph{nape}(u)$ could see the entire dog.
\end{itemize}

\begin{figure}[ht]
\centering
\includegraphics[scale=0.5]{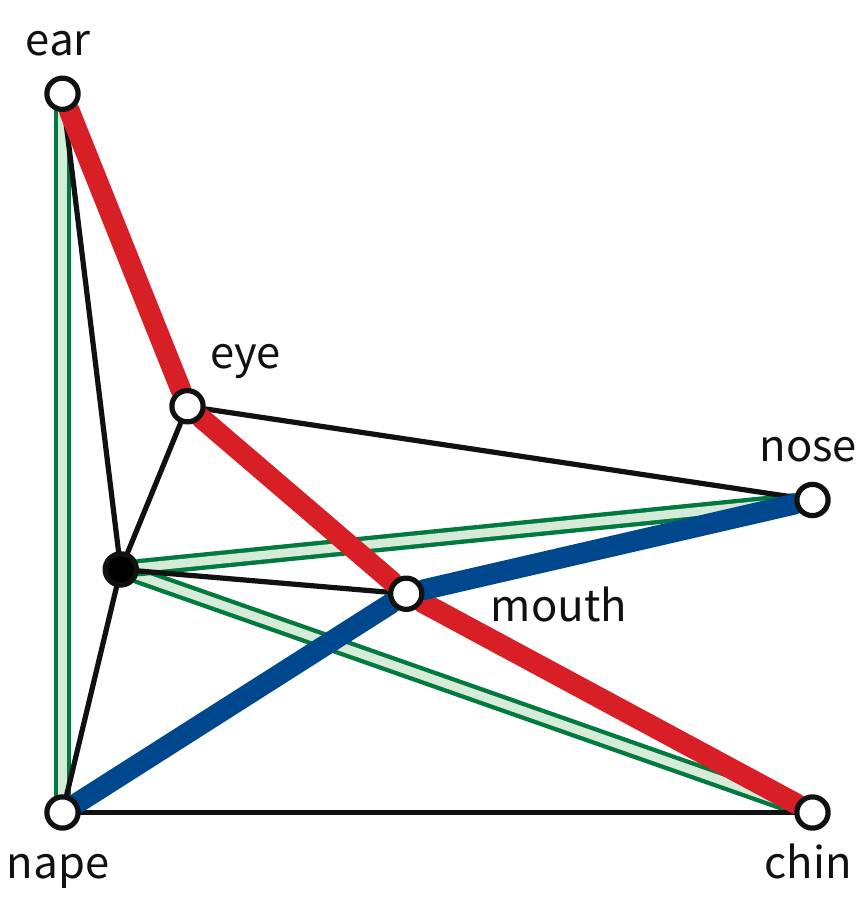}
\caption{Concave chains inside any dog.}
\label{F:dogangles}
\end{figure}

Now suppose for the sake of argument that the link of $v = \emph{nose}(u)$ is a dog.  We label the other vertices of the link of $u$ as shown on the top row of Figure \ref{F:got-no-nose}; in particular, $x = \emph{eye}(u)$ and $y = \emph{mouth}(u)$.  Because the flippable edge $uv$ is inside the link of $v$, either $u = \emph{nose}(v)$ or $u = \emph{chin}(v)$; each of these cases admits two subcases.  The four cases are illustrated schematically in the columns of Figure \ref{F:got-no-nose}.  

\begin{itemize}\itemsep3pt
\item
Suppose $u = \emph{nose}(v)$ and $x = \emph{mouth}(u)$, and therefore $y = \emph{eye}(u)$.  Let $w' = \emph{chin}(v)$.  The triple $w', x, y$ must be oriented clockwise; our earlier analysis implies that $w, x, y$ is oriented counterclockwise.  It follows that  triangles $w'xv$ and $wxu$ overlap, which is impossible.
\item
Suppose $u = \emph{nose}(v)$ and $x = \emph{eye}(u)$, and therefore $y = \emph{mouth}(u)$.  Let $w' = \emph{ear}(v)$.  The triple $w', x, y$ must be oriented clockwise; our earlier analysis implies that $w, x, y$ is oriented counterclockwise.  It follows that triangles $w'xv$ and $wxu$ overlap, which is impossible.
\item
Suppose $u = \emph{chin}(v)$ and $y = \emph{nape}(v)$.  Let $w' = \emph{nose}(v)$.  The triple $w', x, y$ is oriented clockwise; our earlier analysis implies that $w,y,z$ is oriented counterclockwise.  It follows that triangles $w'xv$ and $wxu$ overlap, which is impossible.
\item
Finally, suppose $u = \emph{chin}(v)$ and $y = \emph{mouth}(v)$.  Let $z' = \emph{nose}(v)$.  The triple $x, y, z'$ is oriented clockwise; our earlier analysis implies that $x,y,z$ is oriented counterclockwise.   It follows that triangles $z'yv$ and $zyu$ overlap, which is impossible.
\end{itemize}
In all four cases, we derive a contradiction.  We conclude that the link of $\emph{nose}(u)$ is actually a cat.
\end{proof}

\begin{figure}[ht]
\centering
\includegraphics[width=0.9\textwidth]{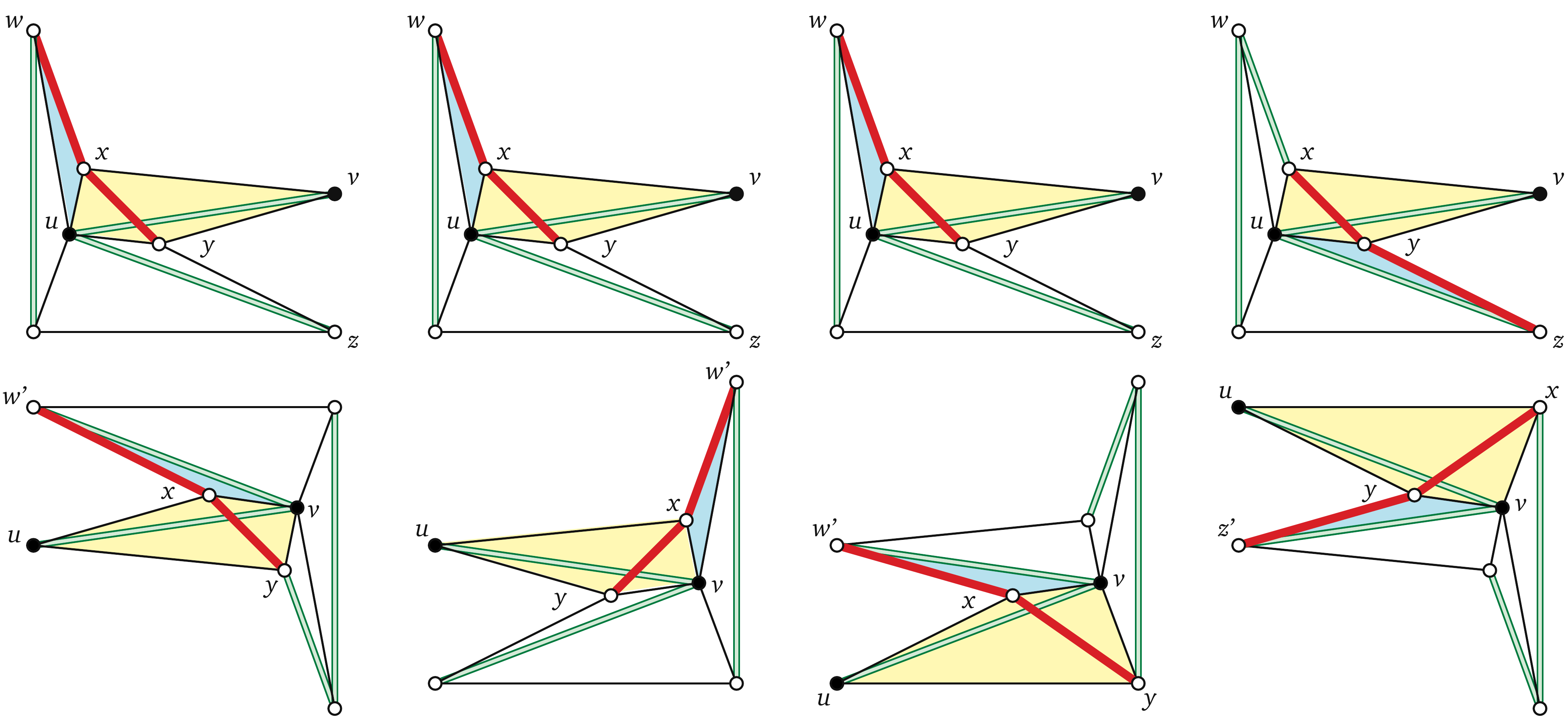}
\caption{A dog's nose ($v$) cannot be the center of another dog.}
\label{F:got-no-nose}
\end{figure}

Every 6-regular triangulation of the torus is isotopic to the quotient of the regular equilateral-triangle tiling of the plane by a lattice of translations \cite{a-cermt-73,n-ufetg-83,bk-emt-08}.  We analyze the patterns of cats and dogs in bad triangulations by analyzing their images in this reference triangulation, and in particular, by studying the induced annotations of edges, as illustrated in Figure \ref{F:catdog-reference}.

\begin{figure}[ht]
\centering
\includegraphics[scale=0.6]{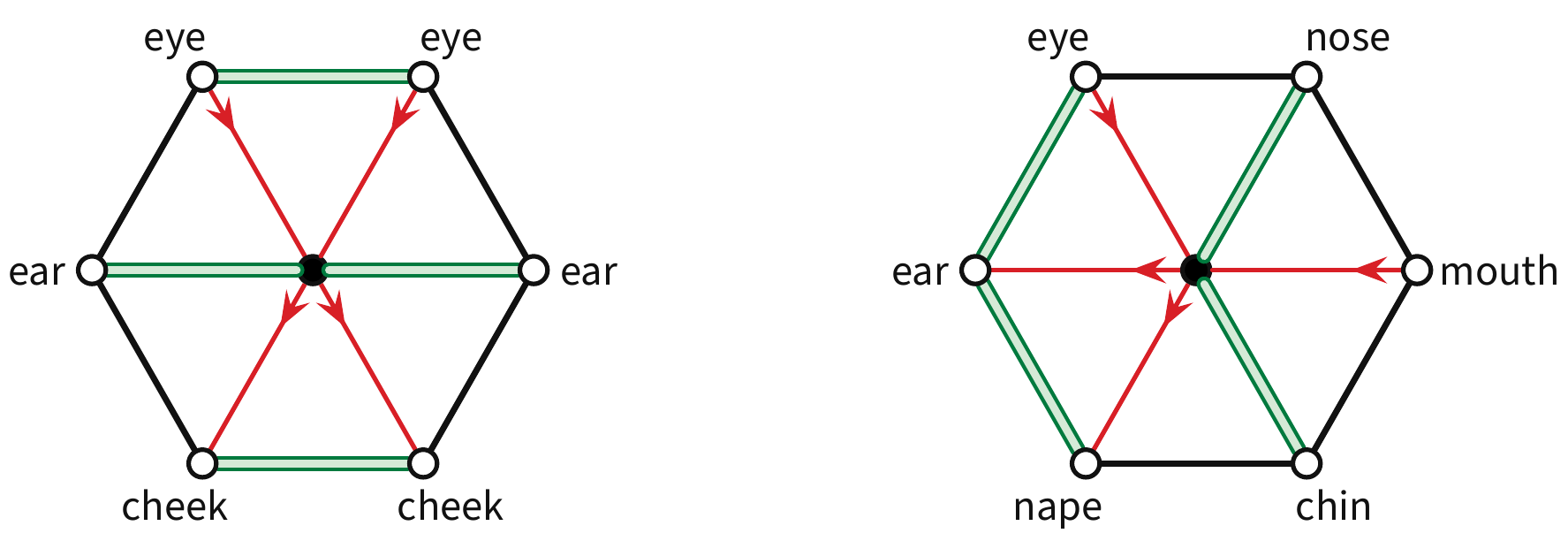}
\caption{Feline and (right-facing) canine reference anatomy; compare with Figure \ref{F:catdog-anatomy}.}
\label{F:catdog-reference}
\end{figure}

A cycle in a 6-regular toroidal triangulation is \EMPH{straight} if it corresponds to a closed geodesic in the corresponding lattice triangulation.  Every straight cycle is non-contractible.

Let $C$ be a simple polygonal (i.e., piecewise geodesic) cycle, arbitrarily directed, and let $v$ be any vertex of $C$. Let $\widetilde{v}$ be any lift of $v$, and let $\widetilde{u}\widetilde{v}$ and $\widetilde{v}\widetilde{w}$ be lifts of the edges of $C$ incident to $v$. The \EMPH{turning angle} of $C$ at $u$ is the signed counterclockwise angle, strictly between $-\pi$ and $\pi$, between the vectors $\arc{\widetilde{u}}{\widetilde{v}}$ and $\arc{\widetilde{v}}{\widetilde{w}}$.  In other words, when walking along the cycle $C$ in the indicated direction, the turning angle is the angle one turns left at~$v$ (or right if the angle is negative).  The \EMPH{total turning angle} of $C$ is the sum of the turning angles of the vertices of $C$.

\begin{lemma}
\label{L:turn}
Every simple non-contractible cycle on the flat torus has total turning angle zero.
\end{lemma}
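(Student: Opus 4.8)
The plan is to lift $C$ to the universal cover, transport the question to a flat cylinder, and finish with a single application of Gauss--Bonnet. Let $C$ be a simple non-contractible polygonal cycle on $\Torus$, let $\widetilde C$ be the component of $\pi^{-1}(C)$ through a chosen lift of one of its vertices, and let $\tau\in\Z^2\setminus\{(0,0)\}$ be the primitive vector generating the stabilizer of $\widetilde C$ in $\Z^2$. Then $\widetilde C$ is a simple (embedded, since $C$ is simple) bi-infinite polygonal curve in $\R^2$ invariant under translation by $\tau$, with $\widetilde C/\langle\tau\rangle = C$; here $\tau\neq(0,0)$ exactly because $C$ is non-contractible. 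By the paper's definition of turning angles via lifts, the total turning angle of $C$ equals the sum of the turning angles at the vertices of $\widetilde C$ in one fundamental period.

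Next I would pass to the flat cylinder $\mathcal C := \R^2/\langle\tau\rangle$ with covering projection $\varpi\colon\R^2\to\mathcal C$. The image $\bar C := \varpi(\widetilde C)$ is a simple closed polygonal curve in $\mathcal C$ that is \emph{essential} (non-contractible), because $\widetilde C$ is non-compact, and $\bar C$ has the same total turning angle as $C$ (turning angles are local and preserved by local isometries). A fundamental period of $\widetilde C$ is a compact arc whose orthogonal projection onto the line through the origin perpendicular to $\tau$ is a bounded interval; since $\tau$ projects to the origin, the \emph{whole} curve $\widetilde C$ projects into that same bounded interval, so $\bar C$ lies in a bounded neighborhood of the core geodesic of $\mathcal C$. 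Hence I can choose a straight line $L'$ parallel to $\tau$, far enough from $\widetilde C$ that $\bar D := \varpi(L')$ is a closed geodesic of $\mathcal C$ disjoint from $\bar C$; the two essential circles $\bar C,\bar D$ cobound a compact annulus $\Omega\subset\mathcal C$ with piecewise-geodesic boundary $\partial\Omega = \bar C\cup\bar D$.

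Finally I would apply the Gauss--Bonnet theorem to $\Omega$ with the flat metric inherited from $\mathcal C$: every boundary arc is a geodesic (vanishing geodesic curvature), the Gaussian curvature is identically zero, and $\chi(\Omega)=0$, so Gauss--Bonnet reduces to the assertion that the sum of the exterior angles at the corners of $\partial\Omega$ is zero. The only corners are the vertices of $\bar C$, since $\bar D$ is smooth, and the exterior angle at each such corner equals, up to one uniform sign determined by the boundary orientation of $\Omega$, the turning angle of $C$ at the corresponding vertex. Therefore the total turning angle of $C$ vanishes.

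I expect the real work to be the bookkeeping of the middle step rather than Gauss--Bonnet itself: one must check that $\widetilde C$ is embedded and confined to a bounded slab about $\R\tau$ (so that the reference geodesic circle $\bar D$ exists and misses $\bar C$), and that the quotient to the cylinder preserves turning angles exactly. It is worth remarking why the naive alternative fails: cutting $\Torus$ along $C$ yields a flat annulus whose two boundary circles are both copies of $C$ but are traversed with \emph{opposite} induced orientations, so their turning angles cancel and direct Gauss--Bonnet only gives the vacuous identity $0=0$. Comparing $\bar C$ against the genuinely straight reference loop $\bar D$ is precisely what breaks this symmetry and extracts the result.
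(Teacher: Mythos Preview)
Your argument is correct and self-contained. The paper takes a different, much shorter route: it simply invokes Reinhart's classical results that the total turning angle is an isotopy invariant of closed curves on the flat torus, together with the fact that every simple non-contractible cycle is isotopic to a closed geodesic (whose turning angle is manifestly zero). Your approach unpacks this into an explicit geometric computation: lift to the infinite cylinder $\R^2/\langle\tau\rangle$, trap $\bar C$ between itself and a straight reference circle~$\bar D$, and read off the vanishing of the exterior-angle sum from Gauss--Bonnet on the enclosed annulus. The paper's route is quicker but relies on an external citation; yours is longer but elementary and fully internal, and your remark about why cutting $\Torus$ along $C$ directly gives only the tautology $0=0$ is a nice diagnostic of where the content actually lies. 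One small point you leave implicit: that a \emph{simple} non-contractible curve on $\Torus$ represents a \emph{primitive} class in $\Z^2$, which is what guarantees the stabilizer of $\widetilde C$ is generated by a single primitive $\tau$ and hence that $\bar C$ is an embedded circle in the cylinder rather than a multiple cover; this is standard, but worth a sentence.
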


Lemma \ref{L:turn} follows immediately from classical results of Reinhart \cite{r-let-59}.  In short, the total turning angle (which Reinhart calls the “winding number”) is an isotopy invariant of closed curves, and every simple non-contractible cycle on the flat torus is isotopic to a closed geodesic.  Lemma \ref{L:turn} implies in particular that every \emph{straight} cycle has total turning angle zero.

\begin{lemma}
If a bad triangulation contains one dog, it contains a straight cycle of dogs.
\end{lemma}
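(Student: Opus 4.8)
The plan is to show that starting from any dog, we can always walk to an adjacent dog along a ``straight'' edge in the reference triangulation, and that this walk closes up into a straight cycle. First, I would set up the local picture: fix a vertex $u$ whose link is a dog, and look at its image in the reference equilateral-triangle tiling, together with the induced edge annotations (directions on non-flippable edges, and flippable edges) as in Figure~\ref{F:catdog-reference}. By Lemma~\ref{L:not-all-dogs}, the nose of $u$ has a cat as its link, so not every neighbor of $u$ is a dog; the key is to identify which \emph{direction} from $u$ leads to another dog. The annotation of $u$'s link (two incoming, two outgoing, two flippable edges, arranged in the dog pattern) forces a specific local configuration in the reference triangulation, and I would read off from this configuration that exactly two of the six neighbors of $u$ — the two lying along one particular lattice direction, namely the direction ``through'' the dog along its spine from nape toward the vertex opposite the nose region — must also have dog links, while the nose neighbor is a cat.

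The main work is the propagation step: if $u$ and its neighbor $u'$ along lattice direction $\vec{\delta}$ are both dogs, then the next vertex $u''=u'+\vec\delta$ along the same lattice direction is also a dog, with the same facing (left or right). This is a local compatibility argument of the same flavor as Lemma~\ref{L:not-all-dogs}: the shared edge $uu'$ carries a fixed annotation, which is seen from both $u$'s side and $u'$'s side; matching the dog patterns on the two sides pins down how the annotations extend past $u'$, and the only consistent possibility makes the link of $u''$ a dog facing the same way. I would organize this as a short case analysis on whether $uu'$ is flippable or non-flippable and on its direction, using the orientation constraints on concave chains inside a dog (the three counterclockwise triples from the proof of Lemma~\ref{L:not-all-dogs}) to rule out the alternatives. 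Iterating, the sequence $u, u+\vec\delta, u+2\vec\delta,\dots$ is an infinite sequence of dogs in the universal cover, all with the same facing; projecting to the torus, since there are only finitely many vertices, this sequence is eventually periodic, and because each step advances by the same lattice vector $\vec\delta$ it is in fact periodic from the start and traces out a straight cycle (a closed geodesic in the lattice triangulation) consisting entirely of dogs.

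The step I expect to be the main obstacle is verifying that the ``straight-ahead'' neighbor of a dog is forced to be a dog — i.e., ruling out that it could be a cat. Lemma~\ref{L:not-all-dogs} only tells us dogs sniff cats at the \emph{nose}; here I need the complementary statement that along the spine direction a dog cannot be adjacent to a cat, which requires checking that the cat's annotation pattern (reflex vertices adjacent, two non-flippable edges separating the flippable ones) is incompatible with the dog's annotation across their shared edge in the relevant direction. I would handle this by the same overlapping-triangles contradiction used in Lemma~\ref{L:not-all-dogs}: assuming the neighbor is a cat forces two reference triangles incident to the shared edge to be oriented inconsistently, contradicting that both are faces of a proper (equilateral, hence consistently oriented) reference triangulation. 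Once both the ``dog implies dog ahead'' and the periodicity observations are in place, the lemma follows.
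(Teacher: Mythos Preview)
Your overall strategy---propagate from a dog to an adjacent dog along a fixed lattice direction and close up by finiteness---is the same as the paper's.  But two things go wrong in the details.

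First, the direction.  The propagation axis is ear--mouth, not the ``nape'' axis you gesture at.  The nose--nape direction cannot work: Lemma~\ref{L:not-all-dogs} tells you the nose is always a cat, so you cannot have dogs on both ends of that axis.  What the paper shows is that the \emph{mouth} of a dog is the center of another dog (whose ear is the original center), and symmetrically the ear of a dog is the center of another dog (whose mouth is the original center).  So the straight cycle runs through the ear--mouth axis.

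Second, and more importantly, the technique.  You propose orientation constraints and overlapping-triangle contradictions in the style of Lemma~\ref{L:not-all-dogs}, but none of that machinery is needed here; the argument is purely combinatorial in the flippable-edge annotations.  Concretely: let $v_1$ be a dog with nose $u_1$, mouth $v_2$, chin $w_1$.  The dog pattern makes $u_1v_1$ and $v_1w_1$ flippable, and since every triangle has exactly one flippable edge, the edges $u_1v_2$, $v_1v_2$, $w_1v_2$ are all non-flippable.  These three edges are consecutive around $v_2$ (they bound the two faces $u_1v_1v_2$ and $v_1v_2w_1$), so the two flippable edges at $v_2$ cannot be opposite.  But a cat is exactly the case where the two flippable edges are opposite; hence $v_2$ is a dog.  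A symmetric argument handles the ear side.  That is the entire proof---no geometry beyond what is already encoded in the flippable annotation, and no need to separately ``rule out cat'' via a geometric contradiction.
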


\begin{proof}
Let $\Gam$ be a bad triangulation and let  $v_1$ be any vertex of $\Gam$ whose link is a dog.  Label the vertices of $v$'s link as shown in Figure \ref{F:dogchain}; for example, $u_1 = \emph{nose}(v_1)$, $v_2 = \emph{mouth}(v_1)$, and $w_1 = \emph{chin}(v_1)$.

Edges $u_1v_1$ and $v_1w_1$ are flippable, and therefore edges $v_1v_2$, $u_1v_2$, and $w_1v_2$ are not flippable.  Thus, no opposite pair of edges incident to $v_2$ are both flippable.  It follows that the link of $v_2$ is a dog, whose ear vertex is~$v_1$.

Edges $u_0v_1$, $v_0v_1$, and $v_1w_0$ are not flippable, so edges $u_0v_0$ and $v_0w_0$ must be flippable.  Thus, the link of $v_0$ is also a dog, whose mouth vertex is $v_1$.

\begin{figure}[ht]
\centering
\includegraphics[scale=0.5]{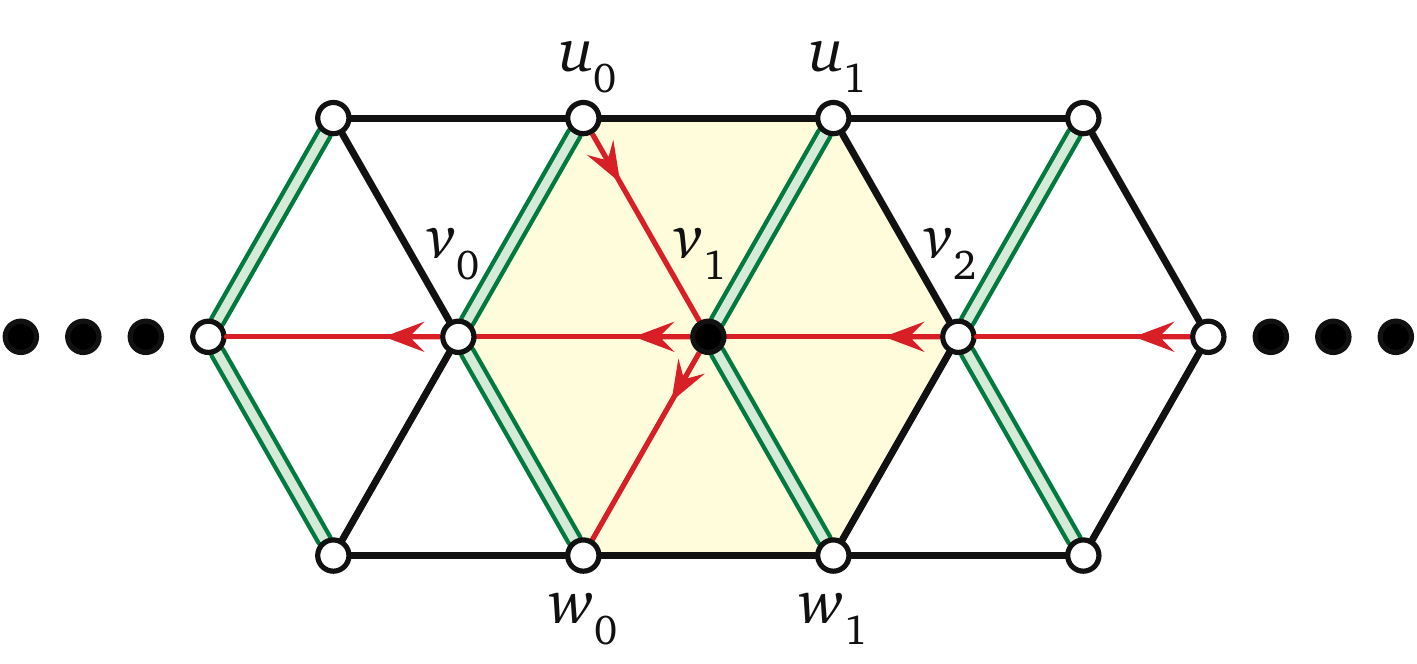}
\caption{One dog induces a straight cycle of dogs.}
\label{F:dogchain}
\end{figure}

Continuing by induction in both directions, we find a bidirectional sequence $\dots, v_{-1}, v_0, v_1, v_2, \dots$ of vertices whose links are dogs, where the center $v_i$ of each dog is the mouth of the previous dog and the ear of the next dog.  Because $\Gam$ is finite, this sequence must eventually repeat, forming a straight cycle.
\end{proof}

Finally, we are ready to prove the main theorem of this section.

\begin{theorem}\label{thm:nobadtriangulations}
Bad triangulations do not exist.
\end{theorem}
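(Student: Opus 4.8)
Suppose for the sake of contradiction that $\Gam$ is a bad triangulation. By Lemma~\ref{l:6regular2reflex}, Lemma~\ref{lem:green}, and the lemma that every triangle of a bad triangulation is incident to exactly one flippable edge, $\Gam$ is $6$-regular and the link of every vertex is either a cat or a dog; by Lemma~\ref{L:not-all-dogs}, at least one link is a cat. The plan is to exhibit in $\Gam$ a \emph{simple straight} cycle $C$ along which the cycle turns in one consistent direction at \emph{every} vertex, so that its total turning angle is bounded away from $0$; since $C$ is straight it is non-contractible, so this contradicts Lemma~\ref{L:turn}.

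First I would produce such a cycle made of cats. If $\Gam$ contains no dog, every link is a cat, and the two flippable edges incident to a cat vertex are \emph{opposite} in its cyclic order of edges (this is exactly what prevents two flippable edges from being consecutive around a cat in the proof that every triangle has one flippable edge). Hence ``enter along a flippable edge, leave along the opposite one'' is a well-defined rule on flippable edges; since each vertex has a unique forward and a unique backward continuation, iterating it traces a simple cycle, and because opposite edges correspond to a straight line in the reference equilateral triangulation of Figure~\ref{F:catdog-reference}, that cycle $C$ is straight, and all of its vertices have cat links. If instead $\Gam$ contains a dog, the preceding lemma already gives a straight cycle of dogs; since dogs only sniff cats (Lemma~\ref{L:not-all-dogs}), the nose of each dog on that cycle has a cat link, and an argument of the same flavor as the dog-chain lemma shows that these noses form a simple straight cycle $C$ of cats, running parallel to the dog cycle. (Alternatively one may argue directly with the straight cycle of dogs, using the three ``concave chain'' orientation constraints from the proof of Lemma~\ref{L:not-all-dogs} in place of the cat analysis below.)

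Next I would analyze the turning angle of $C$ at one of its vertices $v$: the two edges of $C$ at $v$ are the two flippable edges incident to the cat link of $v$. Because that link is a \emph{bad} hexagon with its two reflex vertices adjacent, the center $v$ lies in the visibility kernel, which forces $v$ strictly onto the ``concave'' side of the line through its two flippable neighbours (were $v$ on that line, a visibility argument would make the hexagon good). This makes the turn at $v$ strictly positive (or strictly negative) rather than zero; moreover, because consecutive cats on $C$ share a triangle — a common flippable edge together with one of its incident faces — their placements relative to the reference triangulation of Figure~\ref{F:catdog-reference} are consistently oriented, so the sign of the turn is the same at every vertex of $C$. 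Summing over $C$ then yields a total turning angle bounded away from $0$, contradicting Lemma~\ref{L:turn}, and so no bad triangulation can exist.

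The step I expect to be the main obstacle is this last one: converting the purely combinatorial ``bad hexagon'' data (which vertices are reflex, that the visibility kernel contains no vertex of the hexagon, that the center lies in the kernel) into a \emph{strict, consistently signed} lower bound on the geometric turning angle at each vertex of $C$. I would carry this out by fixing the reference equilateral triangulation as in Figures~\ref{F:catdog-anatomy} and~\ref{F:catdog-reference}, recording for each cat (and each species of dog) the positions of its reflex vertices and of its flippable edges there, and then checking how two tiles sharing a flippable edge must overlap — the same overlap bookkeeping used throughout Lemma~\ref{L:not-all-dogs}. A secondary point requiring care is that the ``strictness'' above genuinely uses badness: one must verify that a cat link cannot be degenerate in a way that makes the turn exactly zero, which again follows from the empty-or-vertex-free visibility kernel.
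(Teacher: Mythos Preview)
Your high-level strategy matches the paper's: build a straight (hence non-contractible) cycle of cats, show that the triple $(u_{i-1},u_i,u_{i+1})$ has the same orientation at every vertex, and invoke Lemma~\ref{L:turn}. The substantive difference is \emph{which} edges you thread the cycle along. You propose the two \emph{flippable} edges at each cat (which are indeed opposite, so this does give a straight cycle in the reference lattice). The paper instead uses \emph{unflippable} edges: in the no-dogs case it starts from any unflippable edge and walks straight, and in the dogs-present case it takes the cat cycle parallel to the straight cycle of dogs, whose edges it shows are unflippable.

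This difference matters in two places. First, your two cases are not consistent: in the dogs-present case the ``noses'' you propose are exactly the paper's $u_i$, but the edges $u_iu_{i+1}$ between consecutive noses are \emph{unflippable}, not flippable (this is forced by the flippability pattern around each dog; see the argument leading to Figure~\ref{F:catchain}). So your later claim that ``the two edges of $C$ at $v$ are the two flippable edges incident to the cat link of $v$'' fails in the dogs case, and your turning analysis does not apply there. The paper's choice of unflippable edges is what makes the two cases uniform.

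Second, and more importantly, the paper's choice makes the turning step you flag as ``the main obstacle'' essentially immediate. Along an unflippable cat cycle, one neighbor on the cycle is always an \emph{ear} of the current cat and the opposite neighbor is a \emph{cheek}; the paper then argues (with a short four-case symmetry) that the triple cheek--center--ear has a fixed orientation, and a one-line induction (``if $u_{i+1}$ is the left ear of $u_i$ then $u_{i+2}$ is the left ear of $u_{i+1}$'') propagates the sign along~$C$. Your heuristic (``the center lies strictly on the concave side of the line through its two flippable neighbours'') is not obviously true: the flippable-edge neighbours of a cat are not the reflex ears, so there is no direct reflex/visibility constraint forcing the center off that particular diagonal, and you have not supplied the overlap bookkeeping that would give the sign consistency from one cat to the next. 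In short, the strategy is right, but the specific cycle you chose makes the crucial orientation step harder rather than easier, and breaks the uniformity between the two cases; switching to unflippable edges (ear-to-cheek) fixes both issues.
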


\begin{proof}
Let $\Gam$ be a bad triangulation.  We derive a contradiction by showing that $\Gam$ contains a non-contractible cycle whose vertices all have cat links, whose edges are all non-flippable, and finally whose turning angle is non-zero, contradicting Lemma \ref{L:turn}.

Let $u_1$ be any vertex of $\Gam$ whose link is a cat; Lemma \ref{L:not-all-dogs} implies that such a vertex exists.  Label the vertices in the neighborhood of $u_1$ as shown in Figure \ref{F:catchain}.  The directions of the non-flippable edges of the cat are not relevant, so we will ignore them.  Thus, for example, $u_2$ and~$v_2$ are either the eyes of $u_1$'s link or its cheeks.

First, suppose that $\Gamma$ contains at least one dog, and therefore without loss of generality that $u_1$ is adjacent to a vertex $v_1$ whose link is a dog.  Up to symmetry, there are two cases to consider.  If the link of~$v_1$ is a dog whose mouth is $v_2$, then the link of $v_2$ is also a dog (whose ear is~$v_1$).  Conversely, if the link of $v_2$ is a dog, the ear of that dog is $v_1$, and thus, the link of $v_1$ is also a dog whose mouth is $v_2$.  The previous lemma now implies a straight cycle of dogs $D = \dots, v_0, v_1, v_2, v_3, \dots$.

\begin{figure}[ht]
\centering
\includegraphics[scale=0.5]{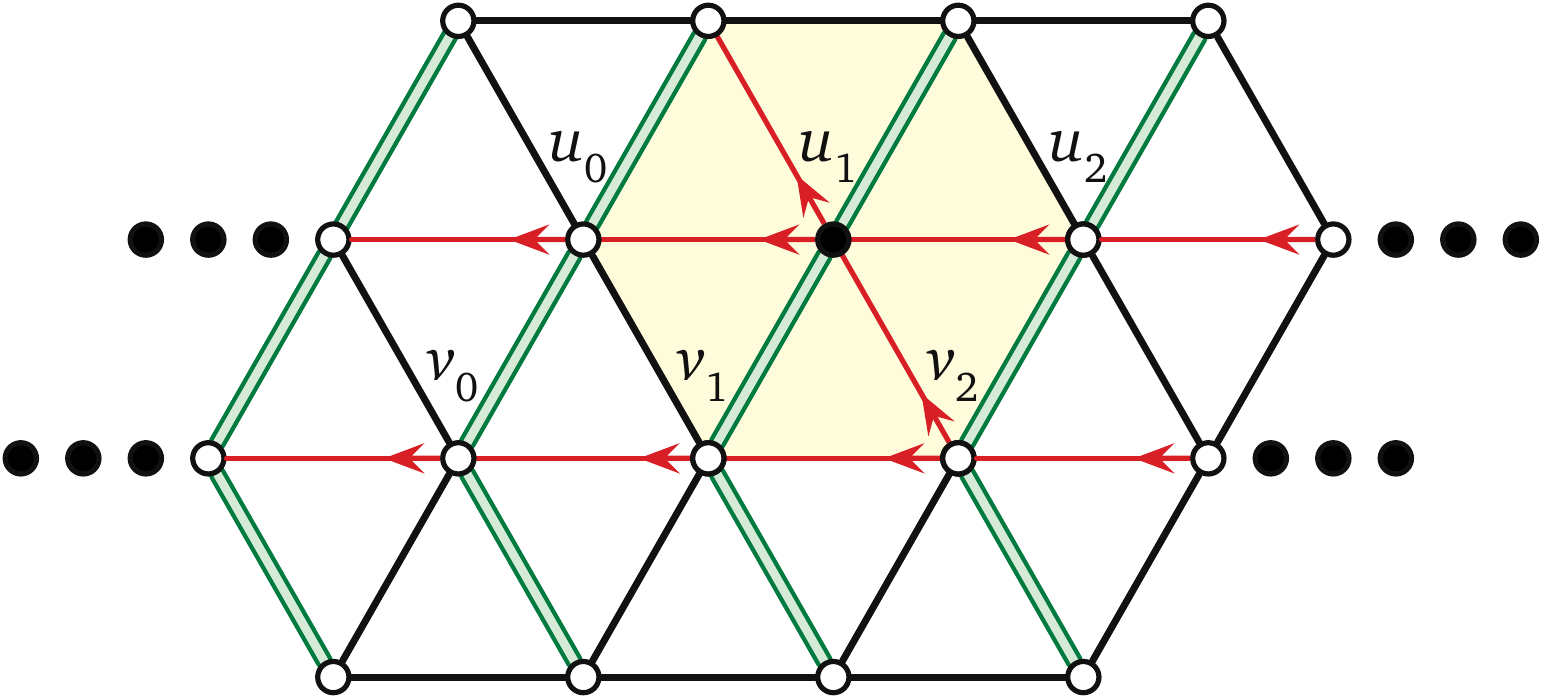}
\caption{One cat induces a straight cycle of cats.}
\label{F:catchain}
\end{figure}

Because the link of $v_1$ is a dog, edge $u_0v_0$ is flippable; it follows that the link of $u_0$ is a cat.  Similarly, because the link of $v_3$ is a dog, edge $u_3v_3$ is flippable; it follows that the link of $u_2$ is a cat.  Proceeding inductively in both directions, we find a straight cycle $C = \dots, u_0, u_1, u_2, u_3, \dots$ of vertices, parallel to the cycle $D$ of dogs, all of whose links are cats.  Every edge $u_i u_{i+1}$ in $C$ is unflippable.

On the other hand, if $\Gam$ contains no dogs, then we can construct a straight cycle $C = \dots, u_0, u_1, u_2, \dots$ of cat vertices starting with any unflippable edge $u_0u_1$.  Again, every edge $u_iu_{i+1}$ in $C$ is unflippable.

Now suppose $u_2$ is the left ear of $u_1$.  Then by induction, for every index~$i$, vertex $u_{i+1}$ is the left ear of $u_i$ and $u_{i-1}$ is the right cheek of $u_i$.  It follows that every triple $u_{i-1} u_i u_{i+1}$ is oriented clockwise, and thus the total turning angle of $C$ is negative, contradicting Lemma \ref{L:turn}.  The other three cases  similarly lead to contradictions; in all cases, every triple $u_{i-1} u_i u_{i+1}$ has the same orientation, which implies that the total turning angle of $C$ is non-zero, contradicting Lemma~\ref{L:turn}.
\end{proof}

\section{Equilibrium Pseudomorphs}
\label{sec:eqpseudo}

\def\Tutte#1#2{\textit{Eq}(#1, #2)}
\def\Unit{\mathbf{e}}

As pointed out in Section~\ref{sec:pseudomorph}, when computing a recursive pseudomorph between two triangulations~$\Gam_0$ and~$\Gam_1$, we must contend with the fact that while each triangulation must contain a good vertex, there may be no way to safely collapse the same vertex along the same edge in both triangulations.  It is relatively straightforward to adapt Cairns' approach for planar graphs to the torus, by recursively computing pseudomorphs from both $\Gam_0$ and $\Gam_1$ to a suitable intermediate triangulation; unfortunately, the resulting pseudomorph from $\Gam_0$ to $\Gam_1$ consists of an exponential number of steps.

Instead, our algorithm introduces \emph{one} intermediate triangulation isotopic to both $\Gam_0$ and $\Gam_1$, namely an equilibrium triangulation $\Gam_*$ guaranteed by Y.~Colin de Verdière's generalization of Tutte's spring embedding theorem (Theorem~\ref{Th:tutte-torus}).  In this section, we describe how to collapse an \emph{arbitrary} edge $e$ in an equilibrium triangulation to obtain a simpler equilibrium triangulation.  This operation allows us to recursively compute a pseudomorph from either $\Gam_0$ or $\Gam_1$ to $\Gam_*$ using only a linear number of steps, as described in Section~\ref{ssec:triangulationpseudomorph}. 
 
Intuitively, we continuously increase the weight of $e$ to $\infty$ and maintain the equilibrium triangulation.  We show that the resulting “spring collapse” moves all vertices of the equilibrium embedding along geodesics parallel to $e$, as shown in Figure~\ref{F:Tutte-parallel}; in particular, edge $e$ collapses to a single vertex.  It follows that straightforward linear interpolation from one equilibrium triangulation to the other is a parallel linear pseudomorph.

\begin{figure}[ht]
\centering
\includegraphics[scale=0.5]{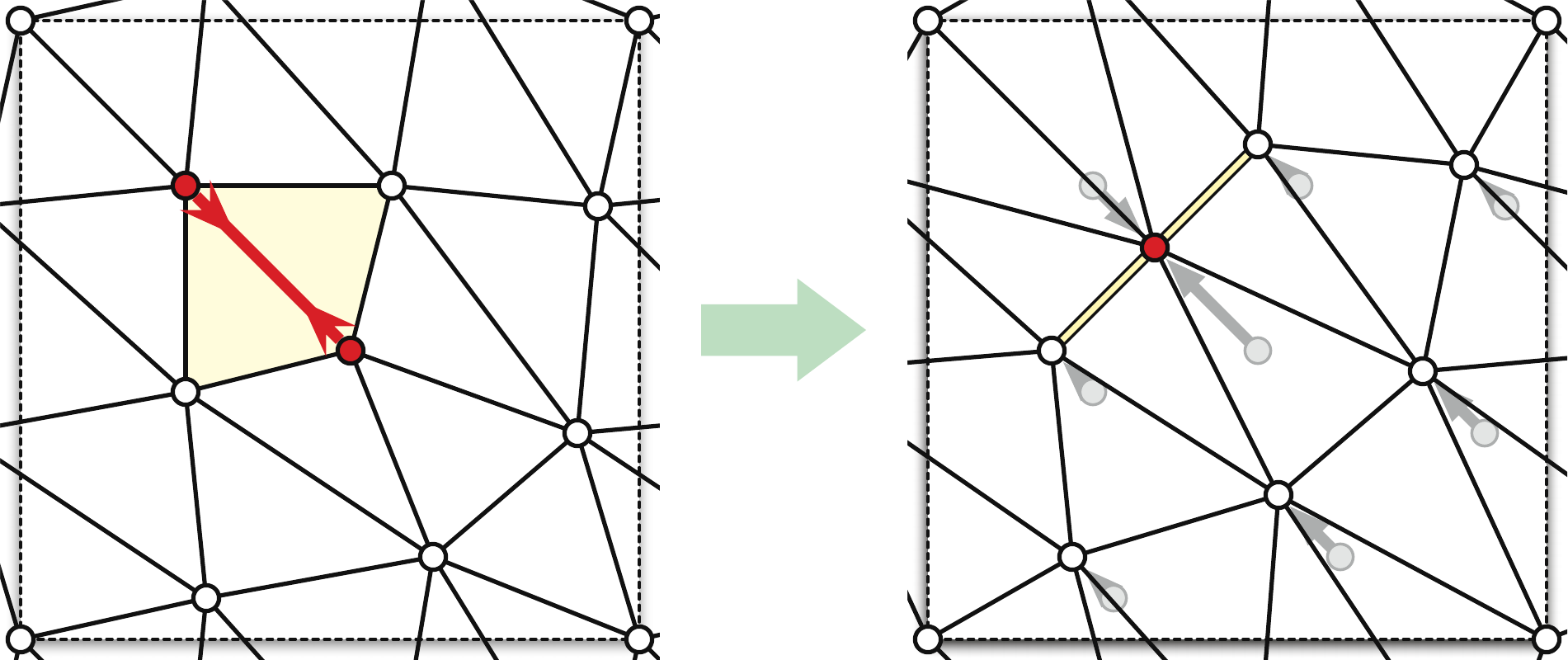}
\caption{Increasing the weight of any edge $e$ to infinity collapses $e$ and moves every vertex parallel to $e$.}
\label{F:Tutte-parallel}
\end{figure}

For any triangulation $\Gamma$ and any assignment of positive weights $\lambda(e)$ to the edges of $\Gamma$, let \EMPH{$\Tutte{\Gamma}{\lambda}$} denote the geodesic embedding isotopic to~$\Gamma$ that is in equilibrium with respect to the weight vector~$\lambda$, as guaranteed by Theorem~\ref{Th:tutte-torus}.  (At the top level of recursion, we can safely assume that $\lambda(e)=1$ for every edge $e$, but as we point out in Section \ref{SS:spring-collapse}, we cannot maintain that assumption  in deeper recursive calls.)  For each vertex $v$ of $\Gamma$, let \EMPH{$p_*(v,\lambda)$} denote the coordinates of $v$ in $\Tutte{\Gamma}{\lambda}$, obtained by solving linear system \eqref{Eq:Tutte}.  To define the embedding~$\Tutte{\Gamma}{\lambda}$ and its coordinates $p_*(v,\lambda)$ uniquely, fix $p_*(o, \lambda) = (0,0)$ for some arbitrary vertex~$o$.

\subsection{Parallel Motion}
\label{SS:parallel}

The following lemma states intuitively that changing a single edge weight $\lambda(e)$ moves each vertex of the equilibrium embedding $\Tutte{\Gamma}{\lambda}$ along a geodesic parallel to $e$.

\begin{lemma}
\label{L:tutte-parallel}
Let $\lambda$ and $\lambda'$ be arbitrary positive edge weights such that $\lambda(e) \ne \lambda'(e)$ for some edge $e$, and $\lambda(e') = \lambda'(e')$ for all edges $e' \ne e$. Let $d$ be a dart of $e$ with tail $u$ and head $v$. Then for every vertex $w$, the vector $p_*(w,\lambda') - p_*(w,\lambda)$ is parallel to $p_*(v,\lambda) - p_*(u,\lambda) + x(d)$.
\end{lemma}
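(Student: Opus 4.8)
The plan is to use that the equilibrium system \eqref{Eq:Tutte} is linear and, moreover, decouples into two identical scalar systems — one for the $x$-coordinates and one for the $y$-coordinates — each governed by a weighted graph Laplacian. Write \eqref{Eq:Tutte} as $L_\lambda\, p_* = b_\lambda$, where $L_\lambda$ is the symmetric $n\times n$ matrix with diagonal entries $(L_\lambda)_{vv} = \sum_{\Tail(d)=v}\lambda(d)$ and off-diagonal entries $(L_\lambda)_{vw} = -\sum_{d\colon v\to w}\lambda(d)$, and $b_\lambda$ is the $\R^2$-valued vector with $b_\lambda(v) = \sum_{\Tail(d)=v}\lambda(d)\,x(d)$; symmetry holds because the two darts of each edge carry equal weight. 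Because $G$ is connected, the kernel of $L_\lambda$ is exactly the constant vectors, and its image is exactly the vectors whose entries sum to $(0,0)$. Before anything else I would dispose of the case that $e$ is a loop: the two darts of a loop contribute $\lambda(d)\,x(d)$ and $-\lambda(d)\,x(d)$ to \eqref{Eq:Tutte} and cancel, so changing $\lambda(e)$ leaves \eqref{Eq:Tutte} — and hence every $p_*(w,\cdot)$ — unchanged, the displacement is identically zero, and the claim is vacuous. So from now on assume $u\neq v$.

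Let $\mu = \lambda'(e)-\lambda(e)\neq 0$, and let $\mathbf 1_u,\mathbf 1_v$ denote the standard basis vectors indexed by $u,v$. Bumping only the weight of $e$ changes the data by $L_{\lambda'} = L_\lambda + \mu\,(\mathbf 1_u-\mathbf 1_v)(\mathbf 1_u-\mathbf 1_v)^{\!\top}$ and $b_{\lambda'} = b_\lambda + \mu\,x(d)\,(\mathbf 1_u-\mathbf 1_v)$. Writing $q(w) := p_*(w,\lambda') - p_*(w,\lambda)$ and subtracting the equilibrium equation for $\lambda$ from that for $\lambda'$, the rank-one term $\mu\,(\mathbf 1_u-\mathbf 1_v)(\mathbf 1_u-\mathbf 1_v)^{\!\top} p_*(\cdot,\lambda')$ is itself a scalar multiple of $\mathbf 1_u-\mathbf 1_v$, so after rearranging we get $L_\lambda\, q = c\,(\mathbf 1_u-\mathbf 1_v)$ with $c := \mu\bigl(x(d) + p_*(v,\lambda') - p_*(u,\lambda')\bigr)\in\R^2$. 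The right-hand side is a fixed zero-sum vector scaled coordinate-wise by the constant $\R^2$-vector $c$; since also $q(o) = (0,0)$ by the normalization $p_*(o,\cdot)=(0,0)$, linearity forces $q(w) = g(w)\cdot c$ for every vertex $w$, where $g$ is the unique scalar solution of $L_\lambda g = \mathbf 1_u-\mathbf 1_v$ with $g(o)=0$. In particular every displacement vector $q(w)$ is a scalar multiple of the single vector $c$.

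It remains to check that $c$ points in the direction claimed, $\delta_\lambda := p_*(v,\lambda)-p_*(u,\lambda)+x(d)$. Setting $\delta_{\lambda'} := p_*(v,\lambda')-p_*(u,\lambda')+x(d)$ we have $c = \mu\,\delta_{\lambda'}$, and evaluating $q(w)=g(w)c$ at $w=v$ and $w=u$ gives $\delta_{\lambda'} - \delta_\lambda = q(v)-q(u) = \bigl(g(v)-g(u)\bigr)c = \mu\bigl(g(v)-g(u)\bigr)\delta_{\lambda'}$, so $\delta_\lambda = \bigl(1 - \mu(g(v)-g(u))\bigr)\,\delta_{\lambda'}$. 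Since $\Tutte{\Gamma}{\lambda}$ is a genuine geodesic embedding by Theorem~\ref{Th:tutte-torus}, edge $e$ has positive length, so $\delta_\lambda\neq(0,0)$; hence the scalar factor above is nonzero, $\delta_{\lambda'}$ is a nonzero multiple of $\delta_\lambda$, and therefore $q(w)=g(w)c = \mu g(w)\,\delta_{\lambda'}$ is parallel to $\delta_\lambda$, proving the lemma.

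The bookkeeping in the second paragraph (how $L$ and $b$ change under a single weight bump) and the Laplacian facts in the first are routine. The one genuinely delicate point is the last paragraph: the vector the computation naturally produces is parallel to the direction of $e$ in the \emph{new} embedding $\Tutte{\Gamma}{\lambda'}$, and converting this into parallelism with the direction of $e$ in the \emph{old} embedding $\Tutte{\Gamma}{\lambda}$ — the form actually used in later sections — requires the short self-consistency computation together with the non-degeneracy of $e$ supplied by Theorem~\ref{Th:tutte-torus}.
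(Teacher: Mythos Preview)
Your proof is correct and rests on the same algebraic core as the paper's --- the rank-one update $L_{\lambda'} = L_\lambda + \mu(\mathbf 1_u-\mathbf 1_v)(\mathbf 1_u-\mathbf 1_v)^{\!\top}$ of the weighted Laplacian --- but you organize the argument differently. The paper fixes a vector $\sigma$ orthogonal to $\delta_\lambda = p_*(v,\lambda)-p_*(u,\lambda)+x(d)$, projects all coordinates onto $\sigma$, and verifies directly that the \emph{old} projected coordinates $Z$ already satisfy the \emph{new} projected system $L'Z=H'$; uniqueness then gives $Z'=Z$, so no vertex moves in the $\sigma$ direction. The orthogonality of $\sigma$ to $\delta_\lambda$ is exactly what makes the rank-one correction cancel against the change in the right-hand side, so the result comes out phrased in terms of $\delta_\lambda$ with no further work. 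Your route --- subtract the two systems, observe the displacement solves $L_\lambda q = c(\mathbf 1_u-\mathbf 1_v)$, and read off $q(w)=g(w)c$ --- is arguably the more natural first move, but it lands on parallelism with $\delta_{\lambda'}$ rather than $\delta_\lambda$, forcing the extra self-consistency step and the appeal to $\delta_\lambda\neq(0,0)$. Both arguments are short; the paper's orthogonal-projection trick trades a line of setup for the elimination of your final paragraph. Your explicit treatment of the loop case is a small bonus the paper omits.
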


\begin{proof}
Arbitrarily index the vertices of $\Gamma$ from $1$ to $n$.  Fix a non-zero vector $\sigma \in \Real^2$ orthogonal to the vector $p_*(v,\lambda) - p_*(u,\lambda) + x(d)$.  For each vertex $i$, let $z_i = p_*(i, \lambda) \cdot \sigma$ and $z'_i= p_*(i, \lambda') \cdot \sigma$, and for each dart $d$, let $\chi(d) = x(d) \cdot \sigma$.  Our choice of $\sigma$ implies that $z_u - z_v = \chi(d)$.  We need to prove that $z'_i = z_i$ for every vertex $i$.

The real vector $Z = (z_i)_i$ is a solution to the linear system $LZ = X$, where $L$ is the $n\times n$ weighted Laplacian matrix
\[
	L_{ij} = \begin{cases}
		\displaystyle\sum_{\Tail(d) = i} - \lambda(d)
				& \text{if $i = j$} 
		\\[3ex]
		\displaystyle\sum_{\substack{\Tail(d) = i \\ \Head(d) = j}} \lambda(d)
				& \text{if $i\ne j$}
	\end{cases}
\]
and $H \in \Real^n$ is a vector whose $i$th entry is 
\[
	H_i = \displaystyle\sum_{\Tail(d) = i} - \lambda(d)\,\chi(d).
\]
(In fact, $Z$ is the \emph{unique} solution such that $z_o = 0$.)  
Similarly, $Z' = (z'_i)_i$ is the unique solution to an analogous equation $L' Z' = H'$ with $z'_0 = 0$, where $L'$ and $H'$ are defined \emph{mutatis mutandis} in terms of~$\lambda'$ instead of $\lambda$.

We prove that $Z' = Z$ as follows.  Let $\delta = \lambda'(e) - \lambda(e)$.  The Laplacian matrices $L$ and $L'$ differ in only four locations:
\[
	L'_{ij} - L_{ij} = \begin{cases}
		-\delta & \text{if $i=j=u$ or $i=j=v$} \\
		\delta & \text{if $\set{i,j} = \set{u,v}$} \\
		0 & \text{otherwise}
	\end{cases}
\]
More concisely, we have $L' = L - \delta\, (\Unit_v - \Unit_u)\,(\Unit_v - \Unit_u)^T$, where $\Unit_i$ denotes the $i$th standard coordinate vector.  Similarly, we have $H' = H + \delta\cdot \chi(d)\cdot (\Unit_v - \Unit_u)$.  It follows that 
\begin{align*}
	L' Z
	&=
	L Z - \delta\,(\Unit_v - \Unit_u)\,(\Unit_v - \Unit_u)^T\, Z 
\\	&=
	H - \delta\,(\Unit_v - \Unit_u)\,(z_v - z_u)
\\	&=
	H + \delta\,(\Unit_v - \Unit_u)\, \chi(d)
\\	&=
	H'
\end{align*}
which completes the proof.
\end{proof}

We note in passing that a nearly identical lemma applies to internally 3-connected plane graphs; changing the weight of one edge $e$ moves the vertices of any Tutte embedding along lines parallel to $e$.  Surprisingly, this observation appears to be new. 

\subsection{Spring Collapse}
\label{SS:spring-collapse}

Now fix a toroidal triangulation $\Gamma$ and edge weights $\lambda$.  Let $e$ be an arbitrary edge of $\Gamma$, and let $\Gamma'$ be the result of collapsing one endpoint $u$ to the other endpoint $v$.  Let $a,b,c,d$ be the edges in the link of $uv$ in~$\Gamma$, as shown in Figure \ref{F:collapse}.  Edges $a$ and $b$ collapse to a single edge $ab$ in $\Gamma'$, and edges $c$ and $d$ collapse to a single edge $cd$ in $\Gamma'$.  Now define weights for the edges of $\Gamma'$ as follows:
\[
	\lambda'(e) :=
		\begin{cases}
			\lambda(a) + \lambda(b) 	& \text{if $e = ab$} \\
			\lambda(c) + \lambda(d) 	& \text{if $e = cd$} \\
			\lambda(e)				 	& \text{otherwise}
		\end{cases}
\]
\begin{figure}[ht]
\centering
\includegraphics[scale=0.4]{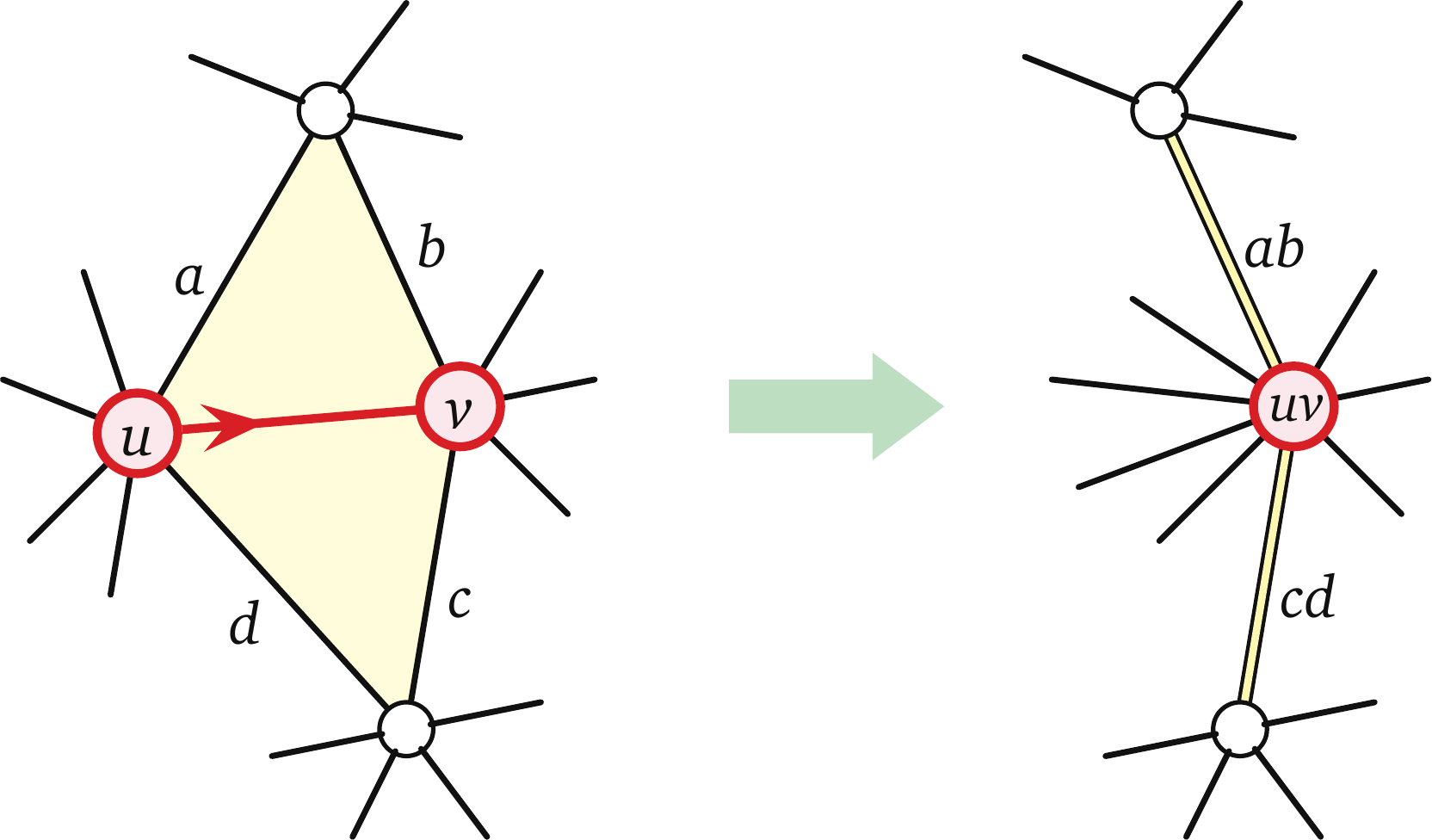}
\caption{Collapsing $u$ to $v$.}
\label{F:collapse}
\end{figure}

The new equilibrium embedding $\Tutte{\Gamma'}{\lambda'}$ has the same image as the limit of $\Tutte{\Gamma}{\lambda}$ as we increase $\lambda(e)$ to infinity and keep all other edge weights fixed.  Lemma \ref{L:tutte-parallel} implies that as we continuously increase $\lambda(e)$, the vertices of $\Tutte{\Gamma}{\lambda}$ move continuously along geodesics parallel to $e$.

\subsection{Geodesic Interpolation}
\label{SS:interp}

This continuous deformation from $\Tutte{\Gamma}{\lambda}$ to $\Tutte{\Gamma'}{\lambda'}$ is not a parallel linear pseudomorph, because the vertices do not necessarily move at fixed speeds.\footnote{We have deliberately left unspecified exactly \emph{how} $\lambda(e)$ grows to infinity.  We conjecture that with the right choice of growth function, the resulting vertex motion is actually linear.}  To define a parallel linear pseudomorph, we simply move each vertex at constant speed along its corresponding geodesic.  That is, for every vertex $w$ and all real numbers $0\le t\le 1$, let
\[
	p_t(w) = (1-t)\cdot p_*(w, \lambda) + t\cdot p_*(w, \lambda'),
\]
and let $\Gamma_t$ denote the geodesic drawing of $G$ with vertex coordinates $p_t(v)$ and the same crossing vectors as $\Gamma$.  The following lemma implies that $\Gamma_t$ is actually an embedding for all $t<1$, which implies that the continuous family of drawings $\Gamma_t$ is a parallel linear pseudomorph from  $\Gamma_0 = \Tutte{\Gamma}{\lambda}$ to $\Gamma_1 = \Tutte{\Gamma'}{\lambda'}$.

\begin{lemma}\label{lem:paralleltriangles}
Let $p_0p_1$, $q_0q_1$, and $r_0r_1$ be arbitrary parallel segments in the plane.  For all real $0\le t\le 1$, define $p_t = (1-t)p_0 + tp_1$ and $q_t = {(1-t)}q_0 + tq_1$ and $r_t = (1-t)r_0 + tr_1$.  If the triples $p_0,q_0,r_0$ and $p_1,q_1,r_1$ are oriented counterclockwise, then for all $0\le t\le 1$, the triple $p_t,q_t,r_t$ is also oriented counterclockwise.
\end{lemma}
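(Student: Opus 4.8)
The plan is to reduce the statement to a one-variable observation: the signed area of the triangle $p_t q_t r_t$ is an \emph{affine} function of $t$, not merely a quadratic one, and this is forced precisely by the segments being parallel. For vectors $a,b\in\Real^2$ write $a\wedge b$ for the scalar cross product $a_1 b_2 - a_2 b_1$; recall that $a\wedge b$ is bilinear and antisymmetric, that $a\wedge a = 0$, and that a triple $(a,b,c)$ is oriented counterclockwise exactly when $(b-a)\wedge(c-a) > 0$. Define $f(t) = (q_t - p_t)\wedge(r_t - p_t)$, so the lemma amounts to proving $f(t) > 0$ for all $t\in[0,1]$ given only $f(0) > 0$ and $f(1) > 0$. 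A priori $f$ is a quadratic polynomial in $t$, since $p_t,q_t,r_t$ are each affine in $t$.

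First I would invoke the parallelism hypothesis: there is a single vector $v\in\Real^2$ and scalars $\alpha,\beta,\gamma$ with $p_1 - p_0 = \alpha v$, $q_1 - q_0 = \beta v$, and $r_1 - r_0 = \gamma v$. Then $q_t - p_t = (q_0 - p_0) + t(\beta - \alpha)\,v$ and $r_t - p_t = (r_0 - p_0) + t(\gamma - \alpha)\,v$, and expanding $f(t)$ by bilinearity yields the constant term $(q_0-p_0)\wedge(r_0-p_0)$, two terms linear in $t$, and a single term proportional to $t^2\,(v\wedge v)$. The quadratic term vanishes because $v\wedge v = 0$, so $f$ is affine in $t$.

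Finally, an affine function that is positive at both endpoints of $[0,1]$ is positive throughout: explicitly, $f(t) = (1-t)\,f(0) + t\,f(1) > 0$ for every $t\in[0,1]$, so the triple $p_t,q_t,r_t$ is oriented counterclockwise for all such $t$. There is essentially no obstacle beyond spotting the cancellation of the $t^2$ term; I would stress that parallelism of the three segments is exactly the hypothesis that makes this cancellation happen, and that the same computation shows more generally that the orientation of any fixed triple of vertices varies affinely — hence monotonically — in $t$ throughout a parallel linear pseudomorph.
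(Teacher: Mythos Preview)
Your proof is correct and follows essentially the same approach as the paper: both arguments show that the signed area of $p_tq_tr_t$ is an affine function of $t$ (the paper by assuming without loss of generality that the segments are horizontal and invoking column-linearity of the $3\times 3$ orientation determinant, you via the coordinate-free identity $v\wedge v=0$), and then conclude that a function affine in $t$ and positive at both endpoints is positive on all of $[0,1]$.
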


\begin{proof}
Without loss of generality, assume that segments $p_0p_1$, $q_0q_1$, and $r_0r_1$ are horizontal.  Thus, we can write $p_t = (px_t, py)$, and similarly for $q_t$ and $r_t$.  The triple $p_t,q_t,r_t$ is oriented counterclockwise if and only if the following determinant is positive:
\[
	\Delta(t) := 
	\begin{vmatrix}
		1 & px_t & py \\
		1 & qx_t & qy \\
		1 & rx_t & ry 
	\end{vmatrix}
\]
Routine calculation implies
\begin{align*}
	\Delta(t)
	&=
	\begin{vmatrix}
		1 & (1-t)\,px_0 + t\,px_1 & py \\
		1 & (1-t)\,qx_0 + t\,qx_1 & qy \\
		1 & (1-t)\,rx_0 + t\,rx_1 & ry 
	\end{vmatrix}
\\	&=
	(1-t)
	\begin{vmatrix}
		1 & px_0 & py \\
		1 & qx_0 & qy \\
		1 & rx_0 & ry 
	\end{vmatrix}
	+ 
	t
	\begin{vmatrix}
		1 & px_1 & py \\
		1 & qx_1 & qy \\
		1 & rx_1 & ry 
	\end{vmatrix}
	~=~
	(1-t)\cdot\Delta(0)
	+ t\cdot \Delta(1)
\end{align*}
Thus, the function $\Delta(t)$ has exactly one real root.  It follows that if $\Delta(0) > 0$ and $\Delta(1) > 0$, then  $\Delta(t)>0$ for all $0\le t\le 1$.
\end{proof}


\section{Zippers}
\label{sec:zippers}

Even if the original input triangulations $\Gam_0$ and $\Gam_1$ are simple, collapsing edges eventually reduces them to triangulations with parallel edges and loops.  Every loop in a geodesic toroidal triangulation is a closed geodesic.  The base case of our recursive algorithm is a special type of geodesic toroidal triangulation that we call a \EMPH{zipper}, in which \emph{every} vertex is incident to a loop.  (This class of graphs were previously considered by Gonçalves and Lévêque \cite[Fig.~44]{gl-tmswo-14}.)

Consider any zipper $Z$ with $n$ vertices, for some positive integer $n$.  If $n=1$, then $Z$ consists of a single vertex, three loop edges, and two triangular faces.  Otherwise, the loops in $Z$ are disjoint closed geodesics, so they must be parallel; it follows that each vertex of $Z$ is incident to exactly one loop.  In either case, the $n$ loops in $Z$ decompose the torus into $n$ annuli, each of which is decomposed into two triangles by two boundary-to-boundary edges.  Figure~\ref{fig:two-vertices-torus} shows three two-vertex zippers, and Figure \ref{F:5zipper} shows a zipper with five vertices.

\begin{figure}[ht]
\centering\includegraphics[scale=0.5]{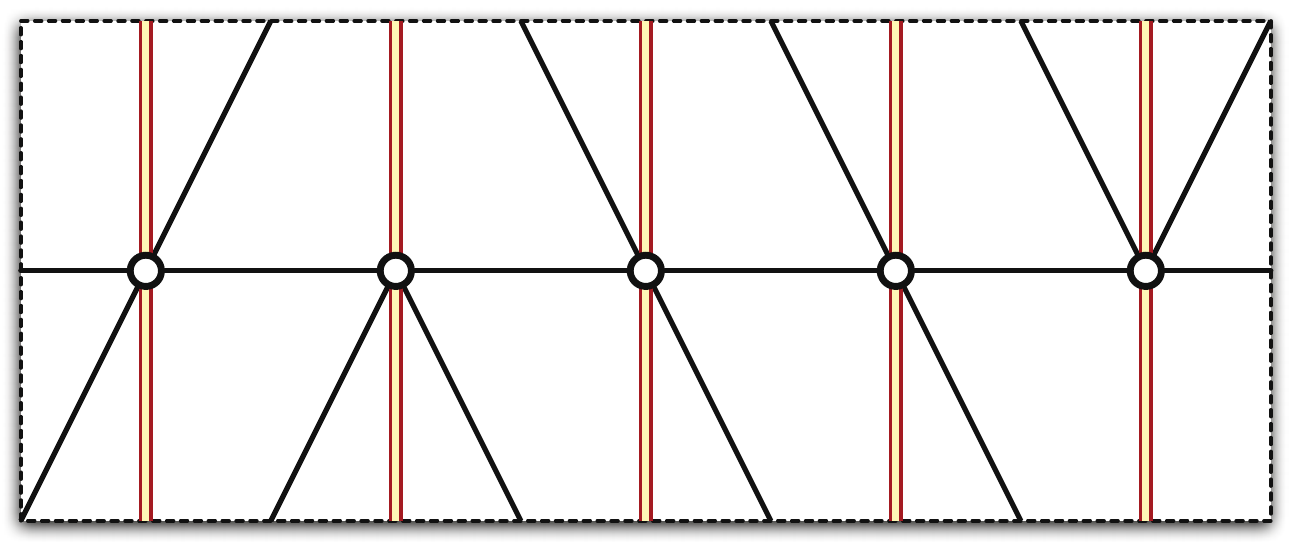}
\caption{A five-vertex zipper.  Doubled red edges are loops.}
\label{F:5zipper}
\end{figure}

\subsection{Zipper Structure}

The following results motivate our choice of zippers as a base case of our recursive algorithm.

\begin{lemma}
\label{L:loop6}
In every geodesic toroidal triangulation, every vertex incident to a loop has degree at least~6.
\end{lemma}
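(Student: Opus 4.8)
The plan is to argue directly in the universal cover, using the fact that a loop in $\Gam$ lifts to a bi-infinite staircase of parallel edges and that $\widetilde\Gam$ is a genuine simple planar triangulation. Fix a vertex $v$ incident to a loop $\ell$, and fix a lift $\widetilde v$ of $v$. The loop $\ell$ lifts to an infinite path $\dots, \widetilde v_{-1}, \widetilde v_0 = \widetilde v, \widetilde v_1, \dots$ through lifts of $v$, where each edge $\widetilde v_i\widetilde v_{i+1}$ is a lift of $\ell$; all these edges are translates of a single vector (the translation of $\Z^2$ corresponding to the homology class of $\ell$), so this path is actually a straight line in $\R^2$. In particular the two darts of $\ell$ at $\widetilde v$ point in exactly opposite directions, so $\widetilde v$ already has (at least) two neighbors, namely $\widetilde v_{-1}$ and $\widetilde v_1$, lying on a line through $\widetilde v$.

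Next I would use that $\widetilde\Gam$ is a triangulation to control the link of $\widetilde v$. Since the link is a simple polygon and $\widetilde v$ lies in its interior (indeed in its visibility kernel), and since two of its vertices $\widetilde v_{-1}, \widetilde v_{1}$ are antipodal with respect to $\widetilde v$, the link contains at least one vertex strictly on each side of the line $\widetilde v_{-1}\widetilde v_{1}$: otherwise all link vertices would be collinear with $\widetilde v$, contradicting essential simplicity (consecutive edges of the triangulation around $\widetilde v$ would have zero angle between them, forcing overlapping faces). So on each side of the line there is at least one neighbor, giving degree at least $4$ so far: $\widetilde v_{-1}, \widetilde v_1$, plus one neighbor $\widetilde a$ on the left and one neighbor $\widetilde b$ on the right. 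The remaining work is to squeeze out two more neighbors, i.e.\ to rule out degree $4$ and degree $5$.

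The heart of the argument is the following observation: the three consecutive link vertices $\widetilde v_{-1}, \widetilde a, \widetilde v_{1}$ (reading around one side) cannot be the \emph{only} vertices on that side, because the two edges $\widetilde v\,\widetilde v_{-1}$ and $\widetilde v\,\widetilde v_{1}$ form a \emph{straight angle} ($\pi$) at $\widetilde v$, and a single triangle fan cannot span a straight angle of the link while keeping $\widetilde v$ in the visibility kernel unless $\widetilde a$ is positioned so that the quadrilateral $\widetilde v_{-1}\,\widetilde a\,\widetilde v_1\,\widetilde v$ is convex — but then the edge $\widetilde v_{-1}\widetilde v_1$ would be a chord of the link passing through $\widetilde v$, again forcing a degenerate (zero-area) face. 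Carrying this out carefully on \emph{both} sides of the line shows that each side must contribute at least two link vertices strictly off the line, so $\deg(\widetilde v)\ge 2 + 2 + 2 = 6$. I would phrase this last step via turning angles: walking around the link of $\widetilde v$, the total exterior turn is $2\pi$, the two "pole" vertices $\widetilde v_{-1},\widetilde v_1$ split the walk into two chains each subtending angle exactly $\pi$ at $\widetilde v$, and within each chain the turn at each interior link vertex is strictly less than $\pi$ while the link is a simple polygon with $\widetilde v$ in its kernel — a short counting/convexity argument then forces at least two intermediate vertices per chain.

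The main obstacle I anticipate is the degenerate/collinearity bookkeeping: because the paper explicitly allows triangulations that are not simplicial complexes (loops, parallel edges, link vertices that coincide \emph{after} projection), I must be careful to do all of the geometry upstairs in $\widetilde\Gam$, where simplicity and strict convexity of faces are available, and only then project. In particular I must make sure the two "pole" neighbors $\widetilde v_{-1}$ and $\widetilde v_1$ are genuinely distinct vertices of the link and are not accidentally identified with the side-neighbors $\widetilde a,\widetilde b$ — this is where I would invoke that $\widetilde\Gam$ is simple and that faces are nondegenerate triangles, so no two consecutive edges around $\widetilde v$ are collinear. Once that is pinned down, the degree count is a clean planar-geometry argument with no heavy computation.
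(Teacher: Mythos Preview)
Your setup is fine: lifting the loop to a straight line through $\widetilde v$ with antipodal neighbors $\widetilde v_{-1},\widetilde v_1$, and observing that at least one further neighbor must lie strictly on each side of that line. That gets you to degree $\ge 4$. The gap is in the step from $4$ to $6$.

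Your claimed obstruction to having a single neighbor $\widetilde a$ on one side is not an obstruction at all. Nothing in the local geometry of the link prevents the configuration $\widetilde v_{-1},\widetilde a,\widetilde v_1,\widetilde b$ with one vertex on each side: the two triangles $\widetilde v\,\widetilde v_{-1}\,\widetilde a$ and $\widetilde v\,\widetilde a\,\widetilde v_1$ are nondegenerate whenever $\widetilde a$ is strictly off the line, $\widetilde v$ sits in the kernel of the resulting quadrilateral link, and there is no edge $\widetilde v_{-1}\widetilde v_1$ anywhere in sight. The ``chord through $\widetilde v$'' you invoke simply is not an edge of the triangulation. Your turning-angle sketch has the same problem: exterior angles of the link can be negative at reflex vertices, and no counting argument on a simple polygon with a point in its kernel forces more than one vertex between two antipodal boundary points.

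What actually rules out degree $4$ or $5$ is a global feature you never use: the loop is \emph{non-contractible}. In the universal cover this manifests as periodicity. The face on one side of the edge $\widetilde v\,\widetilde v_1$ and the face on the same side of $\widetilde v_{-1}\,\widetilde v$ are translates of one another by the nonzero lattice vector $\tau = \widetilde v_1 - \widetilde v$; hence their third vertices $\widetilde w$ and $\widetilde w - \tau$ are distinct neighbors of $\widetilde v$ on that side. Applying this on both sides gives $2+2+2=6$. The paper phrases the same idea combinatorially on the torus: the triangular face bounded by the loop dart $d_0$ has two further darts $d_1,d_2$, and $d_2\ne\Rev(d_1)$ precisely because $d_0$ is not contractible (otherwise the facial walk $d_0\cdot d_1\cdot\Rev(d_1)$ would witness contractibility of $d_0$); doing this on both faces adjacent to the loop yields six distinct darts into $v$. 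Either formulation works, but your proposal as written invokes neither periodicity nor non-contractibility, and without one of them the argument cannot close.
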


\begin{proof}
Let $\Gamma$ be a geodesic toroidal triangulation, let $v$ be a vertex of $\Gamma$ incident to a loop, let $d_0$ be either of the darts of that loop.

Let $\partial f$ denote the clockwise facial walk around the face $f$ to the right of~$d_0$.  Because the interior of~$f$ is an open disk, $\partial f$ is contractible.  Because $\Gamma$ is a triangulation, $\partial f$ consists of exactly three darts $d_0$,~$d_1$, and $d_2$, where $\emph{head}(d_i) = \emph{tail}(d_{i+1\bmod 3})$ for each index $i$.  Because every contractible geodesic loop consists of a single point, $d_0$ is non-contractible and thus is not homotopic to $\partial f$.  It follows that $d_2 \ne \emph{rev}(d_1)$.

Symmetrically, the counterclockwise walk around the face to the left of $d_0$ consists of three darts $d_0, d'_1, d'_2$, where $d'_2 \ne \emph{rev}(d'_1)$.  Thus, at least six distinct darts head into~$v$: in counterclockwise cyclic order, $d_0$, $\emph{rev}(d_1)$, $d_2$, $\emph{rev}(d_0)$, $d'_2$, $\emph{rev}(d'_1)$.\footnote{If $\Gamma$ has only one vertex, then $d'_2 = d_1$ and $d'_1 = d_2$, and so $v$ is incident to only three distinct \emph{edges}.  Otherwise, $v$ is incident to five distinct edges.}
\end{proof}

\begin{figure}[ht]
\centering\footnotesize\sffamily
\begin{tabular}{c@{\hspace{1in}}c}
	\includegraphics[scale=0.4,page=1]{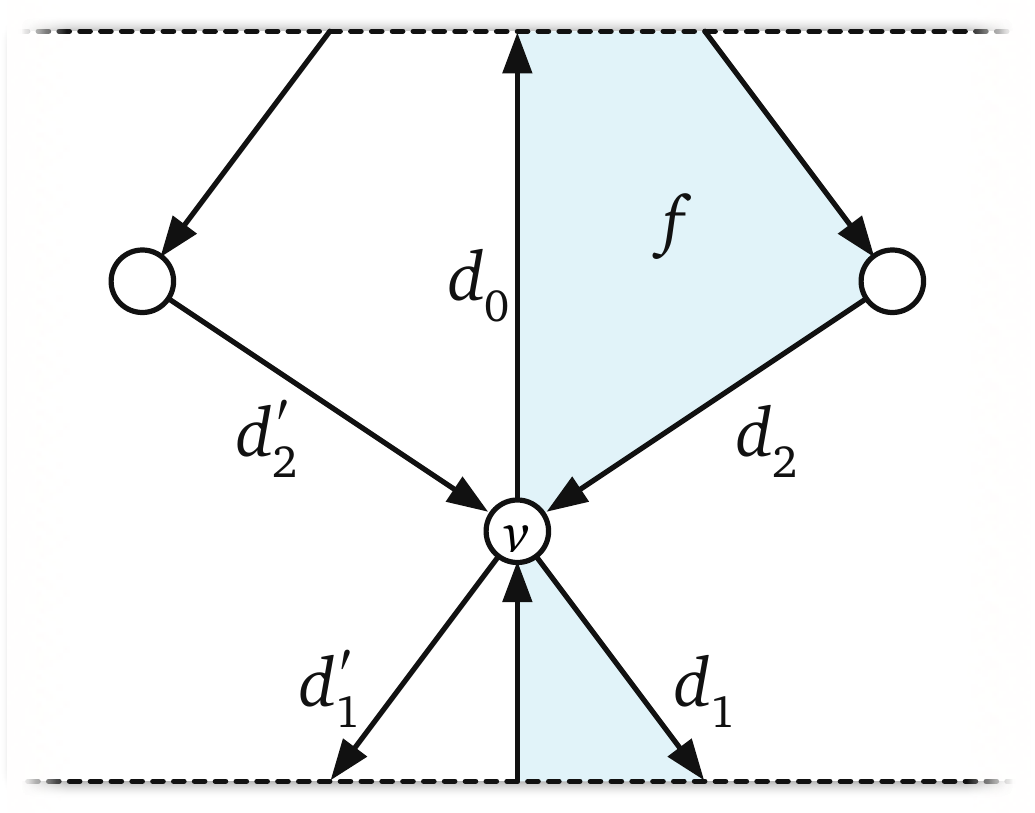} &
	\includegraphics[scale=0.4,page=2]{Fig/loop6} \\ (a) & (b)
\end{tabular}
\caption{(a) Proof of Lemma \ref{L:loop6}.
		  (b) Proof of Lemma \ref{L:loopmeanszipper}}
\end{figure}

\begin{lemma}
\label{L:loopmeanszipper}
Every 6-regular triangulation that contains a loop is a zipper.
\end{lemma}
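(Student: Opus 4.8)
The plan is to show that in a 6-regular triangulation $\Gamma$, having one loop forces loops to propagate to every vertex, covering the torus in the "zipper" pattern. I would start from the structural picture established in the proof of Lemma~\ref{L:loop6}: if $v$ is incident to a loop $\ell$ with darts $d_0, \Rev(d_0)$, then the two faces flanking $d_0$ are triangles $d_0 d_1 d_2$ and $d_0 d_1' d_2'$, and because $\ell$ is non-contractible these six darts into $v$ are distinct. Since $\Gamma$ is 6-regular, these six darts account for \emph{all} darts into $v$: so the link of $v$ consists of exactly two edges to the vertex $\Head(d_1) = \Head(\Rev(d_2')) =: a$ on one side and two edges to the vertex $\Head(d_1') = \Head(\Rev(d_2)) =: b$ on the other side, plus the two appearances of the loop. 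In other words, the link of $v$ is a quadrilateral-with-a-doubled-loop: walking around $v$ one sees $a, a, b, b$ interleaved with the two copies of $\ell$. The two triangles on either side of $d_0$ are then $vab$-type triangles, and the loop $\ell$ bounds (together with $v$'s neighborhood) an annular strip.

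Next I would argue that the neighbors $a$ and $b$ are themselves incident to loops. Consider the vertex $a$. The two edges $va$ (the two darts $d_1$ and $\Rev(d_2')$) are consecutive in the rotation at $a$ only if... — more carefully, the triangle on the far side of edge $va$ from $v$ must be examined. The key observation is that the closed geodesic $\ell$ is disjoint from all other loops and all vertices except $v$, and the triangulation restricted to a neighborhood of $\ell$ is forced: $\ell$ together with the "parallel" structure forces the next geodesic loop over. Concretely, the edge $ab$ appearing as the "top" of $v$'s link lies on a closed walk parallel to $\ell$; since each vertex along a geodesic loop in a 6-regular triangulation has the same local picture, I would show by following the strip that $a$ must also be incident to a loop parallel to $\ell$ (its local link has the same $a',a',b',b'$-plus-loop shape). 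This is essentially the same inductive "propagation along a straight strip" argument used in Section~\ref{sec:nobadtriangulations} (cf.\ the straight-cycle-of-dogs lemma): one loop forces a neighboring loop, which forces the next, and since $\Gamma$ is finite the loops close up into a family of parallel closed geodesics partitioning the torus into annuli, each annulus being a two-triangle strip. Every vertex lies on one of these loops, so $\Gamma$ is a zipper.

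The main obstacle, I expect, is making the propagation step fully rigorous: showing that the local picture at $v$ (6-regular, loop $\ell$, link of the form $a,a,b,b$) genuinely forces the \emph{same} picture at $a$ and $b$, rather than merely being consistent with it. The subtlety is that $a$ and $b$ might a priori have their six edges arranged differently, and one must use the fact that the two triangles flanking $\ell$ at $v$ are already pinned down, together with 6-regularity at $a$, to deduce that the edge from $a$ "across the strip" must return to a loop rather than continuing to a fresh vertex. I would handle this by a careful local case analysis of the rotation system at $a$: the two darts of edge $va$ at $a$ bound, on their outer sides, triangles that share the edges $ab$ and $a\cdot(\text{next neighbor})$; counting the four remaining darts at $a$ and using that no geodesic loop can cross $\ell$, the only possibility is that two of them form a loop at $a$ parallel to $\ell$. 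Once this single propagation step is nailed down, the global conclusion follows by the standard finiteness/induction argument, exactly as in Lemma on straight cycles of dogs, and one also recovers that all the loops are parallel (consistent with the discussion of zipper structure preceding the lemma).
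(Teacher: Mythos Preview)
Your overall strategy---show that a loop at $v$ forces a loop at the adjacent vertex, then induct---is exactly the paper's. But you are making the propagation step much harder than it needs to be, and your proposed argument for it is not actually complete.

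First, a bookkeeping slip: your identification $\Head(d_1) = \Head(\Rev(d_2'))$ mixes darts from the two different faces flanking $d_0$. The correct picture (which you essentially state anyway) is that on \emph{each} side of $\ell$ the two non-loop edges $e,e'$ at $v$ are parallel edges to a single common vertex~$w$, because $\ell, e, e'$ bound a triangular face~$f$.

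The real issue is the propagation step. You propose to analyze the rotation system at the neighbor $a$, invoking ``no geodesic loop can cross $\ell$'' and a case analysis to force a loop there. That geometric heuristic is not obviously sufficient, and you never actually carry out the case analysis. The paper avoids all of this with a two-line combinatorial observation made \emph{at $v$}, not at the neighbor: by 6-regularity, the edges $e$ and $e'$ are adjacent in cyclic order around $v$ (there is nothing else on that side of $\ell$), so there is a second triangular face $f'$ having both $e$ and $e'$ on its boundary. Since $e$ and $e'$ are both $v$--$w$ edges, the third edge of $f'$ runs from $w$ to $w$---it is a loop at $w$. Done; induct. No case analysis, no geometric crossing argument, no need to look at the rotation at $w$ at all.
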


\begin{proof}
Again, let $\Gamma$ be a 6-regular triangulation, and let $v$ be a vertex of $\Gamma$ incident to a loop $\ell$.  The previous proof implies that $v$ is incident to two edges on either side of $\ell$.  Let $e$ and $e'$ be the edges incident to $v$ on one side of $\ell$; the edges $\ell$, $e$, and $e'$ enclose a triangular face $f$.  Thus, $e$ and $e'$ share another common endpoint $w$.  (Except in the trivial case where $\Gamma$ has only one vertex, $v$ and $w$ are distinct.)  Because $e$ and $e'$ are adjacent in cyclic order around $v$, there is another triangular face $f'$ with $e$ and $e'$ on its boundary; the third edge of $f'$ is a loop through $w$.  The lemma now follows by induction.
\end{proof}

\begin{corollary}
Every triangulation that contains a loop but is not a zipper contains a vertex of degree at most~5 that is not incident to a loop.
\end{corollary}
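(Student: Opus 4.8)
The plan is to argue by contradiction, using nothing more than the two preceding lemmas together with Euler's formula. Suppose $\Gamma$ is a geodesic toroidal triangulation that contains a loop but is not a zipper, and suppose toward a contradiction that \emph{every} vertex of $\Gamma$ of degree at most $5$ is incident to a loop.

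First I would observe that this assumption, combined with Lemma~\ref{L:loop6}, actually forces $\Gamma$ to have \emph{no} vertex of degree at most~$5$ at all: if $v$ had degree at most $5$, then by assumption $v$ would be incident to a loop, and hence by Lemma~\ref{L:loop6} would have degree at least $6$, a contradiction. So every vertex of $\Gamma$ has degree at least~$6$. Next I would invoke Euler's formula for the torus: for a triangulation, $2\size{E} = 3\size{F}$ and $\size{V} - \size{E} + \size{F} = 0$ together give $\size{E} = 3\size{V}$, so the sum of the vertex degrees equals $6\size{V}$ and the average degree is exactly~$6$. Since all degrees are at least~$6$ and the average is~$6$, every vertex has degree exactly~$6$; that is, $\Gamma$ is $6$-regular.

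Finally, $\Gamma$ is a $6$-regular triangulation that contains a loop, so Lemma~\ref{L:loopmeanszipper} implies that $\Gamma$ is a zipper, contradicting our hypothesis. I do not expect any real obstacle here: the corollary is essentially immediate once one notices that the only subtle point is the first step, where Lemma~\ref{L:loop6} is used to upgrade ``every low-degree vertex carries a loop'' into ``there are no low-degree vertices,'' after which Euler's formula and Lemma~\ref{L:loopmeanszipper} finish the argument.
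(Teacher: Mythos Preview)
Your argument is correct and is exactly the reasoning the paper intends; the corollary is stated without proof precisely because it follows immediately from Lemmas~\ref{L:loop6} and~\ref{L:loopmeanszipper} together with the Euler-formula fact that the average degree in a toroidal triangulation is~$6$, just as you have written it out.
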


\subsection{Morphing Zippers}

We next describe a straightforward approach to morphing between arbitrary isotopic zippers, which requires at most two parallel linear morphing steps.

Let $Z$ and $Z'$ be arbitrary isotopic zippers with $n$ vertices.  If $n=1$, then $Z$ and $Z'$ differ only by translation, so assume otherwise.  Ladegaillerie \cite{l-ctp1c-74a,l-ctp1c-74b,l-cdp1c-84} proved that two embeddings of the same graph~$G$ on the same surface are isotopic if and only if, the images of any cycle in $G$ in both embeddings are homotopic.  Ladegaillerie's theorem implies that all cycles in $Z$ and all cycles in $Z'$ are parallel to a common vector $\sigma$.

In the first parallel morphing step, we translate all vertices in $Z$ along geodesics orthogonal to $\sigma$ until the image of each loop has the same image as the corresponding loop in $Z'$.  Then in the second parallel morphing step, we translate all vertices along their respective loops to move all vertices and edges to their proper positions in $Z'$.  See Figure~\ref{F:zipper-morph} for an example.  In both stages, Lemma \ref{lem:paralleltriangles} implies that linear interpolation between the old and new vertex coordinates yields an isotopy.

\begin{figure}[ht]
\centering
\includegraphics[scale=0.5]{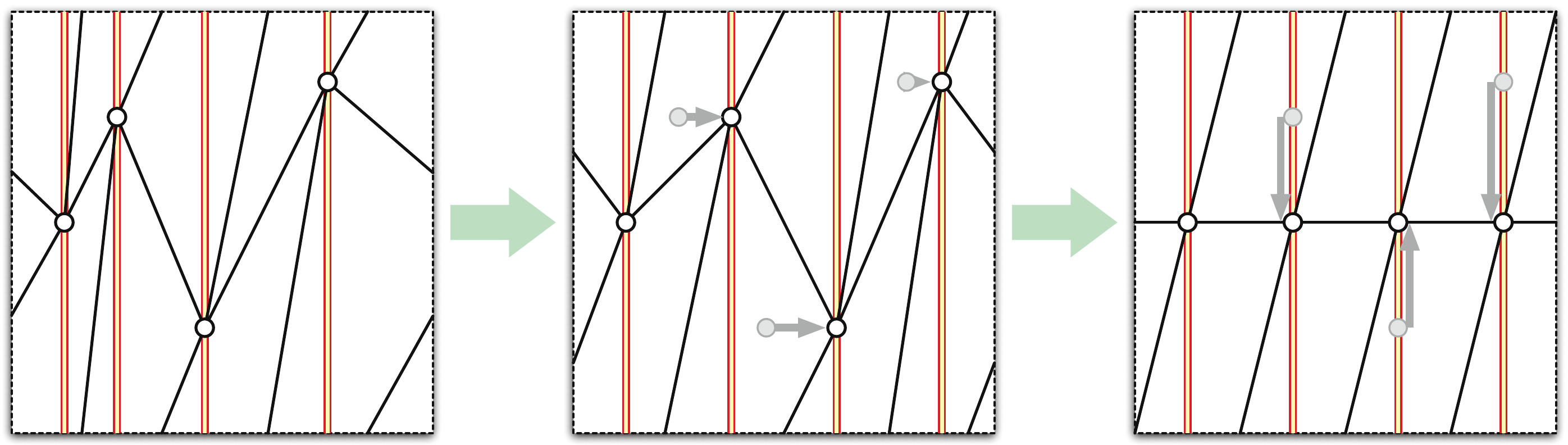}
\caption{Morphing one zipper into another.  Doubled red edges are loops.}
\label{F:zipper-morph}
\end{figure}

\section{Converting Pseudomorphs to Morphs}
\label{sec:depseudo}


In this section, we adapt a perturbation strategy of Alamdari et al.~\cite{aabcd-hmpgd-17}, which transforms their pseudomorphs between planar triangulations into morphs, to geodesic triangulations on the flat torus.


To explain our adaptation, we must first give a brief sketch of their algorithm.  Let $\Gam_0$ be the initial planar input triangulation.  The input to their perturbation algorithm is a pseudomorph consisting of a (direct) collapse of a good vertex $u$ to a neighbor $v$, a morph (not a pseudomorph) consisting of $k$ parallel linear steps $\Gam'_0 \leadsto \Gam'_1 \leadsto\cdots \leadsto \Gam'_k$, and finally the reverse of a collapse of $u$ to $v$ resulting in the final triangulation~$\Gam_k$.  The output is a proper morph from $\Gam_0$ to $\Gam_k$, consisting of $k+2$ parallel linear steps.

Alamdari et al.~proceed as follows.  Let $P$ be the link of $u$ in the initial triangulation~$\Gam_0$.  For each index~$i$, let $v_i$ and $P_i$ respectively denote the images of $v$ and $P$ in the intermediate triangulation $\Gam'_i$.  For each index $i$, a position $u_i$ is found within the visibility kernel of $P_i$ so that for all $i$, the vector $u_i \arcto u_{i+1}$ is parallel to $v_i\arcto v_{i+1}$ (the direction of the parallel linear morph $\Gam'_i \leadsto \Gam'_{i+1}$).  For vertices of degree $3$ and~$4$, it is simple to place $u$ as a certain convex combination of the vertices in $P$.  The strategy for vertices of degree $5$ is more complicated.  First, a value $\eps$ is computed such that for each index $i$, the intersection of the disk of radius $\eps$ centered at $v_i$ and the visibility kernel of $P_i$ intersects only the edges of the kernel incident to $v_i$; call this intersection $S_i$.  A specific position $u_i$ is then chosen within each region $S_i$.

A close examination of their paper reveals that the strategy for vertices of degree $5$ generalizes to vertices $u$ of arbitrary degree (greater than $2$).  In particular, the definition of $u_i$ depends solely on $\eps$ and the positions of the edges in $P_i$ incident to $v_i$.

The radius $\eps$ is computed as follows.  For each index $i$, we need a positive distance $\eps_i > 0$ smaller than the minimum distance from $v$ to any edge of the kernel of $P_i$ that is not incident to $v$, at any time during the morphing step $\Gamma_i \leadsto \Gamma_{i+1}$.  It suffices to compute the minimum distance from $v$ to the lines supporting edges of $P_i$ not incident to $v$.  The squared distance to each of these lines at any time $t$ can be expressed as the ratio $f(t)/g(t)$ of two quadratic polynomials $f$ and $g$.  Alamdari et al.~argue that a lower bound $0 < \delta \le \min_t \sqrt{f(t)/g(t)}$ can be computed in constant time in an appropriate real RAM model~\cite{aabcd-hmpgd-17}.  Then $\eps_i$ is the minimum of these lower bounds $\delta$.  Altogether computing~$\eps_i$ takes $O(\deg(u))$ time.  Finally, $\eps = \min_{1 \le i \le k} \eps_i$.

Once the radius $\eps$ is known, computing each sector $S_i$ in $O(\deg(u))$ time is straightforward.

The point $u_0$ can be chosen arbitrarily within $S_0$.  For each index $i$ in increasing order, Alamdari et al. describe how to choose a point $u_{i+1}\in S_{i+1}$ in $O(1)$ time, such that the vector $u_i\arcto u_{i+1}$ is parallel to the vector $v_i\arcto v_{i+1}$.  This part of the algorithm makes no reference to the rest of the triangulation; it works entirely within the sectors $S_i$ and $S_{i+1}$.  Moreover, no part of this algorithm relies on $u$ being \emph{directly} collapsed to $v$, only that vertices $u$ and $v$ have the same image in the triangulations $\Gam'_i$ and $\Gam'_{i+1}$.

We apply this perturbation technique to the toroidal pseudomorphs computed in Section~\ref{ssec:triangulationpseudomorph} as follows.  Recall that our pseudomorph consists of a direct collapse, a recursively computed pseudomorph, and a reversed spring collapse.  First we (recursively) perturb the recursive pseudomorph into a proper morph $\Gam'_0\leadsto\cdots\leadsto\Gam'_k$ consisting of $k$ parallel linear morphs.  We then compute the sectors~$S_i$ and the radius $\eps$ exactly as described above.  To perturb the initial direct collapse from $u$ to $v$, we move $u$ to an arbitrary point in the intersection of $S_0$ and the edge $e$ along which $u$ is collapsed.  We compute the intermediate positions $u_i$ for $u$ exactly as described above, working entirely within the local coordinates of the sectors $S_i$.  Finally, to perturb the reversed spring collapse, we first move $u$ from $u_k$ to a new point $u'_k \in S_k$ so that the image of the collapsing edge $e$ becomes parallel to the direction of the original spring collapse, after which we simply interpolate to the final triangulation, as described in Section \ref{SS:interp}.  
Because the vertices move along parallel geodesics,  Lemma~\ref{lem:paralleltriangles} implies that this final interpolation is a parallel linear morph.  Altogether, we obtain a morph consisting of $k+3$ parallel linear morphs.  We emphasize that the additional step moving $u_k$ to $u'_k$ is the only significant difference from the algorithm presented by Alamdari et al.

Unrolling the recursion, we can perturb our pseudomorph between two $n$-vertex toroidal triangulations into a proper morph consisting of $O(n)$ parallel linear morphing steps in $O(n^2)$ time.  The overall time to compute this morph is still dominated by the time needed to compute $O(n)$ equilibrium triangulations for the spring collapses. 

\begin{theorem}
\label{thm:depseudo}
Given any two isotopic geodesic toroidal triangulations $\Gam_0$ and $\Gam_1$ with $n$ vertices, we can compute a morph from $\Gam_0$ to $\Gam_1$ consisting of $O(n)$ parallel linear morphs in $O(n^{1+\omega/2})$ time.
\end{theorem}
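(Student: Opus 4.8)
The plan is to assemble the ingredients developed in the preceding sections into the recursive algorithm sketched in Section~\ref{ssec:triangulationpseudomorph}, and then to invoke the perturbation of Section~\ref{sec:depseudo}. First I would compute an equilibrium triangulation $\Gam_*$ isotopic to both $\Gam_0$ and $\Gam_1$, using Theorem~\ref{Th:tutte-torus} with all edge weights equal to $1$; concretely this means solving the linear system~\eqref{Eq:Tutte} once, with a fixed root vertex to pin down the translational ambiguity. Because the support graph is toroidal and hence has $O(\sqrt n)$ balanced separators, this solve takes $O(n^{\omega/2})$ time by generalized nested dissection. It then suffices to build a pseudomorph $\Psi_0$ from $\Gam_0$ to $\Gam_*$ and a pseudomorph $\Psi_1$ from $\Gam_1$ to $\Gam_*$ and concatenate $\Psi_0$ with the reversal of $\Psi_1$.

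Each $\Psi_j$ is built by recursion on the number of vertices. If the current triangulation is a zipper, then the corresponding equilibrium triangulation is also a zipper (incidence to a loop is a property of the underlying graph, and every geodesic toroidal triangulation of such a graph is a zipper), and Section~\ref{sec:zippers} morphs between the two with at most two parallel linear morphs; this is the base case. Otherwise, Theorem~\ref{thm:nobadtriangulations}, together with Lemma~\ref{L:loop6}, the corollary following Lemma~\ref{L:loopmeanszipper}, and the fact (Euler) that the average degree is $6$, guarantees a good vertex $u$ that is not incident to a loop and can be directly collapsed along some edge $e$ to a neighbor $v$. Directly collapsing $u$ to $v$ is a (trivially parallel linear) pseudomorph to a triangulation $\Gam'$ with $n-1$ vertices; performing a spring collapse of the \emph{same} edge $e$ in the equilibrium triangulation, as analyzed in Section~\ref{sec:eqpseudo}, yields $\Tutte{\Gam'}{\lambda'}$, which is an equilibrium triangulation of $G'$ isotopic to $\Gam'$ (as noted in Section~\ref{ssec:triangulationpseudomorph}, contracting an edge preserves isotopy), and by Lemmas~\ref{L:tutte-parallel} and~\ref{lem:paralleltriangles} the linear interpolation between the two equilibrium embeddings is a parallel linear pseudomorph computable with a single additional solve of~\eqref{Eq:Tutte}. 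I would then recurse on $\Gam'$ and $\Tutte{\Gam'}{\lambda'}$ and glue the direct collapse, the recursive pseudomorph, and the reversed spring collapse as in Figure~\ref{F:summary}. Unrolling, $\Psi_j$ consists of at most $n-1$ direct collapses, $O(1)$ zipper morphs, and at most $n-1$ reversed spring collapses, so the concatenation of $\Psi_0$ with the reversal of $\Psi_1$ is a pseudomorph from $\Gam_0$ to $\Gam_1$ with $O(n)$ parallel linear steps.

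Finally I would convert this pseudomorph into a proper morph by the recursive perturbation of Section~\ref{sec:depseudo}: perturbing $\Psi_0$ and $\Psi_1$ separately (each has the direct-collapse / recursive-morph / reversed-spring-collapse shape required there, and both end at the genuine embedding $\Gam_*$, so the perturbed pieces glue cleanly) yields a morph consisting of $O(n)$ parallel linear morphs in $O(n^2)$ additional time. For the running time, the cost is dominated by the $O(n)$ solves of system~\eqref{Eq:Tutte} — one per spring collapse plus the top-level call — each costing $O(n^{\omega/2})$, for a total of $O(n^{1+\omega/2})$; the $O(n^2)$ perturbation cost and the linear-time bookkeeping for the direct collapses are absorbed because $\omega \ge 2$.

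I expect the obstacles here to be bookkeeping rather than mathematics, since the substantive content lives in the earlier sections. One must check that the recursion is well-founded (isotopy is preserved by edge contraction, so $\Gam'$ and $\Tutte{\Gam'}{\lambda'}$ are again isotopic triangulations of the same graph, and zipper-ness of the source forces zipper-ness of the equilibrium side so the base cases align); that the direct collapse in the input triangulation and the spring collapse in the equilibrium triangulation act on the same combinatorial edge (the spring collapse of Section~\ref{sec:eqpseudo} is defined for an arbitrary edge, so this is free); and that the perturbation of Section~\ref{sec:depseudo} applies to a reversed spring collapse as well as to a direct collapse — precisely the ``move $u_k$ to $u'_k$'' modification highlighted there. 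The single most delicate point is confirming that the interpolation produced after perturbing a reversed spring collapse is still a parallel linear morph; this follows from Lemma~\ref{lem:paralleltriangles} once the collapsing edge has been rotated to lie parallel to the original spring-collapse direction.
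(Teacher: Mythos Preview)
Your proposal is correct and follows essentially the same approach as the paper: build the recursive pseudomorph from each input triangulation to a common equilibrium triangulation via direct collapses and spring collapses (with zippers as the base case), then apply the Alamdari-et-al.\ perturbation with the extra $u_k \to u'_k$ step to handle the reversed spring collapse. Your bookkeeping checks (existence of a good vertex via Theorem~\ref{thm:nobadtriangulations} in the loop-free case and via the corollary after Lemma~\ref{L:loopmeanszipper} otherwise, alignment of the zipper base cases, and the running-time accounting) match the paper's own reasoning.
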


\section{Not Just Triangulations}
\label{sec:allgraphs}

Finally, it remains to describe how to morph between embeddings that are not triangulations.  Following existing work in the planar setting, we extend the given embeddings $\Gam_0$ and $\Gam_1$ to triangulations, and then invoke our earlier triangulation-morphing algorithm.  The main difficulty is that it may not be possible to triangulate both $\Gam_0$ and $\Gam_1$ using the same diagonals, because corresponding faces, while combinatorially identical, have different shapes.

Two different techniques have been proposed to overcome this hurdle in the planar setting.  The first method subdivides each pair of corresponding faces into a  compatible triangulation, introducing additional vertices if necessary~\cite{t-dpg-83,gs-gipm-01,sg-cmcpt-01,sg-msfuo-01,aacbf-mpgdw-13}; however, this technique increases the complexity of the graph to $O(n^2)$~\cite{ass-ctsp-93}.  The second technique uses additional morphing steps to convexify faces to that they can be compatibly triangulated without additional vertices~\cite{addfp-mpgdo-14, aabcd-hmpgd-17}.  While the subdivision technique generalizes to toroidal embeddings (at least when all faces are disks), it is unclear how to generalize existing morphing techniques.

We introduce a third technique, which avoids both subdivision and additional morphing steps by exploiting Theorem~\ref{Th:tutte-torus}.  We emphasize that our method can also be applied to 3-connected straight-line plane graphs, giving a new and arguably simpler approach for the planar case as well.

\begin{theorem}
\label{Th:main}
Given any two isotopic essentially 3-connected geodesic toroidal embeddings $\Gam_0$ and $\Gam_1$ with $n$ vertices, we can compute a morph from $\Gam_0$ to $\Gam_1$ consisting of $O(n)$ parallel linear morphs in $O(n^{1+\omega/2})$ time. 
\end{theorem}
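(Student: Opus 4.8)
The plan is to reduce the general case to the triangulation case handled by Theorem~\ref{thm:depseudo}, using the equilibrium embedding of Theorem~\ref{Th:tutte-torus} as an intermediary that makes both inputs "triangulable in the same way." First I would fix a combinatorial triangulation $T$ of the common graph $G$: choose a set $D$ of diagonal edges (and possibly parallel edges), one per non-triangular face of the rotation system, so that adding $D$ to $G$ yields a graph $G^+$ whose every face is a triangle. The choice of $D$ is purely combinatorial and therefore identical for $\Gam_0$ and $\Gam_1$, since isotopic embeddings have the same rotation system. Assign weight $1$ to every edge of $G^+$ and let $\Gam_0^+$ be the equilibrium embedding of $G^+$ isotopic to (the chosen triangulation of) $\Gam_0$ given by Theorem~\ref{Th:tutte-torus}, with coordinates obtained by solving \eqref{Eq:Tutte}; define $\Gam_1^+$ analogously from $\Gam_1$. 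Each $\Gam_i^+$ is a \emph{geodesic} embedding of $G^+$ because Theorem~\ref{Th:tutte-torus} produces straight-line (geodesic) drawings, and deleting the diagonals $D$ from $\Gam_i^+$ yields a geodesic embedding $\Gam_i^*$ of $G$ isotopic to $\Gam_i$.

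Next I would apply Theorem~\ref{thm:depseudo} to the two geodesic toroidal triangulations $\Gam_0^+$ and $\Gam_1^+$: they are isotopic embeddings of the same graph $G^+$ (both are isotopic to the same combinatorial triangulation $T$), so we obtain a morph $\Gam_0^+ \leadsto \Gam_1^+$ consisting of $O(n)$ parallel linear morphs. Restricting each intermediate geodesic drawing to the sub-edges of $G$ — i.e., forgetting the diagonals $D$ at every instant — gives a continuous family of geodesic drawings of $G$; since each intermediate drawing of $G^+$ is an embedding (no crossings), so is each restricted drawing of $G$, and the restriction of a parallel linear morph of $G^+$ is a parallel linear morph of $G$. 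Thus we get a morph $\Gam_0^* \leadsto \Gam_1^*$ of $G$ with $O(n)$ parallel linear steps. It remains to connect $\Gam_i$ to $\Gam_i^*$ for $i=0,1$: both are geodesic embeddings of $G$ in equilibrium... no — $\Gam_i$ is the \emph{arbitrary} input, while $\Gam_i^*$ is the equilibrium embedding isotopic to it. But $\Gam_i$ and $\Gam_i^*$ are both isotopic to $\Gam_i$, hence isotopic to each other. To morph between them, triangulate $\Gam_i$ itself using the same diagonals $D$ (legal since $\Gam_i$'s faces are disks once we note essentially $3$-connected geodesic embeddings have disk faces), obtaining a geodesic triangulation of $G^+$ isotopic to $\Gam_i^+$, and apply Theorem~\ref{thm:depseudo} once more, then restrict to $G$. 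Concatenating the three pieces ($\Gam_0\leadsto\Gam_0^*$, $\Gam_0^*\leadsto\Gam_1^*$, $\Gam_1^*\leadsto\Gam_1$) gives the desired morph.

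For the running time, each invocation of Theorem~\ref{thm:depseudo} costs $O(n^{1+\omega/2})$, and we invoke it $O(1)$ times; computing the equilibrium embeddings $\Gam_i^+$ is a single solve of \eqref{Eq:Tutte} costing $O(n^{\omega/2})$; choosing the diagonals $D$ and performing the restrictions is linear. Since $G^+$ has $O(n)$ edges (each face contributes a bounded number of diagonals, and a toroidal embedding has $O(n)$ faces by Euler's formula), all complexity bounds are as claimed, and the total number of parallel linear morphs is $O(n)$.

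\textbf{Main obstacle.} The delicate point is justifying that deleting the diagonals commutes with the morph while preserving embedded-ness at every instant, and in particular verifying that an essentially $3$-connected geodesic toroidal embedding really can be extended to a \emph{geodesic} triangulation on the same vertex set with the same rotation system — equivalently, that every face is a disk bounded by a simple (in the universal cover) geodesic polygon, so that a triangulating diagonal set $D$ exists and can be drawn with geodesics inside each face. This is where essential $3$-connectivity is used, and where one must be slightly careful about faces incident to loops or parallel edges; the rest of the argument is bookkeeping built on Theorems~\ref{Th:tutte-torus} and~\ref{thm:depseudo}.
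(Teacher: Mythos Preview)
Your reduction has a genuine gap, and it is precisely the difficulty the paper flags at the start of Section~\ref{sec:allgraphs}: a \emph{combinatorially} chosen diagonal set $D$ need not be realizable by geodesics inside the faces of a given geodesic embedding.  Your ``Main obstacle'' paragraph conflates two different claims: (i) every face of an essentially $3$-connected geodesic toroidal embedding is a disk and hence admits \emph{some} geodesic triangulation, and (ii) the \emph{specific} diagonals $D$ you fixed in advance can be drawn as geodesics inside the faces of $\Gam_0$ and of $\Gam_1$.  Claim (i) is true; claim (ii) is false in general, because the faces of $\Gam_0$ and $\Gam_1$ may be non-convex and differently shaped.  Your step ``triangulate $\Gam_i$ itself using the same diagonals $D$'' therefore fails.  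Choosing $D$ to work geometrically in $\Gam_0$ does not help either: then $D$ may fail in $\Gam_1$, and the restriction $\Gam^{+}|_G$ of the equilibrium triangulation of $G^+$ has no reason to have convex faces, so you cannot retriangulate it with a different diagonal set coming from $\Gam_1$.  The argument never closes.  (Incidentally, since $\Gam_0$ and $\Gam_1$ are isotopic and you use identical weights, your $\Gam_0^+$ and $\Gam_1^+$ coincide up to translation, so the middle piece of your concatenation is vacuous---but that is a side issue.)

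The paper's fix is short and worth internalizing: take $\Gam_*$ to be the equilibrium embedding of the \emph{original} graph $G$ (not of $G^+$).  By Theorem~\ref{Th:tutte-torus}, every face of $\Gam_*$ is convex, so \emph{any} geodesic triangulation of $\Gam_0$---obtained by triangulating each face of $\Gam_0$ however its geometry allows---can be replicated verbatim in $\Gam_*$.  One application of Theorem~\ref{thm:depseudo} then morphs $\Gam_0$ to $\Gam_*$; a second, with a possibly different triangulation tailored to $\Gam_1$, morphs $\Gam_1$ to $\Gam_*$.  Convexity of the intermediate target is what lets the two sides use different diagonal sets while still meeting in the middle.
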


\begin{proof}
Let $\Gam_*$ be an equilibrium embedding isotopic to $\Gam_0$ and $\Gam_1$ as given by Theorem~\ref{Th:tutte-torus}.
It suffices to describe how to morph from $\Gam_0$ to $\Gam_*$; to morph from $\Gam_0$ to $\Gam_1$ one can simply first morph from $\Gam_0$ to $\Gam_*$ and then from $\Gam_*$ to $\Gam_1$.

\emph{Arbitrarily} triangulate the faces of~$\Gam_0$; this can be done in $O(n)$ time using Chazelle's algorithm \cite{c-tsplt-91}, or in $O(n\log n)$ time in practice.  Because each face of $\Gam_*$ is convex, we can triangulate $\Gam_*$ in the exact same manner. The result is two isotopic geodesic toroidal triangulations~$T_0$ and $T_*$. Given a morph between $T_0$ and $T_*$ as promised by Theorem~\ref{thm:depseudo}, we obtain a morph between $\Gam_0$ and $\Gam_*$ by simply ignoring the edges added when triangulating.  In particular, the morph is specified by a sequence of geodesic triangulations $T_0, T_1, \ldots, T_k=T_*$, and dropping the additional edges from each triangulation~$T_i$ results in a geodesic embedding $\Gam_i$ isotopic to $\Gam_0$.

The number of parallel morphing steps remains $O(n)$, and the running time is dominated by the computation of the morph between $T_0$ and $T_*$, which is $O(n^{1+\omega/2})$ by Theorem~\ref{thm:depseudo}.
\end{proof}

Finally, Theorem \ref{Th:main} immediately yields the first proof of the following corollary:

\begin{corollary}
Two essentially 3-connected geodesic embeddings on the flat torus are isotopic if and only if they are isotopic through geodesic embeddings.
\end{corollary}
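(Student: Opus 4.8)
The plan is to handle the two implications separately; the forward direction is essentially a restatement of Theorem~\ref{Th:main}, and the reverse direction is immediate from the definitions.

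For the reverse implication, suppose $\Gam_0$ and $\Gam_1$ are isotopic through geodesic embeddings, that is, there is a continuous family $(\Gam_t)_{t\in[0,1]}$ of geodesic embeddings from $\Gam_0$ to $\Gam_1$. Since every geodesic embedding is in particular an embedding, this family is an isotopy in the sense defined in Section~\ref{sec:background}, so $\Gam_0$ and $\Gam_1$ are isotopic. No further argument is needed here.

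For the forward implication, suppose $\Gam_0$ and $\Gam_1$ are isotopic essentially 3-connected geodesic toroidal embeddings. I would apply Theorem~\ref{Th:main} to obtain a morph from $\Gam_0$ to $\Gam_1$. By the definition of a morph, this is a continuous family of geodesic drawings $(\Gam_t)_{t\in[0,1]}$ in which $\Gam_t$ is a geodesic embedding for every $t$ --- that is, precisely a geodesic isotopy between $\Gam_0$ and $\Gam_1$. Hence the two embeddings are isotopic through geodesic embeddings, as claimed.

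The only bookkeeping worth noting is that Theorem~\ref{Th:main} delivers the morph as a concatenation of $O(n)$ parallel linear morphing steps through geodesic embeddings $\Gam_0, \Gam_1, \ldots, \Gam_k = \Gam_1$; since each step is itself a continuous family of geodesic embeddings and consecutive steps agree at their shared endpoint, the concatenation is again a continuous family of geodesic embeddings, so this presentation already constitutes the required geodesic isotopy. I do not anticipate any genuine obstacle in this corollary: all of the substance --- existence of a directly collapsible edge (Theorem~\ref{thm:nobadtriangulations}), the spring-collapse machinery of Section~\ref{sec:eqpseudo}, the pseudomorph-to-morph perturbation of Section~\ref{sec:depseudo}, and the reduction from embeddings to triangulations --- is already packaged inside Theorem~\ref{Th:main}, and the corollary is simply the observation that the morph it produces is, by definition, a geodesic isotopy.
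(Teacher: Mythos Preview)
Your proposal is correct and matches the paper's approach exactly: the paper states that the corollary follows immediately from Theorem~\ref{Th:main}, and your argument spells out precisely this immediate deduction (the forward direction uses the morph from Theorem~\ref{Th:main} as the required geodesic isotopy, and the reverse direction is trivial).
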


\section{Conclusions and Open Questions}
\label{sec:conclusions}

In this paper, we have given the first algorithm to construct a morph between two isotopic geodesic graphs on the flat torus.  Key tools in our algorithm include a geometric analysis of 6-regular triangulations on the torus, as well as repeated use of a generalization of Tutte's spring embedding theorem by Y.~Colin de Verdière~\cite{c-crgtd-91} (Theorem \ref{Th:tutte-torus}).  Several of our applications of spring embeddings also apply to planar morphs and give a new approach to compute linear complexity morphs.

Because it relies heavily on Theorem \ref{Th:tutte-torus}, our algorithm requires that the input embeddings are essentially 3-connected.  If a given toroidal embedding $\Gamma$ is \emph{not} essentially 3-connected, an isotopic equilibrium drawing $\Gamma_*$ still exists, but it may not be an embedding; nontrivial  subgraphs can collapse to geodesics or even points in $\Gamma_*$.  Morphing less connected toroidal embeddings remains an open problem.

It is natural to ask whether any of our results can be extended to higher-genus surfaces.  Y.~Colin de Verdière actually generalized Tutte’s theorem to graphs on arbitrary Riemannian 2-manifolds without positive curvature~\cite{c-crgtd-91}.  However, the resulting equilibrium embedding is the solution a certain convex optimization problem that, in general, cannot be formulated as solving a linear system.  At a more basic level, our analysis of cats and dogs in Section~\ref{sec:nobadtriangulations} relies on the average vertex degree being exactly $6$, a property that holds \emph{only} for graphs on the torus or the Klein bottle.  Even the existence of “parallel” morphs requires a locally Euclidean metric.

Our results share two closely related limitations with existing planar morphing algorithms.  First, planar morphs involving either Cairns-style edge collapses or spring embeddings require high numerical precision to represent exactly.  Second, while edge collapses yield planar morphs with low combinatorial complexity, the resulting morphs are not good for practical visualization applications \cite{lp-mpgdb-11}.  Because our algorithm uses both edge collapses and spring embeddings, it suffers from the same numerical precision issues and (we expect) the same practical limitations.  Floater, Gotsman, and Surazhsky’s barycentric interpolation technique \cite{fg-mti-99,sg-msfuo-01,gs-gipm-01,sg-cmcpt-01} yields better results in practice for planar morphs, but as we discussed in Section~\ref{ssec:prior}, their technique does not immediately generalize to the torus.

Finally, there are several variants and special cases of planar morphing for which generalization to the flat torus would be interesting, including morphing with bent edges \cite{lp-mpgdb-11}, morphing orthogonal embeddings \cite{bls-mpgwp-05,blps-mopgd-13,gv-ompod-18,gsv-ompod-19}, and morphing weighted Schnyder embeddings \cite{bhl-msdpt-19}.

\paragraph*{Acknowledgements.} We thank Timothy Chan for inspiring this work by noting to the third author that extensions of Alamdari et al.~\cite{aabcd-hmpgd-17} to the torus were unknown.

\bibliographystyle{newuser-doi} 
\bibliography{morph}

\appendix
\section{Relaxed Coordinate Representations}
\label{A:coordinates}

The coordinate representation described in Section \ref{SS:canon}, while intuitive, is much more constrained than necessary.  Here we describe a more relaxed representation that allows morphs to be described entirely in terms of changing vertex coordinates, while still encoding the non-trivial topology of the embedding.  Similar representations have been traditionally used to model periodic (or “dynamic”) graphs \cite{kmw-ocure-67, c-spg-78, ks-dcdgp-88, cm-spadc-93, o-spdg-84, rk-riati-88, w-pdmia-67, is-tcigw-87, is-scpzs-90, i-tdgtv-87}, and more recently to model periodic bar-and-joint frameworks \cite{bs-pff-10,bs-lsppf-15}.

We also observe that this representation leads naturally to a linear-time algorithm to test whether two given toroidal embeddings are isotopic.  Our algorithm is arguably simpler than the more general linear-time isotopy algorithm of É.~Colin de Verdière and de Mesmay \cite{cm-tgis-14}.  We emphasize that both of these isotopy algorithms are non-constructive; they do not construct an isotopy if one exists.  Rather, both algorithms test topological conditions that characterize isotopy \cite{l-ctp1c-74a,l-ctp1c-74b,l-cdp1c-84}.

\subsection{Translation Vectors and Equivalence}

To represent a geodesic embedding $\Gamma$ of a graph $G$ on the flat torus $\Torus$, we associate a \EMPH{coordinate vector} $p(v) \in \Real^2$ with every vertex $v$ of $G$ and a \EMPH{translation vector} $\tau(d) \in \Z^2$ with every dart $d$ of $G$.   We do \emph{not} require vertex coordinates to lie in the unit square; instead, each coordinate vector $p(v)$ records the coordinates of an arbitrary lift $\widetilde{v}$ of $v$ to the universal cover $\widetilde{\Gamma}$.  The translation vector of each dart encode which lifts of its endpoints are connected in $\widetilde{\Gamma}$.  Specifically, for each dart $d$ in $G$, the universal cover $\widetilde{\Gamma}$ contains an edge between $p(\Tail(d))$ and $p(\Head(d)) + \tau(d)$, and therefore also contains and edge  between $p(\Tail(d)) + (i,j)$ and $p(\Head(d)) + \tau(d) + (i,j)$ for all integers $i$ and $j$.  Translation vectors are antisymmetric: $\tau(d) = -\tau(\Rev(d))$.

Coordinate representations are not unique; in fact, each toroidal embedding has an infinite family of equivalent representations.  Two coordinate representations $(p, \tau)$ and $(p', \tau')$ with the same underlying graph are \emph{equivalent}, meaning they represent the same geodesic embedding (up to translation), if and only if
\[
	\Delta(d)
	~:=~ p(\Head(d)) + \tau(d) - p(\Tail(d))
	~=~ p'(\Head(d)) + \tau'(d) - p'(\Tail(d))
\]
for every dart $d$.  The vector $\Delta(d)$, which we call the \EMPH{displacement vector} of $d$, is the difference between the head and tail of any lift of $d$ to $\widetilde{\Gamma}$.

Let $(p, \tau)$ be any coordinate representation of $\Gamma$.  Given \emph{arbitrary} integer vector $\pi(v)\in \Z^2$ for each vertex of $G$, we can define a new coordinate representation $(p^\pi, \tau^\pi)$ as follows:
\begin{align*}
	p^\pi(v) &= p(v) + \pi(v) & \text{for every vertex $v$} \\
	\tau^\pi(d) &= \tau(d) + \pi(\Tail(d)) - \pi(\Head(d)) & \text{for every dart $d$}
\end{align*}
Easy calculation implies that the representations $(p, \tau)$ and $(p^\pi,\tau^\pi)$ are equivalent.  This transformation is a multidimensional generalization of the \emph{reweighting} or \emph{repricing} strategy proposed by Tomizawa \cite{t-stust-71} and Edmonds and Karp \cite{ek-tiaen-72} for minimum-cost flows, and later used by Johnson to compute shortest paths \cite{j-easps-77}.

Every geodesic toroidal embedding has a unique \EMPH{canonical} coordinate representation $(p, \tau)$, where $p(v) \in [0,1)^2$ for every vertex $v$.  In this canonical coordinate representation, each translation vector $\tau(d)$ encodes how  dart $d$ crosses the boundaries of the fundamental square; in other words, canonical translation vectors are crossing vectors, exactly as described in Section \ref{SS:canon}. 

\subsection{Normalization and Isotopy Testing}

Let $\Gamma_0$ and $\Gamma_1$ be two isotopic geodesic toroidal embeddings of the same graph $G$, given by coordinate representations $(p_0, \tau_0)$ and $(p_1, \tau_1)$ respectively.  To simplify the presentation of our morphing algorithm, we implicitly assume that the translation vectors in both representations are identical: $\tau_0(d) = \tau_1(d)$ for every dart $d$.  This assumption allows us to describe, reason about, and ultimately compute a morph from~$\Gamma_0$ to $\Gamma_1$ entirely in terms of changes to the vertex coordinates; all translation vectors remain fixed throughout the morph.

If necessary, we can enforce this assumption in $O(n)$ time using the following \emph{normalization} algorithm.  Let $(p_0, \tau_0)$ and $(p_1, \tau_1)$ be the given coordinate representations of $\Gamma_0$ and $\Gamma_1$, respectively.  First, construct an arbitrary spanning tree $T$ of the underlying graph $G$, directed away from an arbitrary root vertex $r$.  For every vertex $v$, let $P(v)$ denote the unique directed path in $T$ from $r$ to $v$.  For each vertex $v$, let
\[
	\pi(v) = \sum_{d\in P(v)} (\tau_1(d) - \tau_0(d)).
\]
We can compute the vectors $\pi(v)$ for all vertices in $O(n)$ time by preorder traversal of $T$.  Finally, we replace the target representation $(p_1, \tau_1)$ with the equivalent representation $(p_1^\pi, \tau_1^\pi)$.

\begin{lemma}
For all darts $d$ in $T$, we have $\tau_1^\pi(d) = \tau_0(d)$.
\end{lemma}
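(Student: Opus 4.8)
The plan is to unwind the definitions and observe that the sum defining $\pi$ telescopes along paths in the tree $T$. First I would fix notation: let $d$ be a dart of $T$, and since $T$ is directed away from the root $r$, write $u = \Tail(d)$ for its parent endpoint and $v = \Head(d)$ for its child endpoint. Then the unique directed path $P(v)$ from $r$ to $v$ in $T$ is exactly the path $P(u)$ followed by the single dart $d$; in particular the multiset of darts appearing in $P(v)$ is that of $P(u)$ together with $d$.

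Next I would feed this concatenation into the definition $\pi(w) = \sum_{d'\in P(w)}(\tau_1(d') - \tau_0(d'))$ to obtain the key identity
\[
	\pi(v) = \pi(u) + \bigl(\tau_1(d) - \tau_0(d)\bigr),
\]
and hence $\pi(\Tail(d)) - \pi(\Head(d)) = \pi(u) - \pi(v) = \tau_0(d) - \tau_1(d)$. Substituting into the definition of the reweighted translation vector gives
\[
	\tau_1^\pi(d) = \tau_1(d) + \pi(\Tail(d)) - \pi(\Head(d)) = \tau_1(d) + \tau_0(d) - \tau_1(d) = \tau_0(d),
\]
which is the claim. If one also wants the statement for the reverse dart $\Rev(d)$, it follows immediately from antisymmetry of $\tau$ and of $\tau^\pi$, so it suffices to treat the away-from-root orientation.

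There is essentially no obstacle here: the only thing requiring care is the bookkeeping of dart orientations, namely making sure that ``dart of $T$'' is read so that $P(\Head(d))$ is $P(\Tail(d))$ extended by $d$, and that the antisymmetry conventions $\tau(\Rev(d)) = -\tau(d)$ are applied consistently. I would also add a one-line remark that this is precisely the telescoping behind Johnson-style reweighting, which explains why the definition of $\pi$ was chosen this way and why the lemma can be stated so tersely.
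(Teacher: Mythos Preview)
Your proof is correct and follows essentially the same approach as the paper: you exploit that $P(v)$ is $P(u)$ extended by $d$ so that the defining sum for $\pi$ telescopes, substitute into $\tau_1^\pi(d) = \tau_1(d) + \pi(\Tail(d)) - \pi(\Head(d))$, and handle the reverse dart by antisymmetry. The paper's proof is line-for-line the same computation.
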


\begin{proof}
Let $d$ be any dart in $T$ directed from some vertex $u$ to one of its children $v$ in $T$.  Straightforward calculation implies
\begin{align*}
	\tau_1^\pi(d) &= \tau_1(d) + \pi(u) - \pi(v)
	\\
	& = \tau_1(d) + \sum_{d'\in P(u)} (\tau_1(d') - \tau_0(d'))
		- \sum_{d'\in P(v)} (\tau_1(d') - \tau_0(d'))
	\\
	& = \tau_1(d) - (\tau_1(d) - \tau_0(d)) \\
	& = \tau_0(d).
\end{align*}
A similar calculation (or antisymmetry) implies that $\tau_1^\pi(d) = \tau_0(d)$ for every dart $d$ directed from a vertex to its parent in $T$.
\end{proof}

\begin{theorem}
\label{Th:isotopy}
$\Gamma_0$ and $\Gamma_1$ are isotopic if and only if $\tau_1^\pi(d) = \tau_0(d)$ for every dart $d$.
\end{theorem}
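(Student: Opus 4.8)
The plan is to reduce the isotopy question to the combinatorial criterion furnished by Ladegaillerie's theorem~\cite{l-ctp1c-74a,l-ctp1c-74b,l-cdp1c-84}, which the paper has already invoked: two embeddings of the same graph $G$ on the same surface are isotopic if and only if the images of every cycle of $G$ are homotopic in the two embeddings. On the torus, homotopy classes of closed curves are classified by their element of $H_1(\Torus;\Z)\cong\Z^2$, and for a geodesic embedding this homology class is computed by summing the displacement vectors $\Delta(d)$ (equivalently, summing the translation/crossing vectors $\tau(d)$ since the coordinate parts telescope) along the darts of the cycle. So the real content is: $\Gamma_0$ and $\Gamma_1$ are isotopic iff for every cycle $\gamma$ of $G$, $\sum_{d\in\gamma}\tau_0(d)=\sum_{d\in\gamma}\tau_1^\pi(d)$ — and I want to show this cycle-space condition is equivalent to the pointwise condition $\tau_1^\pi(d)=\tau_0(d)$ for all darts.

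First I would establish the ``only if'' direction, which is the easy one. If $\Gamma_0$ and $\Gamma_1$ are isotopic, then the (relaxed) coordinate representations agree up to the reweighting action; concretely, after the normalization step the tree darts already satisfy $\tau_1^\pi(d)=\tau_0(d)$ by the preceding Lemma, so it remains to handle the non-tree darts. For a non-tree dart $d$ from $u$ to $v$, the fundamental cycle $C_d$ consisting of $d$ together with the tree path from $v$ back to $u$ is a cycle of $G$; isotopy (via Ladegaillerie) forces the homology class of $C_d$ to be the same in both embeddings, i.e. $\sum_{d'\in C_d}\tau_0(d')=\sum_{d'\in C_d}\tau_1^\pi(d')$. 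Since every tree dart in $C_d$ contributes equally to both sums (by the Lemma), the two contributions of $d$ itself must also be equal, giving $\tau_1^\pi(d)=\tau_0(d)$. This covers all darts.

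For the ``if'' direction, suppose $\tau_1^\pi(d)=\tau_0(d)$ for every dart $d$. Then for every closed walk, hence every cycle $\gamma$ of $G$, the homology classes agree: $\sum_{d\in\gamma}\tau_0(d)=\sum_{d\in\gamma}\tau_1^\pi(d)$, so the image of $\gamma$ under $\Gamma_0$ is homotopic to its image under $\Gamma_1$ (recall, on the torus homotopy of simple closed curves is detected by homology, and more generally the homotopy class of any closed curve on $\Torus$ is its free-homotopy class which for the torus coincides with its homology class since $\pi_1(\Torus)$ is abelian). Ladegaillerie's theorem then gives that $\Gamma_0$ and $\Gamma_1$ are isotopic. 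I should also note that replacing $(p_1,\tau_1)$ by the equivalent representation $(p_1^\pi,\tau_1^\pi)$ does not change the embedding $\Gamma_1$ (only its coordinate bookkeeping), so the statement about $\tau_1^\pi$ is genuinely a statement about the original pair.

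The main obstacle I anticipate is making the homology/homotopy bridge fully rigorous for \emph{all} cycles rather than a generating set, and being careful that Ladegaillerie's criterion is stated for cycles of the abstract graph $G$ whose images we compare — one must check that the summed-translation-vector quantity really is the homology class of the image curve in the universal-cover picture, which is where the displacement-vector identity $\Delta(d)=p(\Head(d))+\tau(d)-p(\Tail(d))$ and the telescoping of the $p$-terms around a closed walk do the work. Once that identification is in hand, everything else is the linear-algebra bookkeeping of fundamental cycles versus a spanning tree, which is routine.
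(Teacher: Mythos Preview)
Your proposal is correct and follows essentially the same approach as the paper: both invoke Ladegaillerie's theorem, identify homotopy classes of cycles on $\Torus$ with their $\Z^2$ homology classes computed as sums of translation vectors (using telescoping of the vertex-coordinate terms), and then use the spanning tree $T$ together with the preceding Lemma on tree darts to reduce the cycle-space condition to the pointwise condition on non-tree darts via fundamental cycles. The only cosmetic difference is that you split the argument into explicit ``if'' and ``only if'' directions, whereas the paper packages the same computation as a single biconditional chain.
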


\begin{proof}
We exploit a theorem of Ladegaillerie \cite{l-ctp1c-74a,l-ctp1c-74b,l-cdp1c-84}, which states that two embeddings are isotopic if and only if every cycle in one embedding is homotopic to the corresponding cycle in the other embedding.  

For any dart $d$ in the underlying graph $G$, let $\Delta_0(d)$ and $\Delta_1(d)$ denote the displacement vectors of $d$ in $\Gamma_0$ and $\Gamma_1$, respectively.  (We emphasize that displacement vectors are independent of the coordinate representation.)   For any directed cycle $C$ in $G$, let $\Delta_0(C)$ and $\Delta_1(C)$ denote the sum of the displacement vectors of its darts:
\begin{align*}
	\Delta_0(C) &~:=~ \sum_{d\in C} \Delta_0(d)
	&
	\Delta_1(C) &~:=~ \sum_{d\in C} \Delta_1(d)
\end{align*}
The vector $\Delta_0(C)$ is the \emph{integer homology class} of $C$ in $\Gamma_0$.  Two cycles on the torus are homotopic if and only if they have the same integer homology class; in particular, the image of $C$ in $\Gamma_0$ is contractible if and only if $\Delta_0(C) = (0,0)$.  Ladegaillerie's theorem implies that $\Gamma_0$ and $\Gamma_1$ are isotopic if and only if $\Delta_0(C) = \Delta_1(C)$ for every cycle $C$.

The spanning tree $T$ defines a set of \emph{fundamental cycles} that span the cycle space of $G$.  Specifically, for each dart $d$ that is not in $T$, the fundamental directed cycle $C_T(d)$ consists of $d$ and the unique directed path in $T$ from $\Head(d)$ to $\Tail(d)$.  Every directed cycle in $G$ (indeed every \emph{circulation} in $G$) can be expressed as a linear combination of fundamental cycles.  It follows by linearity that $\Gamma_0$ and $\Gamma_1$ are isotopic if and only if every \emph{fundamental} cycle has the same integer homology class in both embeddings; that is, $\Delta_0(C_T(d)) = \Delta_1(C_T(d))$ for every dart $d\in G\setminus T$.

Straightforward calculation implies that the homology class of any cycle is also equal to the sum of the \emph{translation} vectors of its darts with respect to \emph{any} coordinate representation:
\begin{align*}
	\Delta_0(C) &~=~ \sum_{d\in C} \tau_0(d)
	&
	\Delta_1(C) &~=~ \sum_{d\in C} \tau_1(d) ~=~ \sum_{d\in C} \tau_1^\pi(d).
\end{align*}
In particular, for any non-tree dart $d\not\in T$, we immediately have
\begin{align*}
	\Delta_0(C_T(d)) - \Delta_1(C_T(d))
		~=~ \sum_{d'\in C_T(d)} \left(\tau_0(d') - \tau_1^\pi(d')\right)
		~=~ \tau_0(d) - \tau_1^\pi(d)
\end{align*}
Thus, $\tau_1^\pi(d) = \tau_0(d)$ for every dart $d$ if and only if $\Delta_0(C) = \Delta_1(C)$ for every fundamental cycle $C$, which completes the proof of the theorem.
\end{proof}

Theorem \ref{Th:isotopy} and our normalization algorithm immediately imply an $O(n)$-time algorithm to test whether two given coordinate representations $(p_0, \tau_0)$ and $(p_1, \tau_1)$ represent isotopic toroidal embeddings of the same graph~$G$.  Our algorithm is arguably simpler than the isotopy algorithm of É.~Colin de Verdière and de Mesmay \cite{cm-tgis-14}, which is also based on Ladegaillerie's theorem \cite{l-ctp1c-74a,l-ctp1c-74b,l-cdp1c-84}.  On the other hand, our isotopy algorithm is specific to geodesic embeddings on the flat torus; whereas, É.~Colin de Verdière and de Mesmay's algorithm works for arbitrary combinatorial embeddings of graphs on arbitrary 2-manifolds.

\end{document}